\newcommand{\nop}[1]{}
\newtheorem{example}{Example}
\newtheorem{theorem}{Theorem}
\newtheorem{corollary}{Corollary}
\newtheorem{proposition}{Proposition}
\newtheorem{remark}{Remark}
\newtheorem{assumption}{Assumption}
\begin{document}

\title{A Hidden Resource in Wireless Channel Capacity:\\ Dependence Control in Action}

\author{Fengyou Sun and Yuming Jiang}
\affiliation{%
  \institution{Department of Information Security and Communication Technology\\ 
NTNU -- Norwegian University of Science and Technology}
  \city{Trondheim}
  \country{Norway}
}
\email{sunfengyou@gmail.com; ymjiang@ieee.org}

\begin{abstract}
This paper aims to initiate the research on dependence control, which transforms the dependence structure of a stochastic process in the system through dependence manipulation, to improve the system performance. Specifically, we develop a dependence control theory for wireless channels, focusing on three principles in dependence control: (i) the asymptotic decay rates of delay and backlog in the system are the measures for dependence comparison and ordering, (ii) the dependence in the arrival process and the service process have a dual potency to influence the system performance, and (iii) the manipulation of the dependence in the free dimensions of the arrival or service process transforms the dependence structure of the arrival or service process.
% Application.
In addition, we apply the theory to the Markov additive process, which is a general model for a class of arrival processes and a versatile model for wireless channel capacity, and derive a set of results for various performance measures, including delay, backlog, and delay-constrained capacity. 
To demonstrate the use of the theory, we focus on dependence manipulation in wireless channel capacity, where we use copula to represent the dependence structure of the underlying Markov process of wireless channel capacity. We show that, based on a priori information of the temporal dependence of the uncontrollable parameters and the spatial dependence between the uncontrollable and controllable parameters, we can construct a sequence of temporal copulas of the Markov process and obtain a sequence of transition matrices of the controllable parameters to achieve the demanded dependence properties of the wireless channel capacity.
This dependence manipulation technique is validated by simulation.
\end{abstract}

%
% The code below should be generated by the tool at
% http://dl.acm.org/ccs.cfm
% Please copy and paste the code instead of the example below. 
%
%%%%%%%%%%%%%%%%%%%%%%%%%%%%%%%%%%%%%%%%%%%
%\begin{CCSXML}
%<ccs2012>
%<concept>
%<concept_id>10002950.10003648.10003700</concept_id>
%<concept_desc>Mathematics of computing~Stochastic processes</concept_desc>
%<concept_significance>500</concept_significance>
%</concept>
%<concept>
%<concept_id>10002950.10003648.10003700.10003701</concept_id>
%<concept_desc>Mathematics of computing~Markov processes</concept_desc>
%<concept_significance>500</concept_significance>
%</concept>
%<concept>
%<concept_id>10002950.10003648.10003688.10003689</concept_id>
%<concept_desc>Mathematics of computing~Queueing theory</concept_desc>
%<concept_significance>300</concept_significance>
%</concept>
%<concept>
%<concept_id>10002950.10003712</concept_id>
%<concept_desc>Mathematics of computing~Information theory</concept_desc>
%<concept_significance>300</concept_significance>
%</concept>
%<concept>
%<concept_id>10003033.10003079.10011672</concept_id>
%<concept_desc>Networks~Network performance analysis</concept_desc>
%<concept_significance>300</concept_significance>
%</concept>
%</ccs2012>
%\end{CCSXML}
%
%\ccsdesc[500]{Mathematics of computing~Stochastic processes}
%\ccsdesc[500]{Mathematics of computing~Markov processes}
%\ccsdesc[300]{Mathematics of computing~Queueing theory}
%\ccsdesc[300]{Mathematics of computing~Information theory}
%\ccsdesc[300]{Networks~Network performance analysis}
%%%%%%%%%%%%%%%%%%%%%%%%%%%%%%%%%%%%%%%%%%%

\keywords{Wireless channel capacity; Dependence model; Dependence control; Markov process; Copula.}

\settopmatter{printacmref=false}
\renewcommand\footnotetextcopyrightpermission[1]{}
\pagestyle{plain}
\settopmatter{printfolios=true}
\maketitle

\section{Introduction}
\label{introduction}

Wireless communication has been around for over a hundred years, starting with Marconi's successful demonstration of wireless telegraphy in 1896 and transmission of the first wireless signals across the Atlantic in 1901 \cite{niehenke2014wireless}.
For cellular systems, the first generation is deployed in around 1980s \cite{niehenke2014wireless}, i.e., 1G, then 2G in 1990s, 3G in 2000s, 4G in 2010s, and the coming 5G in 2020s \cite{andrews2014will}.
It has become a trend that a new generation of wireless systems is deployed every new decade and the theme of each generation is to increase the capacity and spectral efficiency of wireless channels. 
The trend is driven by the explosion of wireless traffic that is a rough reflection of people's demand on wireless communication, and the paradox of supply and demand \cite{hecht2016bandwidth} is kept relieving generation by generation through exploiting the physical resources, i.e., power, diversity, and degree of freedom \cite{tse2005fundamentals}.
Considering trillions of devices to be connected to the wireless network, high capacity demand, and stringent latency requirement in the coming 5G \cite{andrews2014will}, it is imperative to rethink the wireless channel resources.
In a recent paper \cite{sun2017statistical}, the stochastic dependence in wireless channel capacity is identified as a hidden resource in achieving delay guarantee. Specifically, if the wireless channel capacity bears negative dependence, the wireless channel can even attain a better delay performance under a smaller capacity  \cite{sun2017statistical}. 

This paper aims to initiate the research on dependence control, which is still blank in the literature of both stochastic models and wireless communication. It complements the dependence modeling field \cite{denuit2006actuarial,ruschendorf2013mathematical,mcneil2015quantitative}, and provides a potential for the development of new wireless technologies.
Specifically in this paper, we develop a dependence control theory for wireless channels, 
and the results are focused on the following three principles in dependence control.
\begin{enumerate}
\item\label{contribution-first}
The wireless channel performance is reflected by the asymptotic decay rate of the tail of delay or backlog, which is used in identifying, comparing, and ordering the stochastic dependence. 

We provide exact expressions of the asymptotic decay rates of delay and backlog, based on some general assumptions of the arrival process and service process, which are capable of characterizing weak forms of dependence and light-tailed process.

\item\label{contribution-second}
The dependence in the arrival process and the service process have a dual potency for manipulating the queueing process of the wireless channel and the asymptotic decay identity. 

We prove that the dependence ordering of the arrival process or the service process results in correspondingly the ordering of the queue increment process or the negative queue increment process, based on the assumptions, which further results in the ordering of the asymptotic decay identities. This manifestation is optimized for constant arrival process or service process with the same mean as their random peers.

\item\label{contribution-third}
The manipulation of the free dimensions of the arrival process or the service process is able to transform the dependence structure of the arrival process or the service process.

We provide a functional perspective on the wireless channel capacity, i.e., the capacity process is treated as a functional of a multivariate stochastic process, composing of controllable random parameters and uncontrollable random parameters. Particularly, we prove that the manipulation of the dependence in the controllable random parameters results in the dependence transformation in the capacity process. 
For the arrival process, we prove that the dependence manipulation of the individual process results in the dependence transformation of the aggregated process in both deterministic multiplexing and random multiplexing.
Particularly, the dependence manipulation and transformation are independent of the assumptions on the dependence forms and tail behaviors.
\end{enumerate}

As an application of the dependence control theory and as a justification to the assumptions in this paper, we apply the theory to the Markov additive process, which is capable of characterizing a large class of arrival processes and is versatile in capturing the dependence in the wireless channel capacity. In this paper, the focus of the dependence manipulation is on the wireless channel capacity.
The results in this application are summarized as follows.

\begin{enumerate}
\setcounter{enumi}{3}
\item\label{application-first}
For the performance measures of the wireless channel, we provide the non-asymptotic and time-dependent performance results of delay and backlog, and an upper bound of the delay-constrained capacity, for Markov additive arrival process and capacity process.

The delay and backlog results in the Markov additive model extend the analysis for constant arrivals in \cite{sun2017statistical} to general arrivals, and extend the scenario of stationary processes in \cite{poloczek2015service} to non-stationary processes, and complementary double-sided bounds are provided.
In addition to the probability on infinite time horizon, time-dependent results are also given on finite time horizon.
The delay-constrained capacity is an extension of the result for constant arrivals in \cite{sun2017statistical}.

\item\label{application-second}
For the dependence manipulation of the capacity, we treat the underlying Markov process as a multi-dimensional process of the controllable and uncontrollable random parameters, and use copula to represent the Markov property and to configure the transition matrix.
The copula manipulation technique is validated by simulation.

We model the random parameters in wireless channel capacity as a multivariate Markov process. The Markov family copula in \cite{darsow1992copulas,overbeck2015multivariate} are used not only as a mechanism for dependence modelling, 
but also as a tool for dependence controlling.
We apply the no-Granger causality to model the relationship between the controllable and uncontrollable parameters, and the sufficient and necessary condition for Markov process is extended from the bivariate case in \cite{cherubini2011copula} to the multivariate case in this paper.
Note that, the copula property of Markov process investigated in \cite{darsow1992copulas} is extended to high order case in \cite{ibragimov2009copula} and multivariate case in \cite{overbeck2015multivariate}. No-Granger causality is a concept in econometrics and its relation with Markov process is investigated in \cite{cherubini2011copula}.
\end{enumerate}

In all, the dependence control theory composing of three principles (\ref{contribution-first}) to (\ref{contribution-third}), and the application results (\ref{application-first}) to (\ref{application-second}), constitutes the main contributions of this paper. 
To build the dependence control theory, we adopt a few mathematical tools. Specifically, change of measure is used to find the asymptotic decay identities and to explain the dual potency of arrival and service dependence, and stochastic order is used to prove the efficiency of dependence manipulation and to compare different processes. 
In application, martingale is used in the performance analysis, and copula is used to represent the Markov property and the no-Granger causality, and as a dependence manipulation technique.

The remainder of this paper is structured as follows. 
In Sec. \ref{preliminaries}, we introduce some basic concepts of wireless channel capacity, the queueing behavior of the wireless channel, and the assumptions that are used in this paper.
In Sec. \ref{dependence-control}, we present the dependence control theory, including the measure for dependence comparison and ordering, the dual potency of the arrival and service for dependence control, and the transform of dependence structure, which are termed as the three principles of dependence control in this paper. Particularly, the first principle is based on the assumption, the second principle partially relies on the assumption, and the third principle is independent of the assumption.
In Sec. \ref{application}, we provide an application of the dependence control theory, where the Markov additive process is a concrete justification of the assumptions in this paper. 
Finally, we conclude the paper in Sec. \ref{conclusion}.

\section{Preliminaries}
\label{preliminaries}

In this section, we present some basic concepts of wireless channel capacity, the queueing behavior of the wireless system, and the assumption in this paper.

\subsection{Basic Concepts}

Consider a flat fading channel with input $x(t)$, output $y(t)$, fading process $h(t)$, and additive white Gaussian noise (AWGN) $n(t)\sim\mathcal{C}\mathcal{N}(0,N_0)$,
the complex baseband representation is expressed as \cite{goldsmith2005wireless,tse2005fundamentals}
\begin{equation}
y(t) = h(t)x(t) + n(t),
\end{equation}
conditional on a realization of $h(t)$, the mutual information 
is expressed as \cite{goldsmith2005wireless} 
\begin{IEEEeqnarray}{rCl}\label{eq-2}
I(X;Y|h(t)) = \sum\limits_{x\in\mathcal{X},y\in\mathcal{Y}} \mathbb{P}(x,y|h_t)\log_2\frac{\mathbb{P}(x,y|h_t)}{\mathbb{P}(x|h_t) \mathbb{P}(y|h_t)}, \IEEEeqnarraynumspace
\end{IEEEeqnarray}
where $\mathcal{X}$ and $\mathcal{Y}$ are respectively the input and output alphabets of the channel. 
For multiple input and multiple output channel, the generalized formula is available in \cite{telatar1999capacity,foschini1998limits}.

The maximum mutual information over input distribution at $t$, denoted as $C(t)$, is defined as { instantaneous capacity} \cite{costa2010multiple}:
\begin{equation}\label{eq-1}
C(t) = \max\limits_{ \mathbb{P}(x)}I(X;Y|h(t)),
\end{equation}
where $\mathbb{P}(x) \equiv \mathbb{P}(X=x)$, $x\in\mathcal{X}$.
The sum of instantaneous capacity in discrete time $(s,t]$,
denoted as $S(s,t)$, is defined as { cumulative capacity}:
\begin{eqnarray}
S(s,t) = \sum\limits_{i=s+1}^{t}{C(i)}
\end{eqnarray}
Denote $S(t)\equiv S(0,t)$.
The time average of the cumulative capacity through $(0,t]$ is defined as { transient capacity}:
\begin{equation}
\overline{C}(t) = \frac{S(t)}{t}. 
\end{equation}

\begin{example}
For a single input single output channel, 
if the channel side information is only known at the receiver, the instantaneous capacity is expressed as \cite{tse2005fundamentals}
\begin{equation}\label{def-ic}
C(t) = W\log_{2}\left(1+\Gamma|h(t)|^{2}\right),
\end{equation}
where $|h(t)|$ denotes the envelope of $h(t)$, $\Gamma = {P}/{N_{0}W}$ denotes the average received SNR per complex degree of freedom, $P$ denotes the average transmission power per complex symbol, $N_{0}/2$ denotes the power spectral density of AWGN, and $W$ denotes the channel bandwidth. 
\end{example}

The following theorem presents a fundamental property of wireless channel capacity, which is available in \cite{sun2017statistical} and restated here.

\begin{theorem}
For flat fading, the instantaneous capacity is expressed as the logarithm transform of the instantaneous channel gain, i.e., $C(t) = W\log_{2}(1+\Gamma{h(t)^2})$, $\forall t$. If the distribution of the fading process is not heavier than fat tail, the distribution of the instantaneous capacity is light-tailed. 
Specifically, if a wireless channel is Rayleigh, Rice, Nakagami-$m$, Weibull, or lognormal fading channel, its instantaneous capacity distribution is light-tailed. 
\end{theorem}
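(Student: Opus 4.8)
The plan is to split the statement into its two parts. The identity $C(t)=W\log_2(1+\Gamma h(t)^2)$ is, for flat fading with receiver-only channel knowledge, a restatement of the instantaneous-capacity formula~(\ref{def-ic}) of the Example, with $h(t)$ read as the fading envelope and $h(t)^2$ as the instantaneous channel gain; I would just record this and move on. The substance is the tail claim, whose engine is that the logarithm is an increasing and strongly sublinear transform, so that $\{C(t)>x\}$ pulls back to an event on the gain at a threshold that is \emph{exponentially large} in $x$,
\begin{equation}
\mathbb{P}\bigl(C(t)>x\bigr)=\mathbb{P}\!\left(h(t)^2>\frac{2^{x/W}-1}{\Gamma}\right),
\end{equation}
so a polynomially (or faster) decaying gain tail is transported to an exponentially (or faster) decaying capacity tail.

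To make this precise I would first fix terminology: a nonnegative random variable is \emph{light-tailed} when its moment generating function is finite on a neighborhood of the origin, equivalently when its complementary distribution function is eventually bounded by $K e^{-\theta x}$ for some $\theta>0$; a \emph{fat tail} is a power-law (regularly varying) tail of order $x^{-\alpha}$; and I would read the hypothesis \emph{not heavier than fat tail} as the statement that the complementary distribution function $\overline{F}$ of the channel gain satisfies $\overline{F}(y)\le c\,y^{-\alpha}$ for some $\alpha>0$, $c>0$ and all large $y$. The general implication is then a two-line estimate: for $x$ large enough that $(2^{x/W}-1)/\Gamma$ lies in the range where the power-law bound holds and that $2^{x/W}-1\ge\tfrac{1}{2}\,2^{x/W}$,
\begin{equation}
\mathbb{P}\bigl(C(t)>x\bigr)\le c\left(\frac{2^{x/W}-1}{\Gamma}\right)^{-\alpha}\le c'\,2^{-\alpha x/W}=c'\,e^{-(\alpha\ln 2/W)x},
\end{equation}
so $C(t)$ is light-tailed, and $\mathbb{E}\bigl[e^{sC(t)}\bigr]<\infty$ for $0<s<\alpha\ln 2/W$ follows from the layer-cake formula, the contribution near the origin being trivial since $C(t)\ge 0$.

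For the named channels I would verify the hypothesis model by model, using that squaring the envelope preserves the property \emph{not heavier than fat tail} (if $\overline{F}_{|h|}(y)\le c\,y^{-\alpha}$ then $\overline{F}_{h^2}(y)=\overline{F}_{|h|}(\sqrt{y})\le c\,y^{-\alpha/2}$). Here $h(t)^2$ is, respectively, exponential (Rayleigh), a scaled noncentral chi-square (Rice), Gamma with shape $m$ (Nakagami-$m$), Weibull with the shape halved (Weibull), and again lognormal (lognormal); in each case the complementary distribution function eventually drops below every power law --- exponential-type for the first three, stretched-exponential $e^{-(y/\eta)^{k/2}}$ for Weibull, and at most $e^{-(\ln y-\mu)^2/(2\sigma^2)}$ for lognormal by the Gaussian tail bound --- so \emph{a fortiori} the hypothesis holds and the general implication applies. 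Substituting back even shows the capacity tail is doubly exponential in the first four cases and of order $e^{-\Theta(x^2)}$ for lognormal, all comfortably light-tailed.

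I expect the only real obstacle to be conceptual rather than computational: pinning down the tail-heaviness hierarchy so that distributions which are heavy-tailed (no moment generating function) yet sub-power-law --- the lognormal, and Weibull fading with shape below one --- are correctly placed \emph{inside} the class \emph{not heavier than fat tail}; and, technically, carrying the additive offset $1+\Gamma(\cdot)$ and the accumulated constants carefully enough that the polynomial-to-exponential transfer is valid for all large $x$, not merely asymptotically. The distributional identities for the squared envelope in the five named models are standard and can be quoted from the fading literature.
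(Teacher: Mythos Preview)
Your argument is correct, and it is self-contained: the pullback
\[
\mathbb{P}\bigl(C(t)>x\bigr)=\mathbb{P}\!\left(h(t)^2>\frac{2^{x/W}-1}{\Gamma}\right)
\]
together with the power-law bound on the gain tail gives an exponential bound on the capacity tail, and your case-by-case check of the five fading models is accurate.

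There is, however, nothing in the paper to compare against. The theorem is explicitly introduced as ``available in \cite{sun2017statistical} and restated here''; the present paper supplies no proof and no sketch. So your proposal is not an alternative route but simply \emph{a} proof where the paper offers none. The only remark worth making is that your interpretation of ``not heavier than fat tail'' as $\overline{F}(y)\le c\,y^{-\alpha}$ for some $\alpha>0$ is the natural one and is exactly what the argument needs; the cited source presumably fixes this terminology, but the present paper does not, so you are right to state it explicitly before using it.
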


\subsection{Queueing Behavior}

The wireless channel is essentially a queueing system with cumulative service process $S(t)$ and cumulative arrival process $ A(0,t)=\sum\limits_{s=1}^{t}a(s)$, 
where $a(t)$ denotes the traffic input to the channel at time slot $t$,
and the temporal increment in the system is expressed as
\begin{equation}
X(t) = a(t)-C(t).
\end{equation} 
The queueing behavior of the wireless channel is expressed through the backlog in the system, which is a reflected process of the temporal increment $X(t)$ \cite{asmussen2003applied}, i.e.,
\begin{equation}
B(t+1) = \left[ B(t) + X(t)\right]^{+}.
\end{equation}
Assume $B(0)=0$,
the backlog function is then expressed as
\begin{equation}\label{eq-bl}
B(t) = \sup_{0\le{s}\le{t}}({A}(s,t)-{S}(s,t)). 
\end{equation}
For a lossless system, the output is the difference between the input and backlog, 
\begin{equation}
A^{\ast}(t) = A(t) - B(t), 
\end{equation}
and the delay is defined via the input-output relationship, i.e., 
\begin{equation}
D(t) = \inf\left\{ d\ge{0}: A(t-d)\le A^\ast(t) \right\},
\end{equation}
which is the virtual delay that a hypothetical arrival has experienced on departure.
The maximum rate of traffic with delay requirement that the system can support without dropping is defined as the delay-constrained capacity or throughput \cite{sun2017statistical}: 
\begin{equation}
\overline{C}{(d,\epsilon)} = \sup_{P(D(t) > d) \le \epsilon, \forall t} E\left[ \frac{A(t)}{t} \right].
\end{equation}

The delay tail probability is expressed as
\begin{IEEEeqnarray}{rCl}
\mathbb{P} ( D > d ) &=& \mathbb{P} \left\{ A(t-d) > A^\ast(t) \right\} \\
&=& \mathbb{P} \left\{ A(t-d) > \inf_{0\le{s}\le{t}} \left\{ A(0,s) + S(s,t) \right\}  \right\} \\
&=& \mathbb{P} \left\{ \sup_{t\ge{d}}\left\{ A(d,t) - S(0,t) \right\} > 0 \right\},
\end{IEEEeqnarray}
where the last step follows time reversibility. 
The backlog tail probability is expressed as
\begin{IEEEeqnarray}{rCl}
\mathbb{P} ( B > b ) &=& \mathbb{P}\left\{ \sup_{0\le{s}\le{t}} (A(s,t)-S(s,t)) > b \right\} \\
&=& \mathbb{P} \left\{ \sup_{t\ge{0}} (A(t)-S(t)) > b \right\},
\end{IEEEeqnarray}
where the second equality follows time reversibility.

\subsection{Assumption}

We specify the cumulent generating function of the cumulative arrival process $A(t)$, the cumulative service process $S(t)$, and the increment process of the queue $A(t)-S(t)$.

The assumption for the queue increment process is as follows \cite{glynn1994logarithmic}, without assumption on the dependence between the arrival process and service process. 

\begin{assumption}\label{assumption}
Denote $\mathfrak{S}(t)=A(t)-S(t)$ and $X(t)=a(t)-C(t)$. Assume that there exist $\gamma,\epsilon>0$ such that
\begin{enumerate}
\item\label{aspt-1}
$\kappa_{t}(\theta) = \log \mathbb{E} e^{\theta\mathfrak{S}(t)}$ is well-defined and finite for $\gamma-\epsilon<\theta<\gamma+\epsilon$;
\item\label{aspt-2}
$\limsup_{t\rightarrow\infty}\mathbb{E}e^{\theta X(t)}<\infty$ for $-\epsilon<\theta<\epsilon$;
\item\label{aspt-3}
$\kappa(\theta)=\lim_{t\rightarrow\infty}\frac{1}{t}\kappa_t(\theta)$ exists and is finite for $\gamma-\epsilon<\theta<\gamma+\epsilon$;
\item\label{aspt-4}
$\kappa(\gamma)=0$ and $\kappa$ is differentiable at $\gamma$ with $0<\dot\kappa(\gamma)<\infty$.
\end{enumerate} 
\end{assumption}

A justification to the above assumption is the following proposition \cite{glynn1994logarithmic}, with independence assumption between the arrival and service process.

\begin{proposition}\label{proposition-arrival-service}
Assume independence between the sequences of $a(t)$ and $C(t)$, $t\ge 0$. Let $\gamma,\epsilon>0$ be as in Assumption such that
\begin{enumerate}
% Arrival first.
\item\label{proposition-1}
$\kappa^{A}_{t}(\theta) = \log \mathbb{E} e^{\theta{A}(t)}$ is well-defined and finite for $\gamma-\epsilon<\theta<\gamma+\epsilon$;
\item
$\limsup_{t\rightarrow\infty}\mathbb{E}e^{\theta a(t)}<\infty$ for $-\epsilon<\theta<\epsilon$;
\item
$\kappa^A(\theta)=\lim_{t\rightarrow\infty}\frac{1}{t}\kappa^A_t(\theta)$ exists, is differentiable at $\gamma$, and is finite for $\gamma-\epsilon<\theta<\gamma+\epsilon$;
% Service follows.
\item
$\kappa^{-S}_{t}(\theta) = \log \mathbb{E} e^{-\theta{S}(t)}$ is well-defined and finite for $\gamma-\epsilon<\theta<\gamma+\epsilon$;
\item
$\limsup_{t\rightarrow\infty}\mathbb{E}e^{-\theta C(t)}<\infty$ for $-\epsilon<\theta<\epsilon$;
\item
$\kappa^{-S}(\theta)=\lim_{t\rightarrow\infty}\frac{1}{t}\kappa^{-S}_t(\theta)$ exists, is differentiable at $\gamma$, and is finite for $\gamma-\epsilon<\theta<\gamma+\epsilon$;
\item
$\kappa(\theta) = \kappa^A(\theta) + \kappa^{-S}(\theta)$.
\end{enumerate}
\end{proposition}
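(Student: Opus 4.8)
\medskip
\noindent\textbf{Proof plan.}
The plan is to use the independence hypothesis to collapse the statement onto the two one-sided moment conditions and then verify Assumption~\ref{assumption} clause by clause. First I would note that, since $(a(t))_{t\ge0}$ and $(C(t))_{t\ge0}$ are independent, for each $t$ and each $\theta$ in the relevant range the moment generating functions factor,
\begin{equation}
\mathbb{E}e^{\theta\mathfrak{S}(t)} = \mathbb{E}e^{\theta A(t)}\,\mathbb{E}e^{-\theta S(t)},
\qquad
\mathbb{E}e^{\theta X(t)} = \mathbb{E}e^{\theta a(t)}\,\mathbb{E}e^{-\theta C(t)},
\end{equation}
so that on taking logarithms $\kappa_t(\theta)=\kappa^A_t(\theta)+\kappa^{-S}_t(\theta)$: the cumulant generating function of the queue increment splits additively into an arrival part and a negative-service part, and the remaining work is to carry each of the four clauses of Assumption~\ref{assumption} across this decomposition.

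Clauses~(\ref{aspt-1})--(\ref{aspt-3}) are then bookkeeping. For~(\ref{aspt-1}), the sum of $\kappa^A_t$ and $\kappa^{-S}_t$, each finite on $(\gamma-\epsilon,\gamma+\epsilon)$ by the first and fourth hypotheses of the proposition, is finite there. For~(\ref{aspt-2}), the product factorization of $\mathbb{E}e^{\theta X(t)}$ and the elementary inequality $\limsup_t(u_tv_t)\le(\limsup_t u_t)(\limsup_t v_t)$ for nonnegative sequences give, for $-\epsilon<\theta<\epsilon$,
\begin{equation}
\limsup_{t\to\infty}\mathbb{E}e^{\theta X(t)} \le \left(\limsup_{t\to\infty}\mathbb{E}e^{\theta a(t)}\right)\left(\limsup_{t\to\infty}\mathbb{E}e^{-\theta C(t)}\right) < \infty
\end{equation}
by the second and fifth hypotheses. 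For~(\ref{aspt-3}), $\frac{1}{t}\kappa_t(\theta)=\frac{1}{t}\kappa^A_t(\theta)+\frac{1}{t}\kappa^{-S}_t(\theta)\to\kappa^A(\theta)+\kappa^{-S}(\theta)$ by the third and sixth hypotheses, a finite limit, and it equals $\kappa(\theta)$ by the seventh.

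Clause~(\ref{aspt-4}) is where a little care is needed. We have $\kappa(\gamma)=\kappa^A(\gamma)+\kappa^{-S}(\gamma)$, which vanishes because $\gamma$ is the value supplied by Assumption~\ref{assumption}, equivalently the positive zero of the convex function $\kappa=\kappa^A+\kappa^{-S}$. Differentiability of $\kappa$ at $\gamma$ and finiteness of $\dot\kappa(\gamma)=\dot\kappa^A(\gamma)+\dot\kappa^{-S}(\gamma)$ follow from the third and sixth hypotheses together with the fact that each of $\kappa^A,\kappa^{-S}$, being a pointwise limit of cumulant generating functions, is convex and finite on a neighbourhood of $\gamma$ and hence differentiable with finite derivative at that interior point. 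The single assertion that is not automatic is $\dot\kappa(\gamma)>0$; I would obtain it from convexity of $\kappa$ with $\kappa(0)=0=\kappa(\gamma)$ and the negative-drift (stability) condition $\dot\kappa(0^+)<0$ implicit in the existence of a positive root in Assumption~\ref{assumption}, which forces $\kappa<0$ on $(0,\gamma)$ and hence $\dot\kappa(\gamma)\ge(\kappa(\gamma)-\kappa(s))/(\gamma-s)>0$ for any $s\in(0,\gamma)$.

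The only genuinely non-mechanical step is thus securing the strict sign of $\dot\kappa(\gamma)$, that is, ruling out that $\kappa$ is flat on $[0,\gamma]$; everything else is the additive split $\kappa_t=\kappa^A_t+\kappa^{-S}_t$ applied clause by clause, which is why independence of the arrival and service sequences is the single hypothesis carrying the whole argument.
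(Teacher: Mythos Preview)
The paper does not supply its own proof of this proposition; it is stated with a citation to \cite{glynn1994logarithmic} (``A justification to the above assumption is the following proposition \cite{glynn1994logarithmic}'') and, as written, carries no explicit conclusion---it is a list of seven conditions whose implicit content is that, under independence, they jointly imply Assumption~\ref{assumption}. Your reading of that implicit claim is the correct one, and your argument---factor the moment generating functions by independence to obtain $\kappa_t=\kappa_t^A+\kappa_t^{-S}$, then carry each clause across the split---is precisely the standard verification one finds in Glynn--Whitt. There is nothing in the paper to compare against beyond the citation.

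One minor remark on clause~(\ref{aspt-4}): since the proposition takes $\gamma,\epsilon$ ``as in Assumption'', the condition $\dot\kappa(\gamma)>0$ is arguably already part of what is being hypothesised rather than something you must derive. Your convexity argument (using $\kappa(0)=\kappa(\gamma)=0$ and negative drift at the origin to rule out flatness on $[0,\gamma]$) is nonetheless correct and is the natural way to see that a $\gamma$ with the required properties exists in the first place, so no harm done.
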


The Assumption and Proposition apply to the scenario \cite{asmussen2010ruin}, where there are weak forms of dependence, e.g., Markov dependence, and the average of the cumulent generating function exists and converges, e.g., light-tailed process.

\section{Dependence Control}
\label{dependence-control}

This section focuses on the dependence control theory, including the dependence identification measure, the dual potency, and the dependence transform, which are the three principles of the dependence control in this paper.

\subsection{Measure Identity}

In this subsection, we derive the asymptotic decay rate of delay and backlog, which are the fundamental measures for dependence comparison and control in this paper.

We introduce a change of measure for $a_1, \ldots, a_n, c_1, \ldots, c_n$, i.e.,
\begin{multline}
\widetilde{F}_{n}\left( d a_1, \ldots, d a_n, d c_1, \ldots, d c_n \right) \\
= e^{\gamma s_n - \kappa_n(\gamma)}{F}_{n}\left( d a_1, \ldots, d a_n, d c_1, \ldots, d c_n \right),
\end{multline}
where $F_n$ is the distribution of $a_1, \ldots, a_n, c_1, \ldots, c_n$ and $s_n = a_1 - c_1 + \ldots + a_n - c_n$. 

Assume independence between $a_1, \ldots, a_n$ and $ c_1, \ldots, c_n$, 
then the distributions of $a_1, \ldots, a_n$ and $c_1, \ldots, c_n$ in the new probability measure are given by
\begin{IEEEeqnarray}{rCl}
\widetilde{F}^{A}_{n}\left( d a_1, \ldots, d a_n \right) &=& e^{\gamma s^A_n - \kappa^A_n(\gamma)}{F}_{n}\left( d a_1, \ldots, d a_n, \bm{1} \right), \\
\widetilde{F}^{S}_{n}\left( d c_1, \ldots, d c_n \right) &=& e^{\gamma s^{-S}_n - \kappa^{-S}_n(\gamma)}{F}_{n}\left(\bm{1}, d c_1, \ldots, d c_n \right),
\end{IEEEeqnarray}
where $s^A_n = a_1 + \ldots + a_n$, $s^{-S}_n = -( c_1 + \ldots + c_n )$, and 
\begin{equation}
\kappa_n(\gamma) = \kappa^A_n(\gamma) + \kappa^{-S}_n(\gamma).
\end{equation}

We present the asymptotic decay rate of delay and backlog in the following theorem, which shows that the asymptotic behavior of the tail is exponential for weak forms of dependence and light-tailed process that are indicated by Assumption \ref{assumption} and Proposition \ref{proposition-arrival-service}. We prove the theorem in Appendix \ref{proof-of-theorem-decay-rate-delay-backlog}.

\begin{theorem}\label{theorem-decay-rate-delay-backlog}
Under the conditions in Proposition \ref{proposition-arrival-service},
the asymptotic decay rate of delay is
\begin{equation}
\lim_{d\rightarrow\infty}\frac{1}{d}{\log \mathbb{P}(D> d)} = -\kappa^{A}(\theta),
\end{equation}
and the asymptotic decay rate of backlog is 
\begin{equation}
\lim_{b\rightarrow\infty}\frac{1}{b}{\log \mathbb{P}( B> b )} = -\theta,
\end{equation}
where $\theta>0$ is the root to the stability equation
\begin{equation}
\kappa^A(\theta) + \kappa^{-S}(\theta) = 0.
\end{equation}
\end{theorem}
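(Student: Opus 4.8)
\medskip
\noindent\textbf{Sketch of the intended proof.}
The plan is the logarithmic large--deviations scheme of \cite{glynn1994logarithmic}: for each tail, a \emph{matching} exponential upper and lower bound --- the upper one by a union bound together with Chernoff's inequality, the lower one by an exponential change of measure of exactly the type introduced just before the statement, tuned so that the relevant net input acquires positive drift and is carried across the target level. Write $\theta$ for the root of $\kappa^A(\theta)+\kappa^{-S}(\theta)=0$ (the $\gamma$ of Assumption \ref{assumption}), so $\kappa(\theta)=0$, $0<\dot\kappa(\theta)<\infty$, $\kappa^{-S}(\theta)=-\kappa^A(\theta)$, and $\kappa(\theta')<0$ for $0<\theta'<\theta$ by convexity.

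\smallskip
\noindent\emph{Backlog.} Use $\mathbb{P}(B>b)=\mathbb{P}\{\sup_{t\ge0}\mathfrak{S}(t)>b\}$, $\mathfrak{S}(t)=A(t)-S(t)$. Boole and Chernoff give, for $0<\theta'<\theta$, $\mathbb{P}(B>b)\le\sum_{t\ge0}e^{-\theta'b+\kappa_t(\theta')}\le C_{\theta'}e^{-\theta'b}$, the series converging since $\kappa_t(\theta')/t\to\kappa(\theta')<0$; letting $\theta'\uparrow\theta$ yields $\limsup_b b^{-1}\log\mathbb{P}(B>b)\le-\theta$. Conversely, under the tilt $\widetilde F_n=e^{\theta s_n-\kappa_n(\theta)}F_n$ the net input has drift $\dot\kappa(\theta)>0$, so the first passage $\tau_b=\inf\{t:\mathfrak{S}(t)>b\}$ obeys $\widetilde{\mathbb{P}}(\tau_b\le 2b/\dot\kappa(\theta))\to1$, with tight overshoot by light tails (Assumption \ref{assumption}(\ref{aspt-2})); reversing the tilt and using $\max_{t\le 2b/\dot\kappa(\theta)}|\kappa_t(\theta)|=o(b)$ (as $\kappa(\theta)=0$),
\[
\mathbb{P}(B>b)=\widetilde{\mathbb{E}}\!\left[e^{-\theta\mathfrak{S}(\tau_b)+\kappa_{\tau_b}(\theta)};\,\tau_b<\infty\right]\ \ge\ e^{-\theta(b+R)-o(b)}\,\widetilde{\mathbb{P}}\bigl(\tau_b\le 2b/\dot\kappa(\theta),\ \mathfrak{S}(\tau_b)\le b+R\bigr),
\]
and a fixed constant $R$ making the last probability $\ge\tfrac14$ gives $\liminf_b b^{-1}\log\mathbb{P}(B>b)\ge-\theta$.

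\smallskip
\noindent\emph{Delay.} Here $\mathbb{P}(D>d)=\mathbb{P}\{\sup_{u\ge0}(A(d,d+u)-S(0,d+u))>0\}$, the salient feature being the \emph{mismatched window}: length $u$ on the arrival side, length $d+u$ on the service side. For the upper bound, union over $u$, Chernoff with $0<\theta'<\theta$, and independence of $A,S$ give $\mathbb{P}(A(d,d+u)-S(0,d+u)>0)\le\mathbb{E}e^{\theta'A(d,d+u)}\,\mathbb{E}e^{-\theta'S(0,d+u)}$, whose exponent is, via the (shift--invariant) limiting cumulants, $u\kappa(\theta')+d\kappa^{-S}(\theta')+o(u)+o(d)$; summing the convergent geometric series in $u$ gives $\mathbb{P}(D>d)\le e^{d\kappa^{-S}(\theta')+o(d)}$, and $\theta'\uparrow\theta$ with $\kappa^{-S}(\theta)=-\kappa^A(\theta)$ closes it. For the lower bound, condition: $\mathbb{P}(D>d)=\mathbb{P}\{W_d>S(0,d)\}$ with $W_d=\sup_{u\ge0}(A(d,d+u)-S(d,d+u))$, a backlog--type supremum with $\mathbb{P}(W_d>x)\ge c\,e^{-\theta x-o(x)}$ by the first part; choosing $\beta$ slightly above the (nonnegative, since $S\ge0$) $\theta$--tilted mean $-\dot\kappa^{-S}(\theta)$ of $S(0,d)/d$, so that $\mathbb{E}[e^{-\theta S(0,d)};S(0,d)\le\beta d]\ge\tfrac12 e^{\kappa^{-S}_{d}(\theta)}$ for large $d$, and noting $o(S(0,d))=o(d)$ on that event,
\[
\mathbb{P}(D>d)\ \ge\ c\,e^{-o(d)}\,\mathbb{E}\!\left[e^{-\theta S(0,d)};\,S(0,d)\le\beta d\right]\ \ge\ \tfrac{c}{2}\,e^{\kappa^{-S}_{d}(\theta)-o(d)}\ =\ e^{-d\kappa^A(\theta)-o(d)},
\]
which matches. (When the service lacks independent increments across $(0,d]$ and $(d,\infty)$, use instead a two--stage tilt --- $\theta$--tilt $-S$ on $(0,d]$, then $\theta$--tilt both $a$ and $-C$ on $(d,d+\tau]$ up to the first passage $\tau$ of $A(d,d+\cdot)-S(d,d+\cdot)$ above $S(0,d)$ --- whose log--density on the relevant event is $\kappa^{-S}_{d}(\theta)+\kappa_\tau(\theta)-\theta\cdot(\text{overshoot})=d\kappa^{-S}(\theta)+o(d)$.)

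\smallskip
\noindent\emph{Main obstacle.} The Chernoff/union direction is routine, as is the bookkeeping identity $\kappa^A(\theta)+\kappa^{-S}(\theta)=0$, which makes the delay exponent collapse to $\kappa^{-S}(\theta)$ no matter how the two windows are balanced. The delicate part is the lower bound: one must show that under the tilted measure the net input reaches the target level at only a \emph{subexponential} prefactor cost --- bounded overshoot at first passage and a not--too--small probability for the bulk event --- which is exactly where Assumption \ref{assumption}(\ref{aspt-2}) and the asymptotic linearity / shift--invariance of the cumulant functions in Proposition \ref{proposition-arrival-service} (as opposed to merely $\kappa_t(\cdot)/t\to\kappa(\cdot)$) are genuinely used; supplying these quantitatively in the general weakly--dependent setting is the crux, and is precisely what the Markov additive model of Section \ref{application} makes concrete.
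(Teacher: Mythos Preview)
Your proposal is broadly correct and follows the Glynn--Whitt template, but the execution differs from the paper's in two respects. For the \emph{upper bounds} you tilt at a sub-critical $\theta'<\theta$, sum the resulting geometric series, and then send $\theta'\uparrow\theta$; the paper instead tilts at the exact root $\gamma=\theta$ and compensates for the lack of a geometric decay by splitting $\sum_{n\ge d}\mathbb{P}(\tau(d)=n)$ into four ranges of $n$ and bounding the ``wrong'' ranges via the tilted weak law (Corollary~\ref{corollary-for-decay}). For the \emph{delay lower bound} you factor $\{D>d\}=\{W_d>S(0,d)\}$ with $W_d$ a backlog-type supremum on $[d,\infty)$ and then feed in the backlog lower bound; the paper instead picks a single deterministic time $m\approx d(1+\eta)/\dot\kappa(\gamma)$, lower-bounds $\mathbb{P}(D>d)\ge\mathbb{P}(\mathfrak{S}(m)>A(d))$, and estimates this directly under the tilt --- crucially with both $\mathfrak{S}(m)$ and $A(d)$ anchored at time $0$, so no shift of origin is needed. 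Your route is more modular (backlog feeds delay) and the $\theta'\uparrow\theta$ device is pleasantly elementary, but it leans on the approximation $\log\mathbb{E}e^{\theta'A(d,d+u)}\approx u\kappa^A(\theta')$ uniformly in $d$ and on some independence across $(0,d]$ and $(d,\infty)$ --- assumptions you flag but which go beyond Proposition~\ref{proposition-arrival-service} as stated. The paper's approach trades that modularity for self-containment: by working with $\mathfrak{S}(n)$ and $A(d)$ from time $0$ and invoking the tilted-LLN Corollary~\ref{corollary-for-decay}, it needs only the convergence $\kappa_t/t\to\kappa$ and nothing about shifted windows.
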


The following theorem presents a property of the time average of a process in the original probability measure.

\begin{theorem}
Under the conditions in Assumption \ref{assumption} or in Proposition \ref{proposition-arrival-service}.
For the process $M(t)$, i.e., $A(t)$, $S(t)$, or $A(t)-S(t)$, we have
\begin{equation}
\lim_{t\rightarrow \infty} \mathbb{E} \left[ \frac{M(t)}{t} \right] = \dot\kappa(0).
\end{equation}
\end{theorem}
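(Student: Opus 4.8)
The plan is to reduce the statement to an interchange of the limit in $t$ with differentiation in $\theta$ at the origin, and to make that interchange legitimate by exploiting convexity of the (scaled) cumulant generating functions rather than any smoothness of the limit per se.

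First I would attach to each admissible $M$ its natural scaled cumulant generating function $\kappa$: for $M=A-S$ it is the $\kappa$ of Assumption \ref{assumption}(\ref{aspt-3}); for $M=A$ it is $\kappa^A$; and for $M=S$ it is $\kappa^S(\theta):=\kappa^{-S}(-\theta)$. In every case $\kappa(\theta)=\lim_{t\to\infty}\frac1t\log\mathbb{E}e^{\theta M(t)}$, with $\kappa(0)=0$, and under the stated hypotheses this convergence holds on an open interval containing $0$ on which the finite-$t$ transforms are finite (Assumption \ref{assumption} already suffices for $M=A-S$; the separate hypotheses on $A(t)$ and $S(t)$ in Proposition \ref{proposition-arrival-service} are what make $M=A$ and $M=S$ available, since Assumption \ref{assumption} only constrains the difference). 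Writing $g_t(\theta):=\frac1t\log\mathbb{E}e^{\theta M(t)}$, each $g_t$ is a scaled log-moment generating function, hence convex and, being finite in a neighbourhood of $0$, smooth there; differentiating under the expectation gives $\dot g_t(0)=\frac1t\mathbb{E}[M(t)]=\mathbb{E}[M(t)/t]$, and $g_t(0)=0$. So the whole claim is the assertion $\dot g_t(0)\to\dot\kappa(0)$.

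Next I would run a two-sided slope squeeze. For any $h>0$ small enough that $\pm h$ lie in the common interval of finiteness, convexity of $g_t$ gives
\begin{equation}
\frac{g_t(0)-g_t(-h)}{h}\;\le\;\dot g_t(0)\;\le\;\frac{g_t(h)-g_t(0)}{h}.
\end{equation}
Letting $t\to\infty$ and using $g_t(0)=0$, $g_t(\pm h)\to\kappa(\pm h)$, $\kappa(0)=0$,
\begin{equation}
\frac{\kappa(0)-\kappa(-h)}{h}\;\le\;\liminf_{t\to\infty}\dot g_t(0)\;\le\;\limsup_{t\to\infty}\dot g_t(0)\;\le\;\frac{\kappa(h)-\kappa(0)}{h}.
\end{equation}
Finally I would let $h\downarrow 0$: the outer quotients converge to the left and right derivatives of the convex function $\kappa$ at $0$, which coincide with $\dot\kappa(0)$ (differentiability of $\kappa$ at $0$ being implicit in the statement and holding in the settings of interest, e.g.\ the Markov additive model of Sec.~\ref{application}). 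The sandwich then forces $\lim_{t\to\infty}\mathbb{E}[M(t)/t]=\lim_{t\to\infty}\dot g_t(0)=\dot\kappa(0)$, and rewriting $\dot\kappa^S(0)=-\dot\kappa^{-S}(0)$ recovers the $M=S$ case in the notation of Proposition \ref{proposition-arrival-service}.

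The main obstacle is precisely this exchange of $\lim_{t\to\infty}$ and $\frac{d}{d\theta}$ at $\theta=0$: pointwise convergence $g_t\to\kappa$ does not by itself transfer to derivatives for arbitrary functions, and the argument works only because every $g_t$ is convex and $0$ is interior to the region of convergence. The secondary point to be careful about is the case split — Assumption \ref{assumption} alone covers only $M=A-S$, and handling $M=A$ and $M=S$ genuinely needs the per-process finiteness, convexity, and convergence of $\frac1t\log\mathbb{E}e^{\theta A(t)}$ and $\frac1t\log\mathbb{E}e^{-\theta S(t)}$ supplied by Proposition \ref{proposition-arrival-service}.
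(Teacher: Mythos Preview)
Your argument is correct and is actually more careful than the paper's own proof. The paper simply writes
\[
\dot\kappa(\theta)=\lim_{t\to\infty}\frac{1}{t}\,\frac{\mathbb{E}\bigl[e^{\theta M(t)}M(t)\bigr]}{\mathbb{E}e^{\theta M(t)}}
\]
by formally differentiating $\kappa(\theta)=\lim_{t\to\infty}\frac1t\log\mathbb{E}e^{\theta M(t)}$ and then sets $\theta=0$; that is, it \emph{assumes} the interchange of $\lim_{t\to\infty}$ and $\frac{d}{d\theta}$ without comment. You instead justify exactly this interchange via convexity of each $g_t$ and a secant--slope squeeze, which is the standard way to transfer derivatives through pointwise limits of convex functions. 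What the paper's approach buys is brevity; what yours buys is an honest proof, and it also makes explicit the two hidden hypotheses (that $0$ lies in the interval where $g_t\to\kappa$ and that $\kappa$ is differentiable at $0$) which the paper's one-line computation silently needs as well. Your remark on the case split is likewise on point: Assumption~\ref{assumption} alone only controls $M=A-S$, and the $M=A$, $M=S$ cases genuinely require the separate hypotheses of Proposition~\ref{proposition-arrival-service}.
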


\begin{proof}
Considering
\begin{equation}
\kappa(\theta) = \lim_{t\rightarrow\infty}\frac{1}{t}\log\mathbb{E} e^{\theta M(t)},
\end{equation}
calculate the derivative,
\begin{equation}
\dot\kappa(\theta) = \lim_{t\rightarrow \infty}\frac{1}{t}\left[ \frac{1}{\mathbb{E} e^{\theta M(t)}} \mathbb{E}\left[ e^{\theta M(t)} M(t) \right] \right],
\end{equation}
let $\theta=0$, 
the result then follows.
\end{proof}

The proof of Theorem \ref{theorem-decay-rate-delay-backlog} requires some preliminary results in the new probability measure, which are of independent interest for understanding the properties of the new probability measure.

The following result shows that, in the new probability measure, for fixed $d, k\in \mathbb{N}$, $\frac{ \mathfrak{S}(d,n-k)}{n}$ converges in probability to $\dot\kappa(\gamma)$, 
which indicates that the convergence is insensible to the head and tail of the sequence $ \mathfrak{S}(n)$. 
Note, in the original probability measure, $\frac{ \mathfrak{S}(n)}{n}$ converges in probability to $\dot\kappa(0)$, and there is a sign change from the original to the new probability measure.

\begin{theorem}
Let $n, d, k\in\mathbb{N}$ and $d, k<\infty$, and $\widetilde{\mu} \equiv \dot\kappa(\gamma)$. Then,
\begin{equation}
\lim_{n\rightarrow\infty}\widetilde{\mathbb{P}}_{n} \left( \left| \frac{ \mathfrak{S}(d,n-k)}{n} -\widetilde{\mu} \right| > \eta \right) = 0,\ \forall \eta>0.
\end{equation}
\end{theorem}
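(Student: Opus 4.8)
The plan is a change-of-measure / Chernoff argument: establish the limiting logarithmic moment generating function of $\mathfrak{S}(d,n-k)$ under $\widetilde{\mathbb{P}}_{n}$, read off its derivative at the origin from Assumption \ref{assumption}, and conclude the weak law by an elementary two-sided exponential (Markov) bound. Throughout I use that by definition of the tilted law, $\widetilde{\mathbb{E}}_{n}[g]=e^{-\kappa_{n}(\gamma)}\,\mathbb{E}\!\left[g\,e^{\gamma s_{n}}\right]$ where $s_{n}=\mathfrak{S}(0,n)$ and $\kappa_{n}(\gamma)=\log\mathbb{E}\,e^{\gamma s_{n}}$.

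First I would compute the tilted transform directly. Splitting the sum into the fixed head block, the bulk, and the fixed tail block, $s_{n}=\mathfrak{S}(0,d)+\mathfrak{S}(d,n-k)+\mathfrak{S}(n-k,n)$, one gets for $|\lambda|$ small
\begin{equation}
\widetilde{\mathbb{E}}_{n}\!\left[e^{\lambda\mathfrak{S}(d,n-k)}\right]=e^{-\kappa_{n}(\gamma)}\,\mathbb{E}\!\left[e^{\gamma\mathfrak{S}(0,d)+(\gamma+\lambda)\mathfrak{S}(d,n-k)+\gamma\mathfrak{S}(n-k,n)}\right].
\end{equation}
Since $d,k$ are finite and do not grow with $n$, the head and tail factors are bounded at the exponent $\gamma$ (their exponential moments exist by Assumption \ref{assumption}(\ref{aspt-1})), so after $\frac{1}{n}\log(\cdot)$ they contribute $o(1)$, while the bulk contributes $\frac{1}{n}\kappa_{n-k-d}(\gamma+\lambda)\to\kappa(\gamma+\lambda)$. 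Hence
\begin{equation}
\lim_{n\to\infty}\frac{1}{n}\log\widetilde{\mathbb{E}}_{n}\!\left[e^{\lambda\mathfrak{S}(d,n-k)}\right]=\kappa(\gamma+\lambda)-\kappa(\gamma)=:\varphi(\lambda),
\end{equation}
which is finite for $|\lambda|<\epsilon$, satisfies $\varphi(0)=0$, and is differentiable at $0$ with $\dot\varphi(0)=\dot\kappa(\gamma)=\widetilde{\mu}$ by Assumption \ref{assumption}(\ref{aspt-4}). (For the tail block one may equivalently invoke the time reversibility already used in Sec.\ \ref{preliminaries} to turn it into a second head block.)

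Then I would finish with the Chernoff estimate. For $\lambda\in(0,\epsilon)$,
\begin{equation}
\widetilde{\mathbb{P}}_{n}\!\left(\frac{\mathfrak{S}(d,n-k)}{n}\ge\widetilde{\mu}+\eta\right)\le\exp\!\left(-n\lambda(\widetilde{\mu}+\eta)+\log\widetilde{\mathbb{E}}_{n}\!\left[e^{\lambda\mathfrak{S}(d,n-k)}\right]\right),
\end{equation}
and since $\frac{1}{n}\log\widetilde{\mathbb{E}}_{n}[e^{\lambda\mathfrak{S}(d,n-k)}]\to\varphi(\lambda)$ with $\varphi(\lambda)=\lambda\widetilde{\mu}+o(\lambda)$ as $\lambda\downarrow 0$, the exponent is eventually at most $n(\varphi(\lambda)-\lambda\widetilde{\mu}-\tfrac{\eta\lambda}{2})$, which tends to $-\infty$ once $\lambda$ is small enough that $|\varphi(\lambda)-\lambda\widetilde{\mu}|<\tfrac{\eta\lambda}{4}$. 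The lower deviation $\frac{\mathfrak{S}(d,n-k)}{n}\le\widetilde{\mu}-\eta$ is handled symmetrically with $\lambda\in(-\epsilon,0)$, and a union bound over the two one-sided events gives the claim.

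The hard part is the middle step: justifying that the fixed-length boundary blocks $\mathfrak{S}(0,d)$ and $\mathfrak{S}(n-k,n)$ decouple from the bulk well enough to contribute only $o(n)$ to $\log\widetilde{\mathbb{E}}_{n}[e^{\lambda\mathfrak{S}(d,n-k)}]$. Under the independence hypothesis of Proposition \ref{proposition-arrival-service} — and, a fortiori, in the stationary or Markov-additive settings the paper targets — the joint exponential moment factorizes and each boundary factor is a finite constant at exponent $\gamma$, so this is immediate; with only Markov dependence one instead absorbs the boundary blocks into the Perron eigenfunction of the tilted kernel, again a bounded correction. A purely H\"older-based decoupling, by contrast, would require the finiteness interval in Assumption \ref{assumption}(\ref{aspt-1}) to be wide enough relative to $\gamma$, so the structural route is the natural one.
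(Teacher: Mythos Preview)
Your overall strategy — Chernoff bound under $\widetilde{\mathbb{P}}_n$, identify the limiting tilted log-MGF as $\kappa(\gamma+\lambda)-\kappa(\gamma)$, then pick $\lambda$ small of the right sign — is exactly the paper's. The divergence is in the ``hard part'' you flag, and there the paper in fact takes the H\"older route you dismiss, but with a different decomposition that removes the obstruction you describe.

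You split the integrand as $e^{\gamma\mathfrak{S}(0,d)}\,e^{(\gamma+\lambda)\mathfrak{S}(d,n-k)}\,e^{\gamma\mathfrak{S}(n-k,n)}$ and then need the three factors to decouple. Under Assumption~\ref{assumption} alone there is no reason the middle block should satisfy $\frac{1}{n}\log\mathbb{E}\,e^{(\gamma+\lambda)\mathfrak{S}(d,n-k)}\to\kappa(\gamma+\lambda)$: condition~(\ref{aspt-3}) is stated only for $\mathfrak{S}(0,n)$, and without stationarity or independence you cannot pass from $\mathfrak{S}(0,\cdot)$ to $\mathfrak{S}(d,\cdot)$. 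Applying H\"older to \emph{your} decomposition indeed inflates the boundary exponents to $p\gamma$ with $p>1$, which is where your ``wide interval'' worry comes from.

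The paper instead writes $\theta\,\mathfrak{S}(d,n-k)=\theta\,\mathfrak{S}(n)-\theta\,\mathfrak{S}(0,d)-\theta\,\mathfrak{S}(n-k,n)$ \emph{before} undoing the tilt, so that after multiplying by the likelihood ratio the integrand under $\mathbb{P}$ is
\[
e^{(\theta+\gamma)\mathfrak{S}(n)}\cdot e^{-\theta\mathfrak{S}(0,d)}\cdot e^{-\theta\mathfrak{S}(n-k,n)}.
\]
Now the bulk is the \emph{full} sum $\mathfrak{S}(n)$ at exponent $\theta+\gamma$, for which Assumption~\ref{assumption}(\ref{aspt-3}) applies directly, and the two boundary pieces sit at the \emph{small} exponent $-\theta$, not at $\gamma$. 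Two applications of H\"older (with $p,\hat p$ close to $1$ and $\theta$ close to $0$) then give
\[
\limsup_{n\to\infty}\frac{1}{n}\log\widetilde{\mathbb{P}}_n\Bigl(\tfrac{\mathfrak{S}(d,n-k)}{n}-\widetilde{\mu}>\eta\Bigr)\le \frac{\kappa(\hat p p(\theta+\gamma))}{\hat p p}-\kappa(\gamma)-\theta(\widetilde{\mu}+\eta),
\]
with the boundary factors contributing $0$ because they are $O(1)$ constants (fixed $d$, fixed $k$, Assumption~\ref{assumption}(\ref{aspt-1})--(\ref{aspt-2})). A first-order Taylor expansion at $(\theta,p,\hat p)=(0,1,1)$ shows the right side is $-\theta\eta+o(\theta)$, hence strictly negative for suitable parameters; the lower deviation is symmetric. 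The point is that H\"older here pushes the main exponent from $\theta+\gamma$ to $\hat p p(\theta+\gamma)$, which stays inside $(\gamma-\epsilon,\gamma+\epsilon)$, while the boundary exponents are multiples of $\theta$, not of $\gamma$ — so no ``wide interval relative to $\gamma$'' is needed.

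Your structural alternative (factorize under Proposition~\ref{proposition-arrival-service}, or absorb boundaries into the Perron eigenfunction in the Markov case) is correct where it applies, but it assumes strictly more than Assumption~\ref{assumption}; the paper explicitly notes that its proof requires neither stationarity nor time reversibility.
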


An equivalent expression of the above theorem is the following theorem,
which is proved in Appendix \ref{proof-of-theorem-equivalent-converge-in-probability}.
Particularly, in the proof, no assumption of time reversibility or stationary is needed and the bounding function $z^n$ is set to facilitate the proof of Theorem \ref{theorem-decay-rate-delay-backlog}. 

\begin{theorem}\label{theorem-equivalent-converge-in-probability}
Let $n, d, k\in\mathbb{N}$ and $d, k<\infty$, and $\widetilde{\mu} \equiv \dot\kappa(\gamma)$. For each $\eta>0$, there exist $z\equiv z(\eta)\in(0,1)$ and $n_0$ such that
\begin{equation}
\widetilde{\mathbb{P}}_n \left( \left| \frac{ \mathfrak{S}(d,n-k) }{n} -\widetilde{\mu} \right| > \eta \right) \le z^n,\ \text{for } n\ge n_0.
\end{equation}
\end{theorem}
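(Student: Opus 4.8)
The plan is a Chernoff (exponential Markov) estimate carried out directly under the tilted law $\widetilde{\mathbb{P}}_n$, exploiting the decomposition
\[
\mathfrak{S}(d,n-k) \;=\; \mathfrak{S}(n) - \mathfrak{S}(0,d) - \mathfrak{S}(n-k,n)
\]
of the middle-segment sum into the full sum, a head of fixed length $d$, and a tail of fixed length $k$. By symmetry (replace $\theta$ by $-\theta$ throughout), it suffices to bound the upper deviation $\widetilde{\mathbb{P}}_n\big(\mathfrak{S}(d,n-k) > n(\widetilde{\mu}+\eta)\big)$. For $\theta>0$ to be fixed later, Markov's inequality together with the change of measure $\widetilde{\mathbb{E}}_n[\,\cdot\,]=\mathbb{E}\big[\,\cdot\; e^{\gamma\mathfrak{S}(n)-\kappa_n(\gamma)}\big]$ and the identity above give
\[
\widetilde{\mathbb{P}}_n\big(\mathfrak{S}(d,n-k) > n(\widetilde{\mu}+\eta)\big) \;\le\; e^{-\theta n(\widetilde{\mu}+\eta)}\, e^{-\kappa_n(\gamma)}\, \mathbb{E}\Big[ e^{(\gamma+\theta)\mathfrak{S}(n)}\; e^{-\theta\mathfrak{S}(0,d)}\; e^{-\theta\mathfrak{S}(n-k,n)} \Big].
\]
I would then apply H\"older's inequality with exponents $1/p+1/q+1/r=1$ and $p$ slightly above $1$ to separate the bulk factor $\mathbb{E}\,e^{p(\gamma+\theta)\mathfrak{S}(n)}=e^{\kappa_n(p(\gamma+\theta))}$ from the head factor $\big(\mathbb{E}\,e^{-q\theta\mathfrak{S}(0,d)}\big)^{1/q}$ and the tail factor $\big(\mathbb{E}\,e^{-r\theta\mathfrak{S}(n-k,n)}\big)^{1/r}$; for $\theta$ small and $p$ close to $1$ the argument $p(\gamma+\theta)$ stays inside $(\gamma-\epsilon,\gamma+\epsilon)$, so the bulk factor is finite by condition~\ref{aspt-1} of Assumption~\ref{assumption}.

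Second, I would bound the three factors separately. The head factor is the moment generating function of the \emph{fixed} random variable $\mathfrak{S}(0,d)=\sum_{i=1}^{d}X(i)$ at an argument near the origin, hence a finite constant $C_1=C_1(d)$ independent of $n$ (obtained by one more H\"older step over the $d$ summands and the integrability of the individual increments). The tail factor $\big(\mathbb{E}\,e^{-r\theta\mathfrak{S}(n-k,n)}\big)^{1/r}$ with $\mathfrak{S}(n-k,n)=\sum_{i=n-k+1}^{n}X(i)$ involves only slots $i\ge n-k+1$; since condition~\ref{aspt-2} of Assumption~\ref{assumption} furnishes a uniform bound on $\mathbb{E}\,e^{\pm\theta' X(i)}$ for all $i$ beyond some $N$ and for $\theta'<\epsilon$, once $n\ge N+k$ this factor is at most a constant $C_2=C_2(k)$. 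For the bulk, condition~\ref{aspt-3} gives $\frac{1}{n}\kappa_n\to\kappa$ and condition~\ref{aspt-4} gives $\kappa(\gamma)=0$, so for any $\delta>0$ and all $n$ past some $n_0(\delta)$,
\[
-\kappa_n(\gamma)+\frac{1}{p}\,\kappa_n\big(p(\gamma+\theta)\big) \;\le\; n\Big(\frac{1}{p}\,\kappa\big(p(\gamma+\theta)\big)+\delta\Big).
\]
Collecting the estimates,
\[
\widetilde{\mathbb{P}}_n\big(\mathfrak{S}(d,n-k) > n(\widetilde{\mu}+\eta)\big) \;\le\; C_1 C_2 \exp\!\Big( n\Big[\, -\theta(\widetilde{\mu}+\eta)+\frac{1}{p}\,\kappa\big(p(\gamma+\theta)\big)+\delta \,\Big]\Big).
\]

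Third, I would optimize the exponent: letting $p\downarrow 1$ and $\delta\downarrow 0$ the bracket tends to $-\theta(\widetilde{\mu}+\eta)+\kappa(\gamma+\theta)$, and since $\widetilde{\mu}=\dot\kappa(\gamma)$, $\kappa(\gamma)=0$, and $\kappa$ is differentiable at $\gamma$ (condition~\ref{aspt-4}), a first-order Taylor expansion yields $\kappa(\gamma+\theta)=\theta\widetilde{\mu}+o(\theta)$, so the bracket equals $-\theta\eta+o(\theta)$, strictly negative for all sufficiently small $\theta$. Fixing such a $\theta$, then $p$ close enough to $1$ and $\delta$ small enough, the exponential rate becomes some $-c<0$; for $n$ beyond some threshold the prefactor $C_1C_2$ is absorbed into the exponential, and with $z:=e^{-c/2}\in(0,1)$ we obtain $\widetilde{\mathbb{P}}_n\big(\mathfrak{S}(d,n-k) > n(\widetilde{\mu}+\eta)\big)\le z^{n}$ for $n\ge n_0$. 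The symmetric argument (with $-\theta$) handles the lower deviation with a possibly different rate and threshold; a union bound followed by enlarging $z$ and $n_0$ gives the stated inequality. Note that nothing above used stationarity or time reversibility, and the bounding function $z^n$ is exactly the geometric rate needed to feed into the proof of Theorem~\ref{theorem-decay-rate-delay-backlog}.

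I expect the main obstacle to be bookkeeping rather than conceptual. The parameters $\theta$, $p-1$, $q$, $r$, $\delta$ must be chosen jointly so that every moment generating function invoked stays in its domain of finiteness — the mgf of $\mathfrak{S}(n)$ is evaluated near $\gamma$ (covered by condition~\ref{aspt-1}), while the head and tail mgfs are evaluated near the origin (covered for the tail by condition~\ref{aspt-2} for slots $i\ge N$, and for the head by the integrability of the fixed, finitely many initial increments) — while simultaneously the net exponential rate stays strictly negative. Handling the head and tail blocks needs care precisely because condition~\ref{aspt-2} is stated only as a $\limsup$ and therefore controls only the slots past $N$, so the $d$ initial slots and the $k$ terminal slots have to be treated by different means, both yielding constants independent of $n$. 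The one genuinely substantive point is that tilting by exactly $\gamma$ makes $\dot\kappa(\gamma)=\widetilde{\mu}$ the asymptotic tilted mean, so the first-order term in the Taylor expansion cancels against $\widetilde{\mu}$ and leaves the strictly negative residual $-\theta\eta$ — the same mechanism that underlies the Cram\'er / Glynn--Whitt large-deviation estimate.
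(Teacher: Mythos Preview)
Your proposal is correct and follows essentially the same path as the paper's proof: Chernoff bound under $\widetilde{\mathbb{P}}_n$, change of measure back to $\mathbb{P}$, H\"older's inequality to split the bulk $e^{(\gamma+\theta)\mathfrak{S}(n)}$ from the fixed-length head $e^{-\theta\mathfrak{S}(0,d)}$ and tail $e^{-\theta\mathfrak{S}(n-k,n)}$, then a first-order Taylor expansion of $\kappa$ at $\gamma$ to force the exponential rate strictly negative. The only differences are cosmetic --- the paper applies H\"older twice (exponent pairs $(p,q)$ then $(\hat p,\hat q)$) rather than your single three-way split, and it controls the head factor directly via Assumption~\ref{assumption}(\ref{aspt-1}) by keeping $-\hat q p\theta\in(\gamma-\epsilon,\gamma+\epsilon)$ rather than appealing to ``integrability of the initial increments near the origin''; you should use that route for the head, since the stated assumptions give you $\kappa_d$ finite near $\gamma$, not near $0$.
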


Replace $\mathfrak{S}(d)$ with $A(d)$ and a finite constant $x$, we get the following corollary, which is used in the proof of Theorem \ref{theorem-decay-rate-delay-backlog}.

\begin{corollary}\label{corollary-for-decay}
Let $n, d, k\in\mathbb{N}$ and $d, k<\infty$,  and $\widetilde{\mu} \equiv \dot\kappa(\gamma)$. Let $x\in \mathbb{R}$ and $x<\infty$. For each $\eta>0$, there exist $z\equiv z(\eta)\in(0,1)$ and $n_0$ such that
\begin{equation}
\widetilde{\mathbb{P}}_n \left( \left| \frac{ \mathfrak{S}(n-k)-A(d) + x }{n} -\widetilde{\mu} \right| > \eta \right) \le z^n,\ \text{for } n\ge n_0.
\end{equation}
\end{corollary}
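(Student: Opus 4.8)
The plan is to reduce this to Theorem~\ref{theorem-equivalent-converge-in-probability} by noticing that the two numerators differ only by a quantity that stays bounded as $n\to\infty$. With $A(s,t)=\sum_{i=s+1}^{t}a(i)$, $S(s,t)=\sum_{i=s+1}^{t}C(i)$ and $\mathfrak{S}(s,t)=A(s,t)-S(s,t)$, for $d\le n-k$ one has $\mathfrak{S}(n-k)-\mathfrak{S}(d,n-k)=\mathfrak{S}(d)=A(d)-S(d)$, hence
\begin{equation}
\mathfrak{S}(n-k)-A(d)+x=\mathfrak{S}(d,n-k)-\left(S(d)-x\right).
\end{equation}
The correction $S(d)-x$ involves only the fixed constant $x$ and $S(d)=\sum_{i=1}^{d}C(i)$, a sum of finitely many nonnegative, almost surely finite instantaneous capacities, so after division by $n$ it is negligible. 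I would therefore run a union bound, splitting the tolerance $\eta$ as $\eta/2+\eta/2$.

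From the identity above,
\begin{align}
&\widetilde{\mathbb{P}}_n\!\left(\left|\frac{\mathfrak{S}(n-k)-A(d)+x}{n}-\widetilde{\mu}\right|>\eta\right)\notag\\
&\qquad\le\widetilde{\mathbb{P}}_n\!\left(\left|\frac{\mathfrak{S}(d,n-k)}{n}-\widetilde{\mu}\right|>\frac{\eta}{2}\right)+\widetilde{\mathbb{P}}_n\!\left(\frac{\left|S(d)-x\right|}{n}>\frac{\eta}{2}\right).\notag
\end{align}
By Theorem~\ref{theorem-equivalent-converge-in-probability} applied with $\eta/2$ in place of $\eta$, the first term is at most $z_1^{\,n}$ for $n\ge n_1$, for some $z_1=z_1(\eta)\in(0,1)$ and $n_1$. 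For the second term, since $S(d)\ge0$, once $n>2|x|/\eta$ the event $\{|S(d)-x|>n\eta/2\}$ is contained in $\{S(d)>n\eta/4\}$, and a Chernoff bound with a fixed $\theta\in(0,\gamma)$ gives
\begin{equation}
\widetilde{\mathbb{P}}_n\!\left(S(d)>\frac{n\eta}{4}\right)\le e^{-\theta\eta n/4}\,\widetilde{\mathbb{E}}_n\!\left[e^{\theta S(d)}\right].
\end{equation}
Granting that $\widetilde{\mathbb{E}}_n[e^{\theta S(d)}]$ stays bounded in $n$, this is $\le e^{-cn}$ for some $c>0$ and $n$ large, and then any $z$ with $\max\{z_1,e^{-c}\}<z<1$, together with $n_0$ large enough (also $n_0>2|x|/\eta$) so that the sum of the two bounds is $\le z^n$ for $n\ge n_0$, gives the claim.

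The one substantive point is the uniform-in-$n$ control of $\widetilde{\mathbb{E}}_n[e^{\theta S(d)}]$: the measure $\widetilde{\mathbb{P}}_n$ is built from the whole block $a_1,\ldots,a_n,c_1,\ldots,c_n$, so one has to check that its action on a functional of only the first $d$ coordinates does not grow with $n$. Using the service change of measure $\widetilde{F}^{S}_{n}(dc_1,\ldots,dc_n)=e^{\gamma s^{-S}_{n}-\kappa^{-S}_{n}(\gamma)}F_n(\bm{1},dc_1,\ldots,dc_n)$,
\begin{equation}
\widetilde{\mathbb{E}}_n\!\left[e^{\theta S(d)}\right]=e^{-\kappa^{-S}_{n}(\gamma)}\,\mathbb{E}\!\left[e^{(\theta-\gamma)\sum_{i=1}^{d}C(i)}\,e^{-\gamma\sum_{i=d+1}^{n}C(i)}\right],
\end{equation}
and under the independence assumption of Proposition~\ref{proposition-arrival-service} the $n$-indexed factors cancel against $e^{-\kappa^{-S}_{n}(\gamma)}$, leaving $\prod_{i=1}^{d}\mathbb{E}[e^{(\theta-\gamma)C(i)}]/\mathbb{E}[e^{-\gamma C(i)}]$, a finite constant depending only on $d$ (finiteness from the cumulant assumptions, $\theta<\gamma$, and $C(i)\ge0$); in the Markov additive setting the same conclusion is the standard consistency of the exponentially tilted family of a Markov additive process. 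Equivalently, one may skip the union bound and rerun the Chernoff estimates in the proof of Theorem~\ref{theorem-equivalent-converge-in-probability} directly with numerator $\mathfrak{S}(d,n-k)-S(d)+x$: for a fixed tilt parameter and fixed $d$ the extra factors $e^{\pm\lambda x}$ and $e^{\pm\lambda S(d)}$ contribute only $o(n)$ to $\frac{1}{n}\log\widetilde{\mathbb{E}}_n[\cdot]$, so the limiting cumulant function is unchanged and the same $z$ and $n_0$ carry over. The remaining work is just bookkeeping of these estimates.
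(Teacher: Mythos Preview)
Your closing ``alternative'' --- rerunning the Chernoff/H\"older estimates of Theorem~\ref{theorem-equivalent-converge-in-probability} directly on the numerator $\mathfrak{S}(n-k)-A(d)+x$ --- is precisely what the paper does: the text before the corollary says one simply replaces $\mathfrak{S}(d)$ by $A(d)$ (plus the constant $x$) in that proof, using Proposition~\ref{proposition-arrival-service}(\ref{proposition-1}) in place of Assumption~\ref{assumption}(\ref{aspt-1}) to bound the factor $\mathbb{E}[e^{-\hat{q}p\theta A(d)}]$. The proof in the paper is literally one sentence, so your alternative is the intended route and nothing more is needed.

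Your primary route (decompose as $\mathfrak{S}(d,n-k)-(S(d)-x)$, union bound, then Chernoff on the $S(d)$ piece) is more effort and has a gap at the abstract level. The ``cancellation'' you invoke to conclude that $\widetilde{\mathbb{E}}_n[e^{\theta S(d)}]$ is bounded in $n$ writes $\mathbb{E}[e^{-\gamma S(n)}]$ as $\prod_{i}\mathbb{E}[e^{-\gamma C(i)}]$, which requires temporal independence of the $C(i)$'s; Proposition~\ref{proposition-arrival-service} only assumes independence \emph{between} the arrival and service sequences, not within the service sequence. Your Markov-additive remark is correct but only covers a special case, while the corollary is stated at the generality of Assumption~\ref{assumption}/Proposition~\ref{proposition-arrival-service}. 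The direct rerun avoids this entirely: the H\"older split is taken under the original measure $\mathbb{P}$, where the only finite-moment input needed for the $A(d)$ factor is exactly what Proposition~\ref{proposition-arrival-service}(\ref{proposition-1}) supplies, mirroring how Assumption~\ref{assumption}(\ref{aspt-1}) handled $\mathfrak{S}(d)$ in Theorem~\ref{theorem-equivalent-converge-in-probability}. So keep the last paragraph and drop the union-bound detour.
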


\begin{proof}
The corollary is a deduction of the above theorem. 
\end{proof}

The following theorem shows the ultimate throughput of the wireless channel for a channel specification.

\begin{theorem}\label{theorem-delay-constrained-capacity}
Under the conditions in Proposition \ref{proposition-arrival-service},
the delay-constrained capacity is asymptotically upper bounded by
\begin{equation}
\lim_{ \substack{ d\rightarrow\infty\\ \epsilon\rightarrow 0} } \overline{C}(d,\epsilon) \le \frac{\kappa^{A}(\theta)}{\theta},
\end{equation}
where $\theta = \{ \theta>0: \kappa^{A}(\theta) + \kappa^{-S}(\theta) = 0 \}$.
\end{theorem}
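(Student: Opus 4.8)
\emph{Proof proposal.} The plan is to reduce the statement to a convexity property of the arrival cumulant generating function $\kappa^{A}$. By definition $\overline{C}(d,\epsilon)$ is the supremum of $\mathbb{E}[A(t)/t]$ over arrival processes whose virtual delay satisfies $\mathbb{P}(D(t)>d)\le\epsilon$ for every $t$, so it suffices to bound the long-run arrival rate $\bar a:=\lim_{t\to\infty}\mathbb{E}[A(t)/t]$ of an arbitrary admissible process and then pass to the supremum and to the limit in $(d,\epsilon)$. First I would apply the earlier theorem giving $\lim_{t\to\infty}\mathbb{E}[M(t)/t]=\dot\kappa(0)$, specialized to $M(t)=A(t)$, to get $\bar a=\dot\kappa^{A}(0)$; thus the quantity to be bounded is exactly the slope of $\kappa^{A}$ at the origin.

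Two ingredients then finish the argument. The first is the identification of the exponent: by Proposition \ref{proposition-arrival-service}, $\kappa(\theta)=\kappa^{A}(\theta)+\kappa^{-S}(\theta)$, and by part (\ref{aspt-4}) of Assumption \ref{assumption} this function has a root $\gamma>0$ with $\dot\kappa(\gamma)>0$, so the $\theta$ in the statement is well defined; moreover Theorem \ref{theorem-decay-rate-delay-backlog} shows that the delay tail then decays exponentially with rate $\kappa^{A}(\theta)>0$, which is what makes the constraint $\mathbb{P}(D(t)>d)\le\epsilon$ satisfiable as $d\to\infty$, $\epsilon\to 0$ and what singles out $\theta$ as the governing exponent. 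The second ingredient is convexity of $\kappa^{A}$ (a limit of the convex log-moment generating functions $\tfrac1t\kappa^{A}_t$) together with $\kappa^{A}(0)=0$, which forces the chord of $\kappa^{A}$ over $[0,\theta]$ to have slope at least the derivative at the origin:
\begin{equation}
\frac{\kappa^{A}(\theta)}{\theta}\ \ge\ \dot\kappa^{A}(0).
\end{equation}
Combining the two ingredients, $\bar a=\dot\kappa^{A}(0)\le\kappa^{A}(\theta)/\theta$ for every admissible arrival process, and taking the supremum over admissible processes and the limit in $(d,\epsilon)$ gives the stated bound.

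The main obstacle I anticipate is not the displayed inequality, which is elementary, but the careful treatment of the limiting regime together with the supremum. One has to argue that as $d\to\infty$ and $\epsilon\to 0$ the admissible set reduces asymptotically to stable sources governed by the root $\theta$ of the statement — a critically loaded or unstable source has $\mathbb{P}(D(t)>d)$ bounded away from $0$ and is eventually inadmissible — that the supremum of $\bar a$ over this set is still controlled by $\kappa^{A}(\theta)/\theta$ and not by a source-dependent exponent, and that the supremum may be interchanged with the limit in $(d,\epsilon)$. Once this bookkeeping is settled, the time-average theorem, the identification of $\theta$ via Proposition \ref{proposition-arrival-service} and Assumption \ref{assumption}, and the convexity inequality above together complete the proof.
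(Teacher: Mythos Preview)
Your proposal is correct and follows the same overall line as the paper, but the core inequality is obtained by a slightly different device. The paper gets
\[
\lim_{t\to\infty}\mathbb{E}\!\left[\frac{A(t)}{t}\right]\le\frac{\kappa^{A}(\theta)}{\theta},\qquad\forall\,\theta>0,
\]
directly from Jensen's inequality applied to $e^{\theta A(t)}$, and only afterwards uses convexity (in the form $\dot\kappa^{A}(\theta)\ge\kappa^{A}(\theta)/\theta$) to argue that the constraint singles out the stability root. You instead first invoke the earlier time-average theorem to identify $\bar a=\dot\kappa^{A}(0)$ and then use the chord-versus-tangent inequality $\kappa^{A}(\theta)/\theta\ge\dot\kappa^{A}(0)$. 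Both routes are valid; the paper's Jensen step is a touch more self-contained because it does not need differentiability of $\kappa^{A}$ at the origin, while your route makes the sharpness of the bound at first order transparent. The ``bookkeeping'' you flag about the supremum and the limiting regime is handled in the paper at the same informal level you adopt, so there is no hidden extra ingredient you are missing.
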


\begin{proof}
Considering $\kappa^A(\theta) = \lim_{t\rightarrow\infty}\frac{1}{t}\log \mathbb{E}e^{\theta A(t)}$, take a logarithm and according to Jensen's inequality, we get
\begin{equation}
\lim_{t\rightarrow\infty} \mathbb{E}\left[ \frac{A(t)}{t} \right] \le \frac{\kappa^A(\theta)}{\theta},\ \forall\theta>0,
\end{equation}
since $d\rightarrow\infty$ implies $t\rightarrow\infty$,
\begin{equation}
\lim_{ \substack{ d\rightarrow\infty\\ \epsilon\rightarrow 0} } \overline{C}(d,\epsilon) \le \lim_{ \substack{ d\rightarrow\infty\\ \epsilon\rightarrow 0} } \sup_{ \mathbb{P}(D>d)\le\epsilon } \frac{\kappa^A(\theta)}{\theta},
\end{equation}
which is optimized at $\theta = \{ \theta>0: \kappa^{A}(\theta) + \kappa^{-S}(\theta) = 0 \}$, 
since $\dot{\kappa}^{A}(\theta)\ge \frac{\kappa^{A}(\theta)}{\theta} \ge 0$, $\forall \theta >0$,
where the first step follows that $\kappa^A(x) \ge \kappa^A(y) + \dot\kappa^A(y)(x-y)$ for all $x$ and $y$ satisfying the Proposition \ref{proposition-arrival-service} \cite{boyd2004convex}.
\end{proof}

\begin{remark}
Since the inequality, $\lim\limits_{t\rightarrow\infty} \mathbb{E}\left[ \frac{A(t)}{t} \right] \le \frac{\kappa^A(\theta)}{\theta},\ \forall\theta>0,$ has no specification on the constraints, besides the delay-constrained throughput $\overline{C}(d,\epsilon) \le \sup_{ \mathbb{P}(D>d)\le\epsilon } \frac{\kappa^A(\theta)}{\theta}$, it is interesting to investigate the backlog-constrained throughput $\overline{C}(b,\xi) \le \sup_{ \mathbb{P}(B>b)\le\xi } \frac{\kappa^A(\theta)}{\theta}$, and the backlog-and-delay-constrained throughput $\overline{C}(b,d,\epsilon, \xi) =\\ \min (\overline{C}(d,\epsilon), \overline{C}(b,\xi))$.
\end{remark}

\subsection{Dual Potency}

In this subsection, 
we show that the dependence in the arrival process and in the service process, both have an impact on the queue performance, in case the dependence manipulation in the arrival process is not available, we can transfer to the dependence manipulation in the service process, vice versa.
We use the standard definition of convex order $\le_{cx}$ and supermodular order $\le_{sm}$ in \cite{muller2002comparison}\cite{shaked2007stochastic}, i.e., random variables $X \le_{cx} Y$ if $\mathbb{E}[\phi(X)] \le \mathbb{E}[\phi(Y)]$ for all convex function $\phi$, and random vectors $\bm{X} \le_{sm} \bm{Y}$ if $\mathbb{E}[\phi(\bm{X})] \le \mathbb{E}[\phi(\bm{Y})]$ for all supermodular function $\phi$.

In the previous subsection, we consider the change of measure, with the increment process $\mathfrak{S}(t) = A(t)-S(t)$,
\begin{multline}
\widetilde{F}_{n}\left( d a_1, \ldots, d a_n, d c_1, \ldots, d c_n \right) \\
= e^{\gamma \mathfrak{S}_n - \kappa^{\mathfrak{S}}_n(\gamma)}{F}_{n}\left( d a_1, \ldots, d a_n, d c_1, \ldots, d c_n \right),\ \gamma>0,
\end{multline}
on the positive part of the parameter axis,
the stability equation is expressed as
\begin{equation}
\kappa^{A}(\theta) + \kappa^{-S}(\theta) = 0,\ \theta>0, 
\end{equation}
the optimal $\gamma = \{\theta>0: \kappa^{A}(\theta) + \kappa^{-S}(\theta) =0 \}$ is the asymptotic decay rate of backlog and the $-\kappa^{-S}(\gamma) = \kappa^{A}(\gamma)$ is the asymptotic decay rate of delay.
This change of measure, with the negative increment process $-\mathfrak{S}(t) = S(t)-A(t)$, has an equivalent expression
\begin{multline}
\widetilde{F}_{n}\left( d a_1, \ldots, d a_n, d c_1, \ldots, d c_n \right) \\
= e^{-\gamma \mathfrak{S}_n - \kappa^{-\mathfrak{S}}_n(\gamma)}{F}_{n}\left( d a_1, \ldots, d a_n, d c_1, \ldots, d c_n \right),\ \gamma<0,
\end{multline}
on the negative part of the parameter axis,
we get a dual expression of the stability equation
\begin{equation}
\kappa^{S}(\theta) + \kappa^{-A}(\theta) = 0,\ \theta<0, 
\end{equation}
the optimal $-\gamma = -\{\theta<0: \kappa^{S}(\theta) + \kappa^{-A}(\theta) =0 \}$ is the asymptotic decay rate of backlog and the $\kappa^{-A}(\gamma) = -\kappa^{S}(\gamma)$ is the asymptotic decay rate of delay.
It's a matter of taste to choose one or the other and each has a direct implication on the problem under consideration.

The sufficient conditions for the ordering of the asymptotic decay rates of delay and backlog are shown in the following theorem.

\begin{theorem}\label{theorem-ordering-arrival-and-service}
Let $\mathfrak{S}(t)=A(t)-S(t)$ and $\widetilde{\mathfrak{S}}(t)= \widetilde{A}(t) - \widetilde{S}(t)$.
Then
\begin{IEEEeqnarray}{rCl}
\mathfrak{S}(t) \le_{cx} \widetilde{\mathfrak{S}}(t),\ \forall{t}\in\mathbb{N}  
&\implies& 0 < \widetilde{\gamma} \le \gamma,\\
-\mathfrak{S}(t) \le_{cx} -\widetilde{\mathfrak{S}}(t),\ \forall{t}\in\mathbb{N} 
&\implies& 0 > \widetilde{\gamma} \ge \gamma,
\end{IEEEeqnarray}
particularly, if $\mathfrak{S}(t)$ and $\widetilde{\mathfrak{S}}(t)$ have identical service process $S(t)$, then
\begin{IEEEeqnarray}{rCl}
\mathfrak{S}(t) \le_{cx} \widetilde{\mathfrak{S}}(t),\ \forall{t}\in\mathbb{N}  
&\implies& 0 < \widetilde{\gamma} \le \gamma  \\
&\implies& \kappa^{-S}(\gamma) \le \widetilde{\kappa}^{-S}(\widetilde{\gamma}) < 0,
\end{IEEEeqnarray}
and if $\mathfrak{S}(t)$ and $\widetilde{\mathfrak{S}}(t)$ have identical arrival process $A(t)$, then
\begin{IEEEeqnarray}{rCl}
-\mathfrak{S}(t) \le_{cx} -\widetilde{\mathfrak{S}}(t),\ \forall{t}\in\mathbb{N} 
&\implies& 0 > \widetilde{\gamma} \ge \gamma \\
&\implies&  \kappa^{-A}(\gamma) \ge \widetilde{\kappa}^{-A}(\widetilde{\gamma}) > 0.
\end{IEEEeqnarray}
\end{theorem}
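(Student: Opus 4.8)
The plan is to reduce the entire statement to one elementary fact: for every real $\theta$ the map $x\mapsto e^{\theta x}$ is convex, so a convex ordering of the increment processes is inherited by all their moment generating functions, hence by the limiting cumulant functions $\kappa$, and the stability equation then converts that into an ordering of the roots $\gamma$. First I would establish the transfer lemma: if $\mathfrak{S}(t)\le_{cx}\widetilde{\mathfrak{S}}(t)$ for every $t\in\mathbb{N}$, then $\kappa^{\mathfrak{S}}(\theta)\le\widetilde{\kappa}^{\mathfrak{S}}(\theta)$ wherever both are finite. Indeed, inserting the convex test function $\phi(x)=e^{\theta x}$ into the definition of $\le_{cx}$ gives $\mathbb{E}e^{\theta\mathfrak{S}(t)}\le\mathbb{E}e^{\theta\widetilde{\mathfrak{S}}(t)}$; taking logarithms, dividing by $t$, and letting $t\to\infty$ (the limits exist by Assumption \ref{assumption}, resp.\ Proposition \ref{proposition-arrival-service}) yields $\kappa^{\mathfrak{S}}\le\widetilde{\kappa}^{\mathfrak{S}}$. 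The affine test functions $\phi(x)=\pm x$ give in addition $\dot\kappa^{\mathfrak{S}}(0)=\dot{\widetilde\kappa}^{\mathfrak{S}}(0)$, i.e.\ equal mean drift, so a root $\widetilde\gamma$ exists on the same side of $0$ whenever $\gamma$ does. Since $\le_{cx}$ is unchanged under $x\mapsto-x$, the hypothesis $-\mathfrak{S}(t)\le_{cx}-\widetilde{\mathfrak{S}}(t)$ is the same hypothesis, and applied to the negative increment it likewise yields $\kappa^{-\mathfrak{S}}(\theta)\le\widetilde{\kappa}^{-\mathfrak{S}}(\theta)$ for all $\theta$.

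Next I would pass from the ordering of the $\kappa$'s to the ordering of the roots. Both $\kappa^{\mathfrak{S}}$ and $\widetilde{\kappa}^{\mathfrak{S}}$ are convex and vanish at the origin, and by part (\ref{aspt-4}) of Assumption \ref{assumption}, applied to each model, they satisfy $\kappa^{\mathfrak{S}}(\gamma)=0$ with $\dot\kappa^{\mathfrak{S}}(\gamma)>0$; convexity then forces $\kappa^{\mathfrak{S}}<0$ on $(0,\gamma)$ and $\kappa^{\mathfrak{S}}>0$ on $(\gamma,\infty)$, and likewise for the tilde function around $\widetilde\gamma$. Since $\widetilde{\kappa}^{\mathfrak{S}}(\gamma)\ge\kappa^{\mathfrak{S}}(\gamma)=0$ while $\widetilde{\kappa}^{\mathfrak{S}}$ is strictly negative throughout $(0,\widetilde\gamma)$, the value $\gamma$ cannot lie in $(0,\widetilde\gamma)$, so $0<\widetilde\gamma\le\gamma$; by Theorem \ref{theorem-decay-rate-delay-backlog} this is the ordering of the backlog decay rates. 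The second implication is the mirror image on the negative half-line: the dual root $\gamma<0$ solves $\kappa^{-\mathfrak{S}}(\gamma)=0$ and $\kappa^{-\mathfrak{S}}$ is strictly negative on $(\gamma,0)$, so $\widetilde{\kappa}^{-\mathfrak{S}}(\gamma)\ge\kappa^{-\mathfrak{S}}(\gamma)=0$ forces $\gamma$ outside $(\widetilde\gamma,0)$, giving $0>\widetilde\gamma\ge\gamma$.

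For the two refined chains I would use monotonicity facts that come for free from nonnegativity of the cumulative processes. Since $S(t)\ge0$, we have $\kappa^{-S}(\theta)=\lim_{t\to\infty}\frac{1}{t}\log\mathbb{E}e^{-\theta S(t)}\le0$ for $\theta\ge0$; a convex function that vanishes at $0$ and is bounded above by $0$ on $[0,\infty)$ is non-increasing there (any point with positive right-derivative would push the function above $0$ further out), so $\kappa^{-S}$ is non-increasing on $[0,\infty)$, and symmetrically $\kappa^{-A}\ge0$ and is non-increasing on $(-\infty,0]$. If the two models share the service process then $\widetilde{\kappa}^{-S}=\kappa^{-S}$, so $0<\widetilde\gamma\le\gamma$ together with monotonicity gives $\kappa^{-S}(\gamma)\le\kappa^{-S}(\widetilde\gamma)=\widetilde{\kappa}^{-S}(\widetilde\gamma)$; the stability equation $\widetilde{\kappa}^{-S}(\widetilde\gamma)=-\widetilde{\kappa}^{A}(\widetilde\gamma)$ with $\widetilde{\kappa}^{A}(\widetilde\gamma)>0$ (since $A(t)\ge0$ and $\widetilde\gamma>0$) supplies the strict bound $\widetilde{\kappa}^{-S}(\widetilde\gamma)<0$, and the resulting chain is exactly the statement that the delay decay rate of $\mathfrak{S}$ dominates that of $\widetilde{\mathfrak{S}}$. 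The case of a common arrival process is the exact mirror, using monotonicity of $\kappa^{-A}$ on $(-\infty,0]$, the ordering $0>\widetilde\gamma\ge\gamma$, and $\widetilde{\kappa}^{-A}(\widetilde\gamma)=-\widetilde{\kappa}^{S}(\widetilde\gamma)>0$.

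The conceptual core is the transfer lemma; everything afterward is convex-analytic bookkeeping. The step most in need of care is keeping the signs consistent between the positive-axis stability equation $\kappa^{A}(\theta)+\kappa^{-S}(\theta)=0$ and its negative-axis dual $\kappa^{S}(\theta)+\kappa^{-A}(\theta)=0$, and making sure the tilde model is assumed to satisfy Assumption \ref{assumption} (or Proposition \ref{proposition-arrival-service}) so that $\widetilde\gamma$ exists and lies strictly on the correct side of $0$. The interchange of limit with logarithm and with differentiation at $0$ and at $\gamma$ is justified by the finiteness and differentiability clauses of the Assumption and is otherwise routine.
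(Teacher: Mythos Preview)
Your proof is correct and follows the same route as the paper's: convexity of $x\mapsto e^{\theta x}$ transfers the convex order to $\kappa_t$ and then to $\kappa$, and convexity of $\kappa$ together with $\kappa(0)=\kappa(\gamma)=0$ forces $\widetilde\gamma\le\gamma$; the final chains come from monotonicity of $\kappa^{-S}$ (resp.\ $\kappa^{-A}$). You supply more justification than the paper does---in particular your derivation of the monotonicity of $\kappa^{-S}$ on $[0,\infty)$ from nonnegativity and convexity, and your observation that the two hypotheses are equivalent because $\le_{cx}$ is sign-symmetric---but the skeleton is identical.
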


\begin{proof}
Since the exponential function is convex, $\mathfrak{S}(t) \le_{cx} \widetilde{\mathfrak{S}}(t)$ implies $\mathbb{E}e^{\theta \mathfrak{S}(t)} \le \mathbb{E}e^{\theta \widetilde{\mathfrak{S}}(t)}$, thus $\kappa_t(\theta) \le \widetilde{\kappa}_t({\theta})$ and subsequently $\kappa(\theta) \le \widetilde{\kappa}({\theta})$, $\forall{\theta}\in\mathbb{R}$.
Since $\kappa$ is convex with $\kappa(0) = \kappa(\gamma) = 0$, which implies $0<\kappa(\theta) \le \widetilde{\kappa}({\theta})$, $\forall{\theta}>\gamma$, therefore, we must have $0 < \widetilde{\gamma} \le \gamma$. Since $\kappa^{-S}$ is decreasing, we get $\kappa^{-S}(\gamma) \le \widetilde{\kappa}^{-S}(\widetilde{\gamma}) < 0$.
The other results follow analogically.
\end{proof}

\begin{remark}
The convex ordering is a conservatively sufficient condition for the ordering of the asymptotic decay rate, because it requires that the whole function lies above the other and it is not necessary for the asymptotic decay rate ordering.
\end{remark}

The following theorem shows that the manipulation of the dependence in the arrival process or the service process has an impact on the increment process or negative increment process of the queue. This is the mathematical foundation for dependence control through the arrival process or service process.

\begin{theorem}\label{theorem-ordering-arrival-or-service}
Let $\mathfrak{S}(t)=A(t)-S(t)$ and $\widetilde{\mathfrak{S}}(t)= \widetilde{A}(t) - \widetilde{S}(t)$.
If $\mathfrak{S}(t)$ and $\widetilde{\mathfrak{S}}(t)$ have identical service process $S(t)$, then
\begin{eqnarray}
A(t) \le_{cx} \widetilde{A}(t) &\implies& \mathfrak{S}(t) \le_{cx} \widetilde{\mathfrak{S}}(t).
\end{eqnarray}
If $\mathfrak{S}(t)$ and $\widetilde{\mathfrak{S}}(t)$ have identical arrival process $A(t)$, then
\begin{eqnarray}
S(t) \le_{cx} \widetilde{S}(t) &\implies& -\mathfrak{S}(t) \le_{cx} -\widetilde{\mathfrak{S}}(t).
\end{eqnarray}
\end{theorem}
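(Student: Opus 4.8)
The plan is to reduce everything to two elementary stability properties of the convex order and a conditioning argument on the common process. First I would record the two facts I need, both standard (see \cite{muller2002comparison,shaked2007stochastic}): (i) the convex order is translation invariant, i.e. $U \le_{cx} V$ implies $U+c \le_{cx} V+c$ for every constant $c$, which follows by applying the definition to the convex map $x \mapsto \phi(x+c)$; and (ii) the convex order is preserved under mixing, i.e. if $U_s \le_{cx} V_s$ for every $s$ in the range of a random variable $Z$, then the $Z$-mixtures satisfy $U_Z \le_{cx} V_Z$, which follows by conditioning on $Z$ and integrating the pointwise inequality $\mathbb{E}[\phi(U_s)] \le \mathbb{E}[\phi(V_s)]$ against the law of $Z$.

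For the first implication I would condition on the shared service process. Since the two queues have the same $S(t)$ and, in the regime of interest, the manipulation from $A(t)$ to $\widetilde{A}(t)$ leaves the arrival process independent of $S(t)$ (equivalently, does not change its coupling to $S(t)$; under the hypotheses of Proposition \ref{proposition-arrival-service} independence holds outright), the conditional law of $A(t)-S(t)$ given $S(t)=s$ is the law of $A(t)-s$, and similarly with $\widetilde{A}(t)$. Fact (i) applied with $c=-s$ turns the hypothesis $A(t) \le_{cx} \widetilde{A}(t)$ into $A(t)-s \le_{cx} \widetilde{A}(t)-s$ for every $s$, and fact (ii) with $Z=S(t)$ then gives $A(t)-S(t) \le_{cx} \widetilde{A}(t)-S(t)$, that is $\mathfrak{S}(t) \le_{cx} \widetilde{\mathfrak{S}}(t)$. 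Carrying this out for each $t\in\mathbb{N}$ proves the first statement.

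The second implication is the mirror image: now $A(t)$ is the common process, $-\mathfrak{S}(t)=S(t)-A(t)$ and $-\widetilde{\mathfrak{S}}(t)=\widetilde{S}(t)-A(t)$. Conditioning on $A(t)=a$, the conditional law of $S(t)-A(t)$ given $A(t)=a$ is the law of $S(t)-a$, and likewise with $\widetilde{S}(t)$. Applying fact (i) with $c=-a$ to $S(t)\le_{cx}\widetilde{S}(t)$ and then fact (ii) with $Z=A(t)$ yields $S(t)-A(t) \le_{cx} \widetilde{S}(t)-A(t)$, i.e. $-\mathfrak{S}(t) \le_{cx} -\widetilde{\mathfrak{S}}(t)$, again for every $t$.

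The one delicate point — the place where I would be most careful in the full write-up — is the conditioning step: the chain of inequalities needs that replacing $A(t)$ by $\widetilde{A}(t)$ (resp. $S(t)$ by $\widetilde{S}(t)$) does not disturb the stochastic relation with the untouched process, so that conditioning on the common process leaves only a deterministic shift of the controlled marginal. This is automatic under the independence assumption of Proposition \ref{proposition-arrival-service}; in the general dependent setting it is exactly the statement that only the ``free dimension'' of the process is being manipulated, the regime this third principle is built around, and it can be phrased precisely as: the conditional law of $A(t)$ given $S(t)=s$ under the original system is $\le_{cx}$ the conditional law of $\widetilde{A}(t)$ given $S(t)=s$ under the new system for almost every $s$. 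Granting that, the rest is just the two convex-order stability facts invoked above.
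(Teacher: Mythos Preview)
Your argument is correct and is essentially the paper's own proof: the paper fixes a convex $f$, observes that $x\mapsto f(x-z)$ is convex for every $z$ (your translation-invariance fact (i)), deduces $g(z)=\mathbb{E}[f(X-z)]\le \mathbb{E}[f(Y-z)]=h(z)$ pointwise, and then passes to $\mathbb{E}[g(Z)]\le\mathbb{E}[h(Z)]$ by conditioning on the common process $Z$ (your mixing fact (ii)). The one place you are more careful than the paper is in flagging that the identity $\mathbb{E}[f(X-Z)]=\mathbb{E}[g(Z)]$ tacitly uses independence of the manipulated process from the fixed one; the paper's proof uses this silently, so your caveat is well placed rather than a deviation.
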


\begin{proof}
Consider a convex function $f$. 
Let $X=S(t)$, $Y=\widetilde{S}(t)$, and $Z=A(t)$, $\forall{t}$.
Let $g(z)= \mathbb{E}[f(X-z)]$ and $h(z)= \mathbb{E}[f(Y-z)]$. As the function $x \mapsto f(x-z)$ is convex for all $z\in\mathbb{R}$, $X\le_{cx}Y$ implies $g(z)\le h(z)$ for all $z\in\mathbb{R}$. Thus
\begin{equation}
\mathbb{E}[f(X-Z)] = \mathbb{E}[g(Z)] \le \mathbb{E}[h(Z)] = \mathbb{E}[f(Y-Z)].
\end{equation}
The results follows directly.
\end{proof}

The following results show that the both the arrival and the service have an impact on the system performance, in other words, the stochastic dependence in both arrival and service can be taken advantage of for performance improvement.

\begin{corollary}
Let $\mathfrak{S}(t)=A(t)-S(t)$ and $\widetilde{\mathfrak{S}}(t)= \widetilde{A}(t) - \widetilde{S}(t)$.
If $\mathfrak{S}(t)$ and $\widetilde{\mathfrak{S}}(t)$ have identical service process $S(t)$, then
\begin{IEEEeqnarray}{rCl}
A(t) \le_{cx} \widetilde{A}(t),\ \forall{t}\in\mathbb{N} 
&\implies& 0 < \widetilde{\gamma} \le \gamma  \\
&\implies& \kappa^{-S}(\gamma) \le \widetilde{\kappa}^{-S}(\widetilde{\gamma}) < 0.
\end{IEEEeqnarray}
If $\mathfrak{S}(t)$ and $\widetilde{\mathfrak{S}}(t)$ have identical arrival process $A(t)$, then
\begin{IEEEeqnarray}{rCl}
S(t) \le_{cx} \widetilde{S}(t),\ \forall{t}\in\mathbb{N} 
&\implies& 0 > \widetilde{\gamma} \ge \gamma \\
&\implies&  \kappa^{-A}(\gamma) \ge \widetilde{\kappa}^{-A}(\widetilde{\gamma}) > 0.
\end{IEEEeqnarray}
\end{corollary}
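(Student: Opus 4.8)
The plan is to obtain the corollary purely by composing Theorem~\ref{theorem-ordering-arrival-or-service} with Theorem~\ref{theorem-ordering-arrival-and-service}; no new estimate is required. For the first assertion I would start from the standing hypotheses that $\mathfrak{S}(t)$ and $\widetilde{\mathfrak{S}}(t)$ share the same service process $S(t)$ and that $A(t)\le_{cx}\widetilde{A}(t)$ for every $t\in\mathbb{N}$. The first implication of Theorem~\ref{theorem-ordering-arrival-or-service} then yields $\mathfrak{S}(t)\le_{cx}\widetilde{\mathfrak{S}}(t)$ for all $t$. Feeding this into the ``identical service process'' branch of Theorem~\ref{theorem-ordering-arrival-and-service} gives the chain $0<\widetilde{\gamma}\le\gamma$ followed by $\kappa^{-S}(\gamma)\le\widetilde{\kappa}^{-S}(\widetilde{\gamma})<0$, which is exactly the displayed conclusion.

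For the second assertion I would argue symmetrically: assuming $\mathfrak{S}(t)$ and $\widetilde{\mathfrak{S}}(t)$ share the arrival process $A(t)$ and $S(t)\le_{cx}\widetilde{S}(t)$ for all $t$, the second implication of Theorem~\ref{theorem-ordering-arrival-or-service} delivers $-\mathfrak{S}(t)\le_{cx}-\widetilde{\mathfrak{S}}(t)$, and then the ``identical arrival process'' branch of Theorem~\ref{theorem-ordering-arrival-and-service} produces $0>\widetilde{\gamma}\ge\gamma$ and $\kappa^{-A}(\gamma)\ge\widetilde{\kappa}^{-A}(\widetilde{\gamma})>0$. The one point to watch is the matching of hypotheses: the refined cumulant inequalities in Theorem~\ref{theorem-ordering-arrival-and-service} are stated precisely under the ``identical service'' (respectively ``identical arrival'') restriction, and these are exactly the assumptions imposed in the corollary, so the two theorems dovetail without any extra argument.

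There is essentially no obstacle; the substance of the corollary is the observation that a dependence manipulation expressed as a convex-order relation at the level of the arrival (or service) process propagates, via Theorem~\ref{theorem-ordering-arrival-or-service}, to the queue increment process, where Theorem~\ref{theorem-ordering-arrival-and-service} has already converted it into an ordering of the asymptotic decay rates and of the cumulant functions at the stability root. If I wanted to be fully explicit I would additionally recall why the strict sign ``$<0$'' (respectively ``$>0$'') survives: it stems from $\kappa^{-S}$ being strictly decreasing with $\kappa^{-S}(\gamma)=-\kappa^{A}(\gamma)<0$ at the root of the stability equation (and dually for $\kappa^{-A}$), a fact already used inside the proof of Theorem~\ref{theorem-ordering-arrival-and-service}, so nothing beyond citing the two theorems is needed.
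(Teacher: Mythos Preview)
Your proposal is correct and matches the paper's own proof, which simply states that the corollary follows directly from Theorem~\ref{theorem-ordering-arrival-and-service} and Theorem~\ref{theorem-ordering-arrival-or-service}. Your write-up spells out the composition in more detail than the paper does, but the logical structure is identical.
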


\begin{proof}
The proof follows directly from Theorem \ref{theorem-ordering-arrival-and-service} and \ref{theorem-ordering-arrival-or-service}.
\end{proof}

The following result complements that the random variables with comonotonicity have the maximum summation in convex order \cite{dhaene2002concept}.

\begin{theorem}
Given an arbitrary sequence of random variables, $X_1, X_2, \ldots, X_n$, the mean of these random variables has the minimum summation in convex order, i.e.,
\begin{equation}
\sum_{i=1}^{n} \mathbb{E}\left[ X_i \right] \le_{cx} \sum_{i=1}^{n} X_i.
\end{equation}
\end{theorem}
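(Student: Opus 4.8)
The plan is to reduce the claim directly to Jensen's inequality via the stated definition of convex order. Write $S=\sum_{i=1}^{n}X_i$ and $m=\sum_{i=1}^{n}\mathbb{E}[X_i]$; by linearity of expectation $m=\mathbb{E}[S]$, and $m$ is a deterministic constant, so the left-hand side of the asserted relation is the degenerate random variable concentrated at $m$. The goal is therefore to show $m\le_{cx}S$, i.e.\ $\mathbb{E}[\phi(m)]\le\mathbb{E}[\phi(S)]$ for every convex $\phi$.

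First I would fix an arbitrary convex function $\phi$. Since the left-hand variable is constant, $\mathbb{E}[\phi(m)]=\phi(m)=\phi(\mathbb{E}[S])$. Next I would invoke Jensen's inequality, which gives $\phi(\mathbb{E}[S])\le\mathbb{E}[\phi(S)]$. Chaining these two relations yields $\mathbb{E}[\phi(m)]\le\mathbb{E}[\phi(S)]$, and since $\phi$ was an arbitrary convex function this is precisely $m\le_{cx}S$, that is, $\sum_{i=1}^{n}\mathbb{E}[X_i]\le_{cx}\sum_{i=1}^{n}X_i$. Note that no independence or dependence assumption on the $X_i$ is used anywhere, consistent with the statement being for an arbitrary sequence.

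There is essentially no hard step here; the points to be careful about are only regularity conditions. One should assume each $X_i$ is integrable so that $m$ is well-defined and finite, and interpret the convex-order relation in the usual way, allowing $\mathbb{E}[\phi(S)]=+\infty$. It is also worth remarking that the equality of means that convex order entails holds automatically, since $\mathbb{E}[m]=m=\mathbb{E}[S]$ (indeed $\mathbb{E}[\phi(X)]\le\mathbb{E}[\phi(Y)]$ for all convex $\phi$ already forces $\mathbb{E}[X]=\mathbb{E}[Y]$ by taking $\phi(x)=x$ and $\phi(x)=-x$). This bound is the natural lower-end counterpart of the comonotonic upper bound cited immediately before the statement: among all dependence structures with the given marginals, the independent/antithetic extreme is replaced here by the fully degenerate comparison, and the constant equal to the common mean is the $\le_{cx}$-smallest element of the whole family.
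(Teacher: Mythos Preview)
Your proof is correct and follows essentially the same approach as the paper: both reduce the claim directly to Jensen's inequality applied to $\sum_{i=1}^{n} X_i$, using that the left-hand side is the constant $\mathbb{E}\bigl[\sum_{i=1}^{n} X_i\bigr]$. Your version is slightly more explicit about the degenerate nature of the left side and about regularity, but the argument is the same one-line application of Jensen.
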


\begin{proof}
According to Jensen's inequality,
\begin{equation}
\phi\left( \mathbb{E}\left[ \sum_{i=1}^{n} X_i \right]  \right) \le \mathbb{E}\left[ \phi\left( \sum_{i=1}^{n} X_i  \right) \right],\ \forall \phi, 
\end{equation}
where $\phi$ is a convex function.
\end{proof}

The application of the above result is the following corollary, particularly, it is an extension of the backlog result for stationary scenario in \cite{glynn1994logarithmic}.

\begin{corollary}\label{corollary-constant-optimal}
Under the conditions in Proposition \ref{proposition-arrival-service}.
For stationary or non-stationary arrival and service processes, if the mean of the time average exist, i.e.,
\begin{equation}
\mathbb{E}[X] = \lim_{n\rightarrow\infty} \sum_{i=1}^{n} \mathbb{E}\left[ X_i \right],
\end{equation}
then the situation with the constant arrival process with the same mean or constant service process with the same mean has the largest asymptotic decay rate of delay and backlog.
\end{corollary}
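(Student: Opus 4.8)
The plan is to reduce the statement to the convex-order minimality of the mean established in the theorem immediately above, and then feed the resulting orderings into Theorems \ref{theorem-ordering-arrival-or-service} and \ref{theorem-ordering-arrival-and-service}; the two assertions (constant arrival, constant service) are symmetric, so I would spell out only the arrival case.

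First I would hold the service process $S(t)$ fixed and, alongside the random arrival $A(t)=\sum_{i=1}^{t}a(i)$, introduce the deterministic process $\widehat A(t):=\mathbb{E}[A(t)]=\sum_{i=1}^{t}\mathbb{E}[a(i)]$. By the preceding theorem applied with $X_i=a(i)$, \emph{every} arrival process $A'(t)=\sum_{i=1}^{t}a'(i)$ with the same mean satisfies $\widehat A(t)=\sum_{i=1}^{t}\mathbb{E}[a'(i)]\le_{cx}A'(t)$ for all $t$, since a constant is the minimum in convex order among random variables of a prescribed mean. Under the Cesàro-mean hypothesis $\mathbb{E}[X]=\lim_{n}\frac{1}{n}\sum_{i=1}^{n}\mathbb{E}[X_i]$ one has $\widehat A(t)/t\to\mathbb{E}[X]$, hence $\kappa^{A}(\theta)=\mathbb{E}[X]\,\theta$ for $\widehat A$, which is exactly the cumulant limit of the genuinely constant-rate arrival process of rate $\mathbb{E}[X]$; thus $\widehat A$ meets the conditions of Proposition \ref{proposition-arrival-service}, has the same stability root, and has the same asymptotic decay rates as that constant-rate process, so it is legitimate to use $\widehat A$ as the representative of ``the constant arrival process with the same mean.''

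Next I would chain the implications. Since $\widehat A(t)\le_{cx}A'(t)$ for all $t$ and the service process is the same, Theorem \ref{theorem-ordering-arrival-or-service} gives $\widehat A(t)-S(t)\le_{cx}A'(t)-S(t)$ for all $t$, and then Theorem \ref{theorem-ordering-arrival-and-service} yields $0<\gamma'\le\widehat\gamma$ together with $\kappa^{-S}(\widehat\gamma)\le\kappa^{-S}(\gamma')<0$. Because the asymptotic decay rate of backlog is $\gamma$ and that of delay is $-\kappa^{-S}(\gamma)=\kappa^{A}(\gamma)$, this says precisely that the constant arrival process with the same mean has backlog and delay decay rates no smaller than those of any competing arrival process with the same mean, and, since it itself belongs to the comparison family, it attains the largest. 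The constant-service claim is identical on the negative parameter axis: with $\widehat S(t):=\mathbb{E}[S(t)]$, the arrival held fixed, and $\widehat S(t)\le_{cx}S'(t)$, the ``$-\mathfrak S$'' halves of Theorems \ref{theorem-ordering-arrival-or-service} and \ref{theorem-ordering-arrival-and-service} give $\widehat\gamma\le\gamma'<0$ and $\kappa^{-A}(\widehat\gamma)\ge\kappa^{-A}(\gamma')>0$, i.e.\ the largest $|\gamma|$ and the largest delay decay rate $\kappa^{-A}$.

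The only genuine obstacle I anticipate is the bookkeeping around the phrase ``constant process with the same mean'' for non-stationary inputs: one must justify substituting the deterministic surrogate $\widehat A(t)=\mathbb{E}[A(t)]$ for the literally constant-rate process without disturbing any asymptotic decay rate or the stability root, and that is exactly where the Cesàro-mean hypothesis enters. Everything else is a mechanical composition of the earlier theorems, plus the observation that $\widehat A$ lies in the comparison family, so the bound over that family is attained.
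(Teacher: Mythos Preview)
Your proposal is correct and follows essentially the same route the paper intends: the corollary is stated immediately after the Jensen-based theorem $\sum_{i}\mathbb{E}[X_i]\le_{cx}\sum_{i}X_i$ precisely so that one feeds this convex-order minimality into Theorems \ref{theorem-ordering-arrival-or-service} and \ref{theorem-ordering-arrival-and-service} to obtain the decay-rate ordering, exactly as you do. Your explicit handling of the non-stationary case via the deterministic surrogate $\widehat A(t)=\mathbb{E}[A(t)]$ and the Ces\`aro-mean hypothesis is more careful than the paper, which leaves that bookkeeping implicit.
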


\subsubsection{More than Independence}

We classify the dependence into three types, i.e., positive dependence, independence, and negative dependence.
Intuitively, positive dependence implies that large or small values of random variables tend to occur together, while negative dependence implies that large values of one variable tend to occur together with small values of others \cite{denuit2006actuarial}.
We illustrate the dependence classification in terms of the service process, i.e., wireless channel capacity, similar arguments hold analogically for the arrival process.

\begin{proposition}
The capacity $\bm{C} = \left( C(1), \ldots, C(t) \right)$ is said to have a positive dependence structure $\bf{C}_{+}$ in the sense of supermodular order, if
\begin{equation}
\bf{C}_{\perp} \le_{sm} \bf{C}_{+},
\end{equation}
or a negative dependence structure $\bf{C}_{-}$ in the sense of supermodular order, if
\begin{equation}
\bf{C}_{-} \le_{sm}\bf{C}_{\perp},
\end{equation}
where $\bf{C}_{\perp}$ has an independence structure. 
\end{proposition}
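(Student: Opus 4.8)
The plan is to treat this as a well-posedness statement: the proposition introduces three couplings $\bm{C}_{\perp}$, $\bm{C}_{+}$, $\bm{C}_{-}$ of $\bm{C}$ and a trichotomy (positive dependence / independence / negative dependence), and what must be shown is that these objects exist, that comparison to $\bm{C}_{\perp}$ in $\le_{sm}$ is meaningful, and that the classification is coherent. Write $F_i$ for the law of $C(i)$; all three vectors live on $\mathbb{R}^{t}$ with the same marginals $F_1,\dots,F_t$ and differ only in their copula. First I would record that $\bm{C}_{\perp}$ always exists, since the product measure $F_1\otimes\cdots\otimes F_t$ is a valid distribution with these marginals, realized by $(F_1^{-1}(U_1),\dots,F_t^{-1}(U_t))$ for i.i.d.\ uniforms $U_i$; so the reference point of the classification is available unconditionally. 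Next I would check that $\le_{sm}$ behaves well on this class: every function of a single coordinate $x\mapsto f(x_i)$ is supermodular and so is its negative, hence $\bm{X}\le_{sm}\bm{Y}$ automatically forces equal marginals, consistent with the fixed-marginal requirement; reflexivity and transitivity are immediate from the definition, and antisymmetry follows because the supermodular cone is distribution-determining (e.g.\ the upper-orthant indicators $x\mapsto\mathbf{1}\{x_1>a_1,\dots,x_t>a_t\}$ are supermodular, so equality of all supermodular expectations forces equality of survival functions). Thus $\le_{sm}$ is a partial order on couplings with marginals $F_1,\dots,F_t$, and ``strictly $\le_{sm}$-above $\bm{C}_{\perp}$'' and ``strictly $\le_{sm}$-below $\bm{C}_{\perp}$'' are mutually exclusive, so the trichotomy is unambiguous.

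The one part carrying genuine content is the non-vacuousness of the two dependence classes and the identification of their extremes. On the positive side, the comonotone coupling $\bm{C}^{\mathrm{co}}=(F_1^{-1}(U),\dots,F_t^{-1}(U))$ driven by a single uniform $U$ is, by the classical Fr\'echet--Hoeffding upper bound together with the fact that comonotonicity is $\le_{sm}$-maximal among couplings with given marginals \cite{shaked2007stochastic}, at least as large as $\bm{C}_{\perp}$ in $\le_{sm}$; so a positive dependence structure always exists and has a canonical extreme. The negative side is where care is required: for $t=2$ the countermonotone coupling is the $\le_{sm}$-least element, but for $t\ge 3$ there is in general no countermonotone coupling and no $\le_{sm}$-least coupling, so ``negative dependence structure'' must be read as \emph{any} coupling lying $\le_{sm}$-below $\bm{C}_{\perp}$ rather than a distinguished one, with non-vacuousness obtained by exhibiting small antithetic perturbations of $\bm{C}_{\perp}$. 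I expect this dimension-$\ge 3$ asymmetry to be the main obstacle, and it is exactly why the paper later builds the negative structure by explicit copula manipulation instead of invoking a universal minimizer.

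Finally I would tie the definition to the queueing quantities, which is presumably why it is stated as a proposition rather than a definition. Since $x\mapsto\phi(\sum_i x_i)$ is supermodular for every convex $\phi$ (because $\sum_i(x\vee y)_i+\sum_i(x\wedge y)_i=\sum_i x_i+\sum_i y_i$ while $\sum_i(x\vee y)_i\ge\max(\sum_i x_i,\sum_i y_i)$, so the two totals majorize the original pair), $\bm{X}\le_{sm}\bm{Y}$ implies $\sum_i X_i\le_{cx}\sum_i Y_i$. Applied to the cumulative capacity $S(t)=\sum_{i=1}^{t}C(i)$, a positive dependence structure $\bm{C}_{\perp}\le_{sm}\bm{C}_{+}$ makes $S(t)$ larger in convex order and a negative one $\bm{C}_{-}\le_{sm}\bm{C}_{\perp}$ makes it smaller, which is precisely the input consumed by Theorems~\ref{theorem-ordering-arrival-and-service}--\ref{theorem-ordering-arrival-or-service} and Corollary~\ref{corollary-constant-optimal}. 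Once the supermodularity of these few test functions is noted, the remaining argument is routine; no delicate estimates arise beyond the $t\ge 3$ caveat already flagged.
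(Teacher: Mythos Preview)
The paper provides no proof at all for this proposition: it is functioning as a pure definition (note the language ``is said to have''), and is followed immediately by a lemma with no intervening argument. So there is nothing to compare your proposal against on the paper's side.

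Your proposal is therefore not wrong so much as supererogatory. You correctly sensed the ambiguity (``presumably why it is stated as a proposition rather than a definition'') and supplied a well-posedness argument that the paper simply omits: existence of the independent coupling, that $\le_{sm}$ is a genuine partial order preserving marginals, non-vacuousness of the positive class via the comonotone coupling, and the important caveat that for $t\ge 3$ there is no universal $\le_{sm}$-least element so the negative class has no canonical extreme. All of this is sound and is in fact more careful than the paper. Your final paragraph, deriving $\bm{X}\le_{sm}\bm{Y}\Rightarrow\sum_i X_i\le_{cx}\sum_i Y_i$ via the supermodularity of $x\mapsto\phi(\sum_i x_i)$ for convex $\phi$, is also correct (the majorization sketch is terse but right) and is exactly the content the paper records separately as the Lemma immediately following this proposition, citing \cite{sun2017statistical} rather than proving it. In short: the paper treats the statement as nomenclature; you treated it as a theorem and proved what would need proving if it were one.
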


The convex order gives a sufficient condition for the ordering of the asymptotic decay rate of delay and the following result relates the multivariate supermodular order to the univariate convex order.

\begin{lemma}
The supermodular ordering of instantaneous capacity,
$
\textbf{C} \le_{sm} \widetilde{\textbf{C}},
$
entails that the marginal distributions of the instantaneous increments are identical, 
particularly, \cite{sun2017statistical}
\begin{equation}
\bm{C}\le_{sm}\widetilde{\bm{C}} \implies \sum_{i=1}^{t}C(i)\le_{cx}\sum_{i=1}^{t}\widetilde{C}(i).
\end{equation}
\end{lemma}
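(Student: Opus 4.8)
The plan is to deduce both assertions of the lemma from one elementary fact: precisely which functions of the coordinate vector are supermodular. Recall that $\bm{C}\le_{sm}\widetilde{\bm{C}}$ means $\mathbb{E}[\phi(\bm{C})]\le\mathbb{E}[\phi(\widetilde{\bm{C}})]$ for every supermodular $\phi:\mathbb{R}^{t}\to\mathbb{R}$ for which both expectations exist. First I would observe that any function depending on a single coordinate is \emph{both} supermodular and submodular: for $\psi(\bm{x})=g(x_i)$ one has $\psi(\bm{x}\vee\bm{y})+\psi(\bm{x}\wedge\bm{y})=g(\max(x_i,y_i))+g(\min(x_i,y_i))=g(x_i)+g(y_i)=\psi(\bm{x})+\psi(\bm{y})$, so the defining inequality holds with equality. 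Applying $\le_{sm}$ to $g(x_i)$ and then to $-g(x_i)$ yields $\mathbb{E}[g(C(i))]=\mathbb{E}[g(\widetilde{C}(i))]$ for all admissible $g$, hence $C(i)\stackrel{d}{=}\widetilde{C}(i)$ for each $i$. That is the "identical marginal distributions" part, and in particular $\sum_i C(i)$ and $\sum_i\widetilde{C}(i)$ have the same mean, which is a necessary consistency condition for their convex ordering.

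For the convex ordering of the sums, the key step is to show that $\bm{x}\mapsto\phi(x_1+\cdots+x_t)$ is supermodular whenever $\phi:\mathbb{R}\to\mathbb{R}$ is convex. Write $s(\bm{x})=x_1+\cdots+x_t$. Since $s$ is additive, $s(\bm{x}\vee\bm{y})+s(\bm{x}\wedge\bm{y})=s(\bm{x})+s(\bm{y})$, while componentwise $\min\le\le\max$ gives $s(\bm{x}\wedge\bm{y})\le\min(s(\bm{x}),s(\bm{y}))\le\max(s(\bm{x}),s(\bm{y}))\le s(\bm{x}\vee\bm{y})$. Setting $a=s(\bm{x}\wedge\bm{y})$, $\{b,c\}=\{s(\bm{x}),s(\bm{y})\}$ with $b\le c$, and $d=s(\bm{x}\vee\bm{y})$, we thus have $a\le b\le c\le d$ and $a+d=b+c$. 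For a convex $\phi$ this forces $\phi(a)+\phi(d)\ge\phi(b)+\phi(c)$: with $\lambda=(d-b)/(d-a)\in[0,1]$ we have $b=\lambda a+(1-\lambda)d$ and $c=(1-\lambda)a+\lambda d$, so convexity gives $\phi(b)+\phi(c)\le\phi(a)+\phi(d)$. Hence $\phi\circ s$ is supermodular, and feeding it into the definition of $\le_{sm}$ gives $\mathbb{E}[\phi(\sum_i C(i))]\le\mathbb{E}[\phi(\sum_i\widetilde{C}(i))]$ for every convex $\phi$, i.e.\ $\sum_i C(i)\le_{cx}\sum_i\widetilde{C}(i)$.

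The only genuine obstacle is the supermodularity of $\phi\circ s$ for convex $\phi$ — really the four-point inequality $\phi(a)+\phi(d)\ge\phi(b)+\phi(c)$ for $a\le b\le c\le d$ with $a+d=b+c$ — while the rest is bookkeeping with the definition of $\le_{sm}$. I would also be mildly careful about integrability: the $\le_{sm}$ inequalities hold only for functions with finite expectations, so the marginal-equality argument and the final conclusion should be stated for those $g$ and $\phi$ for which the relevant expectations exist, which is automatic under the light-tail assumptions in force here. The statement itself is classical in the theory of the supermodular order and is exactly what is cited from \cite{sun2017statistical} in the lemma; the argument above is the standard one.
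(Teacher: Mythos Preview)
Your proof is correct and self-contained. The paper itself does not supply a proof of this lemma: it simply cites \cite{sun2017statistical} (and implicitly the standard references \cite{muller2002comparison,shaked2007stochastic}) for both the equal-marginals property of $\le_{sm}$ and the implication $\bm{C}\le_{sm}\widetilde{\bm{C}}\Rightarrow\sum_i C(i)\le_{cx}\sum_i\widetilde{C}(i)$. What you have written is exactly the classical argument behind those citations --- modularity of single-coordinate functions for the marginals, and supermodularity of $\phi\!\circ\! s$ for convex $\phi$ via the four-point inequality $\phi(a)+\phi(d)\ge\phi(b)+\phi(c)$ when $a\le b\le c\le d$ and $a+d=b+c$ --- so there is nothing to compare beyond noting that you have filled in what the paper leaves as a reference.
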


The following result shows the direct relationship between dependence and the asymptotic decay rate.

\begin{corollary}
Consider an identical arrival process and two wireless channel capacity processes, if the capacities are supermodular ordered, then the asymptotic decay rates are correspondingly ordered, i.e.,
\begin{eqnarray}
\textbf{C} \le_{sm} \widetilde{\textbf{C}},\ \forall{t}\in\mathbb{N} 
&\implies&
0 > \widetilde{\gamma} \ge \gamma, \\
\textbf{C} \le_{sm} \widetilde{\textbf{C}},\ \forall{t}\in\mathbb{N}
&\implies&
\kappa^{-A} ( {\gamma} ) \ge  \widetilde{\kappa}^{-A} (\widetilde{\gamma}) > 0.
\end{eqnarray}
\end{corollary}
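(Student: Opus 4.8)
The plan is to chain the two results that immediately precede the statement: the Lemma that converts the multivariate supermodular order of the capacity vector into the univariate convex order of the cumulative capacity, and the Corollary in the Dual Potency subsection that converts a convex ordering of the service process into an ordering of the asymptotic decay rates when the arrival process is held fixed. Throughout I assume that both pairs $(A,S)$ and $(A,\widetilde S)$ satisfy the conditions of Proposition~\ref{proposition-arrival-service}, so that $\gamma$, $\widetilde\gamma$ and all the cumulant limits exist.

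First I would apply the Lemma: from $\bm{C}\le_{sm}\widetilde{\bm{C}}$ one obtains, for every $t\in\mathbb{N}$,
\[
S(t)=\sum_{i=1}^{t}C(i)\ \le_{cx}\ \sum_{i=1}^{t}\widetilde{C}(i)=\widetilde{S}(t).
\]
This is the key reduction — supermodular order is exactly strong enough to keep the marginals of the instantaneous capacities fixed while enlarging the dependence, and summation is a supermodular functional, so the cumulative capacities inherit a convex order at every horizon. Next I would observe that, since the two systems share the identical arrival process $A(t)$, we are precisely in the ``identical arrival process'' branch of the preceding Corollary, with the convex ordering $S(t)\le_{cx}\widetilde{S}(t)$ just derived serving as its hypothesis. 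Invoking that Corollary (which rests on Theorem~\ref{theorem-ordering-arrival-or-service} and Theorem~\ref{theorem-ordering-arrival-and-service}) delivers at once $0>\widetilde\gamma\ge\gamma$, and then $\kappa^{-A}(\gamma)\ge\widetilde\kappa^{-A}(\widetilde\gamma)>0$ because $\kappa^{-A}$ is decreasing on the negative axis and $\kappa^{-A}(\gamma)=-\kappa^{S}(\gamma)$ is the asymptotic decay rate of delay in the dual (negative-axis) formulation.

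The only point demanding care — and the main obstacle — is bookkeeping the sign conventions of the dual change of measure: the relevant stability equation lives on the negative part of the parameter axis, $\kappa^{S}(\theta)+\kappa^{-A}(\theta)=0$ with $\theta<0$, so ``larger service in convex order'' pushes $\widetilde\gamma$ toward $0$ from below rather than away from it, which is why the inequality reads $0>\widetilde\gamma\ge\gamma$ rather than the reverse. Concretely, $S(t)\le_{cx}\widetilde S(t)$ gives $\kappa^{-S}(\theta)\ge\widetilde\kappa^{-S}(\theta)$ for all $\theta$, hence $\kappa^{-A}(\theta)+\kappa^{S}(\theta)\le\widetilde\kappa^{-A}(\theta)+\widetilde\kappa^{S}(\theta)$ after the sign flip, and convexity of the cumulant with value $0$ at both $0$ and $\gamma$ forces the root to move inward. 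Once that orientation is fixed, monotonicity of $\kappa^{-A}$ yields the second implication with no further computation, and the proof reduces to citing the Lemma and the preceding Corollary in that order.
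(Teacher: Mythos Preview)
Your proposal is correct and follows essentially the same route as the paper: apply the preceding Lemma to convert $\bm{C}\le_{sm}\widetilde{\bm{C}}$ into $S(t)\le_{cx}\widetilde S(t)$ for all $t$, then invoke the identical-arrival branch of the earlier Corollary (built on Theorems~\ref{theorem-ordering-arrival-and-service} and~\ref{theorem-ordering-arrival-or-service}) to obtain $0>\widetilde\gamma\ge\gamma$ and $\kappa^{-A}(\gamma)\ge\widetilde\kappa^{-A}(\widetilde\gamma)>0$. The paper presents this Corollary without a separate proof, treating it as an immediate consequence of exactly those two results; your additional discussion of the sign bookkeeping on the negative axis is accurate and simply spells out what the paper leaves implicit.
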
 

\begin{remark}
The wireless channel performance relies on these statistical properties of the capacity process that are more than the mean of the instantaneous capacity, for instance, stochastic dependence.
The above results show that stochastic dependence provides another degree of freedom for improving wireless channel performance, which is especially crucial in the extreme scenario where there is no more gain in terms of the capacity mean.
\end{remark}

\subsection{Transform of Dependence}

In this subsection, we classify the random parameters in wireless channel capacity into controllable and uncontrollable random parameters and show how to transform the dependence structure of the capacity by manipulating the controllable random parameters of the capacity, and the results for a single channel are extended to complex channels composing of a set of sub-channels. 
In addition, since the dependence in the arrival process also influences the channel performance, the dependence manipulation in deterministic multiplexing and random multiplexing are also discussed.

\subsubsection{Manipulation of Service}

We treat the capacity $C(t)$ as a function $f_t$ of a set of random parameters $\left( X_t^1, X_t^2, \ldots, X_t^n \right)$, 
and we specify that the function is time-variant, the dimension of the set is time-invariant, and the function $f_t$ is increasing or decreasing at $X_t^i$ for all the time, i.e.,
\begin{equation}
C(t) = f_{t} \left( X_t^1, X_t^2, \ldots, X_t^n \right).
\end{equation}
In other words, we treat the capacity as a functional of a multivariate stochastic process and the functional maps the multivariate stochastic process to a univariate stochastic process.

The following theorem shows that the manipulation of the dependence in the controllable parameter processes transforms the dependence structure of the capacity.

\begin{theorem}\label{theorem-dependence-transform}
Assume the random parameters are spatially independent and temporally dependent.
The supermodular ordering of a parameter series implies the ordering of the capacity, i.e., for any $1\le i\le n$,
\begin{multline}
\left( X_1^i, X_2^i, \ldots, X_t^i \right) \le_{sm} \left( \widetilde{X}_1^i, \widetilde{X}_2^i, \ldots, \widetilde{X}_t^i \right) \\
\implies \left( C(1), C(2), \ldots, C(t) \right) \le_{sm} \left( \widetilde{C}(1), \widetilde{C}(2), \ldots, \widetilde{C}(t) \right),
\end{multline}
where $\widetilde{C}(j) = f_j\left( X_j^1, \ldots, X_j^{i-1}, \widetilde{X}_j^i, X_j^{\wedge(i+1,n)}, \ldots, X_j^n \right)$, $\forall 1\le j\le t$.
\end{theorem}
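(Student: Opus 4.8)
The plan is to verify the defining inequality of the supermodular order directly. Fix a supermodular $\phi:\mathbb{R}^t\to\mathbb{R}$; it suffices to show $\mathbb{E}[\phi(C(1),\ldots,C(t))]\le\mathbb{E}[\phi(\widetilde C(1),\ldots,\widetilde C(t))]$. The idea is to isolate the $i$-th parameter process and condition on everything else. Write $\bm{Z}_j=(X_j^1,\ldots,X_j^{i-1},X_j^{i+1},\ldots,X_j^n)$ and $\bm{Z}=(\bm{Z}_1,\ldots,\bm{Z}_t)$. By spatial independence, $(X_1^i,\ldots,X_t^i)$ and $(\widetilde X_1^i,\ldots,\widetilde X_t^i)$ are each independent of $\bm{Z}$, and $\bm{Z}$ carries the same law in both systems because only the $i$-th coordinate process is being replaced. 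After relabelling the arguments of $f_j$ there is a measurable $g_j$ with $C(j)=g_j(X_j^i,\bm{Z}_j)$ and $\widetilde C(j)=g_j(\widetilde X_j^i,\bm{Z}_j)$, and by hypothesis $w\mapsto g_j(w,\bm{z}_j)$ is monotone in the same direction for every $j$ and every $\bm z_j$.

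Next I would freeze a realization $\bm{Z}=\bm{z}$ and set $\phi_{\bm z}(w_1,\ldots,w_t)=\phi\big(g_1(w_1,\bm z_1),\ldots,g_t(w_t,\bm z_t)\big)$. The crucial step is to show $\phi_{\bm z}$ is again supermodular. This rests on two standard facts \cite{shaked2007stochastic}: supermodularity is preserved when an increasing function is applied to each coordinate separately; and $u\mapsto\phi(-u_1,\ldots,-u_t)$ is supermodular whenever $\phi$ is. Together these cover both the increasing and the decreasing case, the latter by writing $g_j(w,\bm z_j)=-(-g_j(w,\bm z_j))$ with $w\mapsto-g_j(w,\bm z_j)$ increasing. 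The common-direction assumption is essential here: a mixture of increasing and decreasing coordinate maps can send a supermodular $\phi$ to a submodular function, so this is precisely where the hypothesis on $f_t$ enters.

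With $\phi_{\bm z}$ supermodular, the assumption $(X_1^i,\ldots,X_t^i)\le_{sm}(\widetilde X_1^i,\ldots,\widetilde X_t^i)$ yields $\mathbb{E}[\phi_{\bm z}(X_1^i,\ldots,X_t^i)]\le\mathbb{E}[\phi_{\bm z}(\widetilde X_1^i,\ldots,\widetilde X_t^i)]$ for each $\bm z$. Integrating this against the common law of $\bm{Z}$ and using the independence recorded above identifies the two sides with $\mathbb{E}[\phi(C(1),\ldots,C(t))]$ and $\mathbb{E}[\phi(\widetilde C(1),\ldots,\widetilde C(t))]$, which finishes the argument; integrability is automatic since the supermodular comparison is understood over all supermodular $\phi$ for which the relevant expectations exist.

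The step I expect to be the main obstacle is the closure lemma for the conditioned problem, i.e.\ keeping $\phi_{\bm z}$ supermodular: this is what forces the uniform-direction requirement on the $g_j$'s and is what makes the spatial-independence assumption indispensable, since it is spatial independence that lets the other parameters be frozen without perturbing the distribution of the $i$-th process. The remaining pieces — relabelling $f_j$'s arguments, measurability of $g_j$, and the final integration — are routine.
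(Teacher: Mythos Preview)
Your proposal is correct and is essentially the paper's own argument: both condition on the remaining parameter processes $\bm{Z}=(\bm{Z}_1,\ldots,\bm{Z}_t)$, invoke the closure of supermodularity under coordinatewise maps that are all increasing or all decreasing (the paper cites the same fact from \cite{shaked2007stochastic}), and then integrate out using independence (what the paper phrases as ``closure under mixtures''). The only difference is presentational---you unfold the definition and work with a test function $\phi$, while the paper stays at the level of the order relation---but the logic is identical.
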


\begin{proof}
Without loss of generalization, we consider the supermodular order of the random parameters with index $1$ in the proof.

For all increasing or all decreasing functions $g_i: \mathbb{R}\rightarrow\mathbb{R}$, $i=1, \ldots, t$, 
\begin{equation}
\left( X_1^1, X_2^1, \ldots, X_t^1 \right) \le_{sm} \left( \widetilde{X}_1^1, \widetilde{X}_2^1, \ldots, \widetilde{X}_t^1 \right),
\end{equation}
implies
\begin{equation}
\left( g_1\left( X_1^1 \right), \ldots, g_t\left( X_t^1 \right) \right) \le_{sm} \left( g_1\left( \widetilde{X}_1^1 \right), \ldots, g_t\left( \widetilde{X}_t^1 \right) \right),
\end{equation}
because a composition of a supermodular function with coordinatewise functions, that are all increasing or are all decreasing, is a supermodular function \cite{shaked2007stochastic}.

Let $\bm{Z}_t = \left( X^2_t, X^3_t, \ldots, X^n_t \right)$, assume $\bm{Z}_t$ is independent of $X^1_t$, $\forall t$, $\left( g_1\left( X_1^1 \right), \ldots, g_t\left( X_t^1 \right) \right) \le_{sm} \left( g_1\left( \widetilde{X}_1^1 \right), \ldots, g_t\left( \widetilde{X}_t^1 \right) \right)$ implies
\begin{multline}
\left( f_1\left( X_1^1, \bm{Z}_1  \right), \ldots, f_t\left( X_t^1, \bm{Z}_t \right) \Big| \left( \bm{Z}_1, \ldots, \bm{Z}_t  \right) = \bm{z} \right) \\
\le_{sm} \left( f_1\left( \widetilde{X}_1^1, \bm{Z}_1  \right), \ldots, f_t\left( \widetilde{X}_t^1, \bm{Z}_t \right) \Big| \left( \bm{Z}_1, \ldots, \bm{Z}_t  \right) = \bm{z} \right), \forall \bm{z},
\end{multline}
whenever $f_i\left( x_i^1, \bm{z}_i  \right)$, $\forall i$, are all increasing or are all decreasing in $x_i^1$ for every $\bm{z}_i$, 
and it further implies
\begin{multline}
\left( f_1\left( X_1^1, \bm{Z}_1  \right), \ldots, f_t\left( X_t^1, \bm{Z}_t \right)  \right) \\
\le_{sm} \left( f_1\left( \widetilde{X}_1^1, \bm{Z}_1  \right), \ldots, f_t\left( \widetilde{X}_t^1, \bm{Z}_t \right)  \right),
\end{multline}
because the sumermodular order is closed under mixtures \cite{shaked2007stochastic}.
\end{proof}

The following theorem shows that a greater number of controllable random parameter processes brings a  stronger transform to the dependence structure of capacity.

\begin{theorem}\label{theorem-transform-strength}
Assume the random parameters are spatially independent and temporally dependent.
Consider there are $i$, $1\le i\le n$, controllable random parameters. If
\begin{eqnarray}
\left( X_1^j, X_2^j, \ldots, X_t^j \right) \le_{sm} \left( \widetilde{X}_1^j, \widetilde{X}_2^j, \ldots, \widetilde{X}_t^j \right),\ \forall 1\le j \le i,
\end{eqnarray}
then
\begin{equation}
\widetilde{\bm{C}}_{t}^{k} \le_{sm} \widetilde{\bm{C}}_{t}^{j},\ \forall 0\le k \le j \le i, 
\end{equation}
where
$
\widetilde{{C}}_{t_m}^{k} = f_m\left( \widetilde{X}_m^1, \ldots, \widetilde{X}_m^k, {X}_m^{\wedge (k+1, n)}, \ldots, {X}_m^n   \right),\ 1\le m \le t.
$
\end{theorem}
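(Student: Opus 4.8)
The plan is to obtain the whole chain by a telescoping argument together with transitivity of the supermodular order: it suffices to prove the one-step comparison $\widetilde{\bm{C}}_t^{k} \le_{sm} \widetilde{\bm{C}}_t^{k+1}$ for every $0\le k\le i-1$, since composing these relations from index $k$ up to $j-1$ and using transitivity of $\le_{sm}$ (which holds because $\le_{sm}$ is an integral stochastic order) yields $\widetilde{\bm{C}}_t^{k} \le_{sm} \widetilde{\bm{C}}_t^{j}$ for all $0\le k\le j\le i$. Note that $\widetilde{\bm{C}}_t^{0}$ is just the un-manipulated capacity process $\left( C(1), \ldots, C(t) \right)$, so the one-step claim for $k=0$ is literally an instance of Theorem \ref{theorem-dependence-transform}, and the claims for $k\ge 1$ will be reduced to the same statement after relabeling the parameters.

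First I would recast the one-step claim as an application of Theorem \ref{theorem-dependence-transform}. Fix $k$ and form the parameter collection in which the first $k$ series have already been replaced by their supermodular-larger temporal rearrangements; that is, we regard $C(m)$ as the functional $f_m\left( \widetilde{X}_m^{1}, \ldots, \widetilde{X}_m^{k}, X_m^{k+1}, \ldots, X_m^{n} \right)$. By hypothesis $\left( X_1^{k+1}, \ldots, X_t^{k+1} \right) \le_{sm} \left( \widetilde{X}_1^{k+1}, \ldots, \widetilde{X}_t^{k+1} \right)$, and the standing assumption on $f_t$ guarantees that each $f_m$ is monotone --- all increasing or all decreasing --- in its $(k+1)$-th argument, uniformly in the values of the remaining arguments. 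Provided the relabeled collection is still spatially independent --- which it is, because the tilde-versions are temporal rearrangements of individual series with unchanged marginals and do not couple distinct series --- the hypotheses of Theorem \ref{theorem-dependence-transform}, with the manipulated coordinate taken to be index $k+1$, are met, and its conclusion is precisely $\widetilde{\bm{C}}_t^{k} \le_{sm} \widetilde{\bm{C}}_t^{k+1}$. Concretely this reuses the proof of that theorem: conditioning on the background vector $\bm{Z}_m = \left( \widetilde{X}_m^{1}, \ldots, \widetilde{X}_m^{k}, X_m^{k+2}, \ldots, X_m^{n} \right)$ at a fixed value $\bm{z}$, the all-monotone coordinatewise maps carry the supermodular order of the $(k+1)$-th series to a supermodular order of the conditioned capacity vectors, and closure of $\le_{sm}$ under mixtures removes the conditioning by integrating over the common law of $\bm{Z}$.

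The main obstacle is bookkeeping rather than a genuine mathematical difficulty: one must verify that freezing a mixture of tilde-coordinates and original coordinates leaves intact the spatial independence on which the mixture step relies, both of $\left( X_1^{k+1}, \ldots, X_t^{k+1} \right)$ and of $\left( \widetilde{X}_1^{k+1}, \ldots, \widetilde{X}_t^{k+1} \right)$ from $\bm{Z}$, and that the monotonicity of $f_m$ in the $(k+1)$-th slot is genuinely uniform over all frozen values of the remaining slots --- this is exactly where the standing assumption that $f_t$ is globally increasing or globally decreasing in each coordinate is used. Once these two points are in place, the telescoping together with transitivity of $\le_{sm}$ completes the argument with no further computation.
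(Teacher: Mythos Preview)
Your proposal is correct and is essentially the same argument as the paper's: both prove the one-step comparison $\widetilde{\bm{C}}_t^{k} \le_{sm} \widetilde{\bm{C}}_t^{k+1}$ by invoking Theorem~\ref{theorem-dependence-transform} on the $(k{+}1)$-th coordinate and then telescope using transitivity of $\le_{sm}$. If anything, you are more careful than the paper in explicitly checking that the relabeled collection retains spatial independence and that the coordinatewise monotonicity of $f_m$ is uniform, points the paper leaves implicit.
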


\begin{proof}
According to Theorem \ref{theorem-dependence-transform}, 
\begin{IEEEeqnarray}{rCl}
\IEEEeqnarraymulticol{3}{l}{
\left( f_1\left( X_1^1, X_1^2, \ldots, X_1^n \right), \ldots, f_t\left( X_t^1, X_t^2, \ldots, X_t^n \right) \right) }\\
&\le_{sm}& \left( f_1\left( \widetilde{X}_1^1, X_1^2, \ldots, X_1^n \right), \ldots, f_t\left( \widetilde{X}_t^1, X_t^2, \ldots, X_t^n \right) \right) \\
&\le_{sm}& \left( f_1\left( \widetilde{X}_1^1, \widetilde{X}_1^2, X_1^3, \ldots, X_1^n \right), \ldots, f_t\left( \widetilde{X}_t^1, \widetilde{X}_t^2, X_t^3, \ldots, X_t^n \right) \right), \IEEEeqnarraynumspace
\end{IEEEeqnarray}
and the result follows iteratively because of the transitivity property of supermodular order \cite{muller2002comparison}.
\end{proof}

The following theorem shows that the manipulation of dependence in a sub-channel capacity transforms the dependence structure of the overall channel capacity, the more number of manipulated sub-channels the stronger dependence transform strength on the overall capacity.

\begin{theorem}\label{theorem-subchannel-control}
Consider a wireless channel composing of  $M$ independent sub-channels, the instantaneous capacity of the overall channel is a function of the instantaneous capacity of each sub-channels, i.e., 
\begin{equation}
{\mathfrak{C}}_t = f_t\left(C_t^1, \ldots, {C}_t^{{M}}  \right).
\end{equation}
For example, the overall capacity is the summation of the capacity of each sub-channel, i.e.,
\begin{equation}
\left(C_t^1, \ldots, {C}_t^{{M}}  \right) \mapsto f_t\left(C_t^1, \ldots, {C}_t^{{M}}  \right) = \sum_{m=1}^{{M}} {C}_t^m.
\end{equation}
Assume the function is always increasing or decreasing at the instantaneous capacity of each sub-channels. 

Then, for any $1\le i\le {M}$,
\begin{multline}
\left( C_1^i, \ldots, C_t^i \right) \le_{sm} \left( \widetilde{C}_1^i, \ldots, \widetilde{C}_t^i \right) \\
\implies
\left( \mathfrak{C}_1, \ldots, \mathfrak{C}_t \right) \le_{sm} ( \widetilde{\mathfrak{C}}_1, \ldots, \widetilde{\mathfrak{C}}_t ),
\end{multline}
where $\widetilde{\mathfrak{C}}_j = f_j\left(C_j^1, \ldots, \widetilde{C}_j^i, {C}_j^{\wedge({i+1,M})}, \ldots, {C}_j^{{M}}  \right)$, $1\le j\le t$.

In addition, if
\begin{eqnarray}
\left( C_1^i, \ldots, C_t^i \right) \le_{sm} \left( \widetilde{C}_1^i, \ldots, \widetilde{C}_t^i \right),\ \forall 1\le i \le M,
\end{eqnarray}
then
\begin{equation}
\widetilde{\bm{\mathfrak{C}}}_{t}^{k} \le_{sm} \widetilde{\bm{\mathfrak{C}}}_{t}^{j},\ \forall 0\le k \le j \le M, 
\end{equation}
where
$
\widetilde{\mathfrak{C}}_{t_m}^{k} = f_m\left( \widetilde{C}_m^1, \ldots, \widetilde{C}_m^k, {C}_m^{\wedge (k+1, M)}, \ldots, {C}_m^M   \right),\ 1\le m \le t.
$
\end{theorem}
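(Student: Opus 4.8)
The plan is to observe that this statement is the sub-channel counterpart of Theorems~\ref{theorem-dependence-transform} and~\ref{theorem-transform-strength}, obtained by the substitution $X_t^i\mapsto C_t^i$ and $C(t)\mapsto\mathfrak{C}_t$: the per-sub-channel capacity series $(C_1^i,\ldots,C_t^i)$ takes the place of the parameter series $(X_1^i,\ldots,X_t^i)$, the mutual independence of the $M$ sub-channels takes the place of the ``spatially independent'' hypothesis, and the assumption that $f_t$ is consistently increasing (or consistently decreasing) in each argument $C_t^i$ across all $t$ takes the place of the monotonicity hypothesis on the capacity functional. The proof therefore rests on the same two closure properties of $\le_{sm}$: preservation under coordinatewise transformations that are all increasing or all decreasing, and closure under mixtures.

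For the first implication I would reproduce the argument of Theorem~\ref{theorem-dependence-transform}. Relabel so that $i=1$, and set $\bm{Z}_j=(C_j^2,\ldots,C_j^M)$; by sub-channel independence the process $(\bm{Z}_1,\ldots,\bm{Z}_t)$ is independent both of $(C_1^1,\ldots,C_t^1)$ and of $(\widetilde{C}_1^1,\ldots,\widetilde{C}_t^1)$. Conditioning on $(\bm{Z}_1,\ldots,\bm{Z}_t)=\bm{z}$, each map $x\mapsto f_j(x,\bm{z}_j)$ is real-valued and monotone, and by hypothesis these maps are all increasing or all decreasing, so the supermodular order is preserved under this coordinatewise transformation:
\begin{multline}
\left(f_1(C_1^1,\bm{z}_1),\ldots,f_t(C_t^1,\bm{z}_t)\right) \\
\le_{sm}\left(f_1(\widetilde{C}_1^1,\bm{z}_1),\ldots,f_t(\widetilde{C}_t^1,\bm{z}_t)\right),\quad\forall\bm{z}.
\end{multline}
Averaging over the common distribution of $(\bm{Z}_1,\ldots,\bm{Z}_t)$ and invoking closure of $\le_{sm}$ under mixtures removes the conditioning and gives $(\mathfrak{C}_1,\ldots,\mathfrak{C}_t)\le_{sm}(\widetilde{\mathfrak{C}}_1,\ldots,\widetilde{\mathfrak{C}}_t)$ with $\widetilde{\mathfrak{C}}_j$ as in the statement.

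For the second implication I would iterate, as in the proof of Theorem~\ref{theorem-transform-strength}: applying the first implication to sub-channel $k+1$ while holding sub-channels $1,\ldots,k$ in their already-replaced form and $k+2,\ldots,M$ in their original form gives $\widetilde{\bm{\mathfrak{C}}}_{t}^{k}\le_{sm}\widetilde{\bm{\mathfrak{C}}}_{t}^{k+1}$, and chaining these comparisons from $k$ up to $j$ by transitivity of $\le_{sm}$ yields $\widetilde{\bm{\mathfrak{C}}}_{t}^{k}\le_{sm}\widetilde{\bm{\mathfrak{C}}}_{t}^{j}$ for all $0\le k\le j\le M$.

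The step needing the most care is the conditioning/mixture argument in the first implication. Two points must be checked: that replacing the temporal dependence structure of a single sub-channel leaves that sub-channel independent of the remaining $M-1$ sub-channels, so the conditional law of $(\mathfrak{C}_1,\ldots,\mathfrak{C}_t)$ given $(\bm{Z}_1,\ldots,\bm{Z}_t)=\bm{z}$ is exactly the law of $(f_1(C_1^1,\bm{z}_1),\ldots,f_t(C_t^1,\bm{z}_t))$; and that the consistent-monotonicity hypothesis on $f_t$ is precisely what forces each conditional composition $(g_1,\ldots,g_t)$, $g_j=f_j(\cdot,\bm{z}_j)$, to be all increasing or all decreasing, which is what the supermodular-preservation property requires. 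The remainder is a direct transcription of the parameter-level proofs.
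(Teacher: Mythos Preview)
Your proposal is correct and takes essentially the same approach as the paper: the paper's proof simply states that, in view of Theorems~\ref{theorem-dependence-transform} and~\ref{theorem-transform-strength}, the results are obvious. You have made explicit the substitution $X_t^i\mapsto C_t^i$, $C(t)\mapsto\mathfrak{C}_t$ and spelled out the conditioning and iteration steps, but the underlying argument is identical.
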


\begin{proof}
Considering Theorem \ref{theorem-dependence-transform} and \ref{theorem-transform-strength}, the results are obvious.
\end{proof}

\subsubsection{Manipulation of Arrival}

We show the impact of the dependence manipulation of the individual processes on the aggregated process in deterministic multiplexing and random multiplexing.

The following theorem shows how to transform the dependence structure of the arrival process in the deterministic multiplexing.

\begin{theorem}
For deterministic multiplexing of a set of arrival processes, the dependence manipulation in an individual arrival process transforms the dependence structure of the aggregate process,
the more number of manipulated processes the stronger strength of dependence transform.
\end{theorem}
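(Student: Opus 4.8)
The plan is to read this statement as the arrival-side mirror of Theorem~\ref{theorem-subchannel-control}, with the superposition map in place of the sub-channel combining function. First I would make the informal claim precise. In deterministic multiplexing, $M$ flows with per-slot increments $a_t^1,\ldots,a_t^M$ are superposed, so the aggregate increment and cumulative aggregate are
\begin{equation}
\mathfrak{a}_t = f_t\left( a_t^1, \ldots, a_t^M \right) = \sum_{m=1}^{M} a_t^m, \qquad \mathfrak{A}(t) = \sum_{s=1}^{t} \mathfrak{a}_s.
\end{equation}
Assuming, as in Theorem~\ref{theorem-dependence-transform}, that the flows are spatially independent and temporally dependent, the precise statements to prove are: for any $1\le i\le M$,
\begin{multline}
\left( a_1^i, \ldots, a_t^i \right) \le_{sm} \left( \widetilde{a}_1^i, \ldots, \widetilde{a}_t^i \right) \\
\implies \left( \mathfrak{a}_1, \ldots, \mathfrak{a}_t \right) \le_{sm} \left( \widetilde{\mathfrak{a}}_1, \ldots, \widetilde{\mathfrak{a}}_t \right),
\end{multline}
with $\widetilde{\mathfrak{a}}_j = f_j\left( a_j^1,\ldots,\widetilde{a}_j^i,\ldots,a_j^M \right)$; and, if such an ordering holds for every $i$, the partially manipulated aggregates are chained in supermodular order.

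Second, I would reuse the core argument from the proof of Theorem~\ref{theorem-dependence-transform} essentially verbatim. The map $f_t$ is coordinatewise increasing (it is a sum), so composing the supermodular test function with it preserves supermodularity; fixing the index $i$ and setting $\bm{Z}_t = \left( a_t^1,\ldots,a_t^{i-1},a_t^{i+1},\ldots,a_t^M \right)$, spatial independence lets me condition on $\left( \bm{Z}_1,\ldots,\bm{Z}_t \right) = \bm{z}$, derive the conditional supermodular ordering of $\left( f_1(a_1^i,\bm{Z}_1),\ldots,f_t(a_t^i,\bm{Z}_t) \right)$ from $\left( a_1^i,\ldots,a_t^i \right)\le_{sm}\left( \widetilde{a}_1^i,\ldots,\widetilde{a}_t^i \right)$ since $x\mapsto f_s(a_s^1,\ldots,x,\ldots,a_s^M)$ is increasing, and then un-condition using closure of $\le_{sm}$ under mixtures. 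This yields the single-flow transform.

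Third, for the ``more manipulated flows $\Rightarrow$ stronger transform'' part, I would iterate the single-flow transform over $i=1,\ldots,M$ and invoke transitivity of $\le_{sm}$, exactly as in Theorems~\ref{theorem-transform-strength} and~\ref{theorem-subchannel-control}: writing $\widetilde{\mathfrak{a}}_{s}^{k} = f_s\left( \widetilde{a}_s^1,\ldots,\widetilde{a}_s^k,a_s^{k+1},\ldots,a_s^M \right)$, one obtains $\left( \widetilde{\mathfrak{a}}_1^{k},\ldots,\widetilde{\mathfrak{a}}_t^{k} \right) \le_{sm} \left( \widetilde{\mathfrak{a}}_1^{j},\ldots,\widetilde{\mathfrak{a}}_t^{j} \right)$ for all $0\le k\le j\le M$. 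The consequences at the level of the cumulative process or the decay rates then follow from the earlier lemmas, since $\bm{\mathfrak{a}}\le_{sm}\widetilde{\bm{\mathfrak{a}}}$ implies $\sum_{s=1}^{t}\mathfrak{a}_s\le_{cx}\sum_{s=1}^{t}\widetilde{\mathfrak{a}}_s$.

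The one place demanding care — and the only genuine content beyond bookkeeping — is the conditioning/mixture step: it is precisely where spatial independence is used, and it requires that $f_t$ be coordinatewise monotone so that the conditional supermodular order is not destroyed; here this is immediate because superposition is a sum, but it is worth stating explicitly. Everything else is a transcription of Theorems~\ref{theorem-dependence-transform}, \ref{theorem-transform-strength}, and~\ref{theorem-subchannel-control} with summation as the aggregation functional.
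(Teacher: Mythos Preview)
Your proposal is correct and takes essentially the same approach as the paper: the paper's proof simply says ``Considering Theorem~\ref{theorem-subchannel-control}, the argument is obvious,'' and you have spelled out exactly what that means---superposition is a coordinatewise increasing aggregation functional, so Theorems~\ref{theorem-dependence-transform} and~\ref{theorem-transform-strength} apply verbatim with $f_t(a_t^1,\ldots,a_t^M)=\sum_m a_t^m$. Your explicit identification of the spatial-independence/mixture step as the only substantive point is accurate but already absorbed into the cited theorems.
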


\begin{proof}
Considering Theorem \ref{theorem-subchannel-control}, the argument is obvious.
\end{proof}

For random multiplexing, the dependence in the random multiplexing control has an impact on the dependence structure of the aggregated process. Particularly, the dependence manipulation in an individual process brings its impact into the randomly multiplexed process.
We prove the theorem in Appendix \ref{proof-of-theorem-random-multiplexing}.

\begin{theorem}\label{theorem-random-multiplexing}
Let $\bm{X}_j = ( X_{j,1}, \ldots, X_{j,m} )$ and $\bm{Y}_j = ( Y_{j,1}, \ldots,  Y_{j,m} )$, $j = 1,2,\ldots$, be two independent sequences of non-negative random vectors, and let $\bm{M} = \left(M_1, M_2, \ldots , M_m \right)$ and $\bm{N} = \left( N_1, N_2, \ldots, N_m \right)$ be two vectors of non-negative integer-valued random variables. Assume that both $\bm{M}$ and $\bm{N}$ are independent of the $X_j$'s and $Y_j$'s. 

If $\bm{M} \le_{sm} \bm{N}$, then
\begin{equation}
\left( \sum_{j=1}^{M_1} X_{j,1}, \ldots, \sum_{j=1}^{M_m} X_{j,m}  \right) \le_{sm} \left( \sum_{j=1}^{N_1} X_{j,1}, \ldots, \sum_{j=1}^{N_m} X_{j,m}  \right).
\end{equation}
If $\bm{X}_j  \le_{sm} \bm{Y}_j $, $\forall j$, then
\begin{equation}
\left( \sum_{j=1}^{N_1} X_{j,1}, \ldots, \sum_{j=1}^{N_m} X_{j,m}  \right) 
 \le_{sm} \left( \sum_{j=1}^{N_1} Y_{j,1}, \ldots, \sum_{j=1}^{N_m} Y_{j,m}  \right).
\end{equation}
If $\bm{M} \le_{sm} \bm{N}$ and $\bm{X}_j  \le_{sm} \bm{Y}_j $, $\forall j$, then
\begin{equation}
\left( \sum_{j=1}^{M_1} X_{j,1}, \ldots, \sum_{j=1}^{M_m} X_{j,m}  \right) \le_{sm}
\left( \sum_{j=1}^{N_1} Y_{j,1}, \ldots, \sum_{j=1}^{N_m} Y_{j,m}  \right).
\end{equation}
\end{theorem}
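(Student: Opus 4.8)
The plan is to prove the three displayed inequalities in order, using two standard facts about the supermodular order: (a) it is closed under convolution with independent summands (componentwise addition of independent random vectors preserves $\le_{sm}$), and (b) it is closed under mixtures, i.e.\ if a parametrized family satisfies $\bm{U}(\bm{\lambda})\le_{sm}\bm{V}(\bm{\lambda})$ for every $\bm{\lambda}$ and the mixing distributions of the parameter vectors coincide, then the mixtures are $\le_{sm}$-ordered \cite{shaked2007stochastic,muller2002comparison}. A third ingredient is that composing a supermodular function with coordinatewise increasing maps yields a supermodular function; here the relevant map is $\bm{k}=(k_1,\dots,k_m)\mapsto\big(\sum_{j\le k_1}X_{j,1},\dots,\sum_{j\le k_m}X_{j,m}\big)$, which is coordinatewise increasing because the $X_{j,i}$ are non-negative.

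For the first inequality, condition on the arrays $\{X_{j,i}\}$. Given this conditioning, the map $\bm{k}\mapsto\big(\sum_{j=1}^{k_1}X_{j,1},\dots,\sum_{j=1}^{k_m}X_{j,m}\big)$ is a fixed coordinatewise nondecreasing function of the integer vector; hence applying it to $\bm{M}$ versus $\bm{N}$ and using $\bm{M}\le_{sm}\bm{N}$ together with the composition fact gives
\begin{equation}
\Big(\sum_{j=1}^{M_1}X_{j,1},\dots,\sum_{j=1}^{M_m}X_{j,m}\Big)\ \Big|\ \{X_{j,i}\}=\bm{x}
\ \le_{sm}\
\Big(\sum_{j=1}^{N_1}X_{j,1},\dots,\sum_{j=1}^{N_m}X_{j,m}\Big)\ \Big|\ \{X_{j,i}\}=\bm{x}
\end{equation}
for every realization $\bm{x}$; note the independence of $(\bm{M},\bm{N})$ from the arrays is what keeps the conditional laws of $\bm{M},\bm{N}$ unchanged. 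Then integrating over the common distribution of $\{X_{j,i}\}$ and invoking closure under mixtures yields the unconditional statement.

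For the second inequality I would instead condition on $\bm{N}$: given $\bm{N}=\bm{n}$, each component is a sum of a fixed number $n_i$ of terms, and the vector $\bm{X}_j\le_{sm}\bm{Y}_j$ for every $j$ (with independence across $j$) upgrades to $\big(\sum_{j=1}^{n_1}X_{j,1},\dots,\sum_{j=1}^{n_m}X_{j,m}\big)\le_{sm}\big(\sum_{j=1}^{n_1}Y_{j,1},\dots,\sum_{j=1}^{n_m}Y_{j,m}\big)$ by closure of $\le_{sm}$ under independent convolutions (applied coordinate-by-coordinate, summing the $j$-indexed independent vectors). Integrating over the law of $\bm{N}$ and using closure under mixtures gives the result. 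The third inequality is then immediate from the first two by transitivity of $\le_{sm}$: $\big(\sum^{M}X\big)\le_{sm}\big(\sum^{N}X\big)\le_{sm}\big(\sum^{N}Y\big)$.

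The main obstacle is making the conditioning rigorous, specifically verifying that the ``mixture'' closure property genuinely applies: one must check that after conditioning on the arrays (resp.\ on $\bm{N}$) the mixing variable has the \emph{same} distribution on both sides, which is exactly where the stated mutual independence between $(\bm{M},\bm{N})$ and the $(X_j,Y_j)$ sequences — and the independence of the two sequences from each other — is used. A secondary point to state carefully is that random (rather than deterministic) upper summation limits are handled by the coordinatewise-monotone composition fact rather than by convolution closure, so the two inequalities need the two different conditioning arguments sketched above; they cannot both be obtained the same way.
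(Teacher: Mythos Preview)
Your proposal is correct and matches the paper's own proof: the paper cites \cite{shaked2007stochastic} for the first inequality (your conditioning-on-the-arrays argument is essentially the proof behind that citation), proves the second by conditioning on $\bm{N}=\bm{n}$ and integrating exactly as you do, and obtains the third by combining the first two. Your write-up is in fact more explicit than the paper's, particularly in flagging where independence is used and why the two inequalities require different conditioning arguments.
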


\begin{remark}
The supermodular ordering of the random vectors is a sufficient condition for the convex ordering of the partial sum of the random vector, but it is not a necessary condition, e.g., an increasing in the mean of one of the random variables results in the convex ordering, particularly, the directional convex order is fit for investigating the impact of the marginals.
In other words, the sufficient and necessary condition for the asymptotic decay rate ordering is more than dependence, while dependence ordering is the focus of this paper.
\end{remark}

\section{Application}
\label{application}

This section focuses on the application of dependence control to wireless channel capacity with a specific dependence structure, i.e., the cumulative capacity is modeled as a Markov additive process. 
Performance measures for arrival processes as Markov additive process and constant process are discussed.
Copula is used to represent and manipulate the dependence structure of the underlying Markov process.

\subsection{Structure Specification}

We model the wireless channel capacity as a Markov additive process and the specification is as follows.

\begin{proposition}
If the dependence in capacity is driven by a Markov process and the incremental capacity has a specific distribution with respect to a specific state transition, then the additive capacity together with the underlying Markov process %, i.e., the cumulative capacity, 
form a Markov additive process. 
\end{proposition}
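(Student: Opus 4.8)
The plan is to verify directly that the pair consisting of the underlying Markov process and the cumulative capacity satisfies the defining properties of a Markov additive process, so the proof is essentially an unpacking of that definition against the two structural hypotheses. First I would fix notation: let $J = (J_n)_{n\ge 0}$ denote the underlying Markov process (the phase process), with some state space $E$, and let $S(n) = \sum_{i=1}^{n} C(i)$ be the cumulative capacity, with $C(i)$ the incremental capacity over slot $i$. Recall that $(J_n, S(n))_{n\ge 0}$ is a (discrete-time) Markov additive process if (i) $J$ is a Markov chain, and (ii) conditionally on the whole trajectory of $J$, the increments $C(n) = S(n) - S(n-1)$ are independent and the conditional law of $C(n)$ depends only on the transition $(J_{n-1}, J_n)$. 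The first hypothesis of the proposition gives (i) outright.

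For (ii), I would read the clause ``the incremental capacity has a specific distribution with respect to a specific state transition'' as the precise measure-theoretic statement that there is a fixed family of distributions $\{F_{ij}\}_{i,j\in E}$ such that $\mathbb{P}(C(n) \in dx \mid J) = F_{J_{n-1} J_n}(dx)$, with the $C(n)$ conditionally independent given $J$. Making this family explicit is the bookkeeping step: it records simultaneously that the increment law is transition-indexed (hence depends on the phase path only through the current transition) and that distinct increments are conditionally independent, which is exactly what the additive part of the definition demands.

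It then remains to check that the bivariate process $(J_n, S(n))$ is itself Markov, which is the only step with any content. I would compute the one-step conditional distribution of $(J_{n+1}, S(n+1))$ given $(J_0, S(0)), \ldots, (J_n, S(n))$: since $S(n+1) = S(n) + C(n+1)$ and $S(n)$ is measurable with respect to the conditioning, it suffices to show the joint law of $(J_{n+1}, C(n+1))$ given the past depends only on $J_n$. This factors as the transition kernel of $J$ evaluated at $J_n$ for the $J_{n+1}$-coordinate, times $F_{J_n J_{n+1}}$ for the increment, using the Markov property of $J$ for the first factor and the conditional-independence/transition-indexing from hypothesis (ii) for the second; nothing here depends on $J_0,\ldots,J_{n-1}$ or on $S(0),\ldots,S(n-1)$. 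Hence $(J_n, S(n))$ is Markov with the additive structure, i.e.\ a Markov additive process, and if the family $\{F_{ij}\}$ is the same at every time step the resulting process is time-homogeneous.

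The main obstacle, such as it is, is entirely in the conditioning bookkeeping rather than in any analytic argument: one must be careful that ``conditionally on $J$'' the increments are genuinely independent (not merely Markov or uncorrelated), that the transition-indexed family is stated as a single fixed object, and that ``measurable with respect to the current state'' is used correctly when absorbing $S(n)$ into the increment. I expect no further difficulty, since the claim is, in effect, a restatement of the definition once the two hypotheses are phrased in the form the definition requires.
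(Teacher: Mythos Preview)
Your proposal is correct and matches the paper's treatment: the paper does not give a formal proof of this proposition at all, but simply follows it with the definition of a Markov additive process from Asmussen, making clear that the proposition is an immediate restatement of that definition under the stated hypotheses. Your explicit verification of the bivariate Markov property via the factorisation of the one-step kernel is more detailed than anything the paper writes, but the underlying idea is identical.
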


We use the definition of Markov additive process in \cite{asmussen2003applied}\cite{asmussen2010ruin}, and 
we focus on the finite state space scenario in discrete-time setting, where the structure is fully understood.

A Markov additive process is defined as a bivariate Markov process $\{X_t\}=\{(J_t,S(t))\}$ where $\{J_t\}$ is a Markov process with state space $E$ and the increments of $\{S(t)\}$ are governed by $\{J_t\}$ in the sense that %\cite{asmussen2010ruin}
\begin{equation}
\mathbb{E}\left[f(S({t+s})-S(t))g(J_{t+s})|\mathscr{F}_t\right] = \mathbb{E}_{J_t,0}\left[f(S(s))g(J_s)\right].
\end{equation}
In discrete time, a Markov additive process is specified by the measure-valued matrix (kernel) $\mathbf{F}(dx)$ whose $ij$th element is the defective probability distribution 
\begin{equation}
F_{ij}(dx)= \mathbb{P}_{i,0}(J_1=j,Y_1\in{dx}),
\end{equation}
where $Y_t=S(t)-S(t-1)$. An alternative description is in terms of the transition matrix $\mathbf{P}=(p_{ij})_{i,j\in{E}}$, $p_{ij}= \mathbb{P}_i(J_1=j)$, and the probability measures
\begin{equation}
H_{ij}(dx) = \mathbb{P}(Y_1\in{dx}|J_0=i, J_1=j) = \frac{F_{ij}(dx)}{p_{ij}}.
\end{equation} 
With respect to a transition probability $p_{ij}$, the increment of $\{S_t\}$ has a distribution $B_{ij}$.

Consider the matrix $\widehat{\textbf{F}}_t[\theta]=( \mathbb{E}_i[e^{\theta{S(t)}};J_t=j])_{i,j\in{E}}$. 
In discrete time,
\begin{equation}
\widehat{\textbf{F}}_t[\theta]=\widehat{\textbf{F}}[\theta]^t,
\end{equation}
where $\widehat{\textbf{F}}[\theta]=\widehat{\textbf{F}}_1[\theta]$ is a $E\times{E}$ matrix with $ij$th element $\widehat{F}^{(ij)}[\theta]=p_{ij}\int{e^{\theta{x}}F^{(ij)}}(dx)$, and $\theta\in\Theta=\{ \theta\in{R}:\int{e^{\theta{x}}F^{(ij)}}(dx)<\infty \}$  \cite{asmussen2003applied}. 
By Perron-Frobenius theorem, $\widehat{\textbf{F}}[\theta]$ has a positive real eigenvalue with maximal absolute value, $e^{\kappa(\theta)}$, in discrete time.
The corresponding right and left eigenvectors are respectively $\textbf{h}{(\theta)}=\left(h_{i}{(\theta)}\right)_{i\in{E}}$ and $\textbf{v}{(\theta)}=\left(v_{i}{(\theta)}\right)_{i\in{E}}$, particularly, $\textbf{v}{(\theta)}$, $\textbf{v}{(\theta)}\textbf{h}{(\theta)}=1$ and $\bm{\pi}\textbf{h}{(\theta)}=1$, where $\bm{\pi}=\textbf{v}{(0)}$ is the stationary distribution and $\textbf{h}{(0)}=\textbf{e}$. 
Particularly, the likelihood ratio process in the exponential change of measure is of interest \cite{asmussen2003applied},
\begin{equation}
L(t) = \frac{h_{J_t}{(\theta)}}{h_{J_0}{(\theta)}}e^{\theta{S(t)}-t\kappa(\theta)},
\end{equation}
which is a mean-one martingale. This martingale process is useful for wireless channel performance analysis.

\subsection{Performance Analysis}
 
The following theorem presents the results of the delay and backlog tail probability on infinite time horizon, for Markov additive arrival process and Markov additive service process.
We present the proof in Appendix \ref{proof-of-theorem-performance-bound}.

\begin{theorem}\label{theorem_performance_bound}
Consider a Markov additive arrival $A(t)$ with state space $E$ and initial state distribution ${\bm\varpi}^A_0$, and a Markov additive capacity $S(t)$ with state space $E'$ and initial distribution ${\bm\varpi}^S_0$. Specifically, given the initial state distribution, the state distribution at time $t$ is ${\bm\varpi}_t={\bm\varpi}_0{\bm P}^t$. Assume independence between the arrival process and the service process. The delay tail probability, conditional on the initial state ${\bm{J}}_{d,0}=\bm{i}$, i.e., $\left\{ J^{A}_d, J^{-S}_0 \right\}=\left\{ i^A, i^{-S} \right\}$, is expressed as
\begin{equation}
H_{-}^D \cdot h_{J_0}^{-S}(\theta) \cdot e^{-d\kappa(\theta)} \le \mathbb{P}_{\bm i} ( D > d ) \le H_{+}^D \cdot h_{J_0}^{-S}(\theta)  \cdot e^{-d\kappa(\theta)},
\end{equation}
where 
\begin{IEEEeqnarray}{rCl}
H_{+}^D &=& \frac{\max_{j\in{E}}h_{j}^{A}{(\theta)}}{\min_{j\in{E}}h_{j}^{A}{(\theta)}} \cdot \frac{1}{\min_{j\in{E'}}h_{j}^{-S}(\theta)}, \\
H_{-}^D &=& e^{-\kappa^{A}(\theta)} \cdot \left( \frac{\min_{j\in{E}}h_{j}^{A}{(\theta)}}{\max_{j\in{E}}h_{j}^{A}{(\theta)}} \right)^2 \cdot \frac{1}{\max_{j\in{E'}}h_{j}^{-S}(\theta)}. \IEEEeqnarraynumspace
\end{IEEEeqnarray}
The backlog tail probability, conditional on the initial state ${\bm{J}}_{0,0}=\bm{i}$, i.e., $\left\{ J^{A}_0, J^{-S}_0 \right\}=\left\{ i^A, i^{-S} \right\}$, is expressed as
\begin{equation}
H_{-}^B \cdot {h_{J_0}^{A}{(\theta)}}{h_{J_0}^{-S}(\theta)} \cdot e^{-\theta{b}} \le \mathbb{P}_{\bm i} ( B > b ) \le H_{+}^B \cdot {h_{J_0}^{A}{(\theta)}}{h_{J_0}^{-S}(\theta)} \cdot e^{-\theta{b}},
\end{equation}
where 
\begin{IEEEeqnarray}{rCl}
H_{+}^B &=& \frac{ 1 }{\min_{j\in{E}}h_{j}^{A}{(\theta)}} \cdot \frac{1}{\min_{j\in{E'}}h_{j}^{-S}(\theta)}, \\
H_{-}^B &=& { e^{-\kappa^{A}(\theta)} } \cdot {\min\limits_{j\in{E}} h_{j}^{A}(\theta) } \cdot \frac{1}{ \left( \max_{j\in{E}}h_{j}^{A}{(\theta)} \right)^2} \cdot \frac{1}{\max\limits_{j\in{E'}}h_{j}^{-S}(\theta)}. \IEEEeqnarraynumspace
\end{IEEEeqnarray}
For the delay and backlog, $\theta$ is the root of the stability equation, i.e.,
\begin{eqnarray}
\theta = \left\{ \theta>0: \kappa^{A}(\theta) + \kappa^{-S}(\theta) = 0 \right\},
\end{eqnarray}
and $\kappa(\theta): = \kappa^{A}(\theta)=-\kappa^{-S}(\theta)$.
\end{theorem}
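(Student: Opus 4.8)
The plan is to realize both tails as first‑passage probabilities of the queue‑increment process $A(t)-S(t)$ and to evaluate them through the Perron--Frobenius martingale stopped at the relevant crossing epoch. Since the arrival and service processes are independent Markov additive processes, the pair $\{(J^A_t,J^{-S}_t)\}$ is a Markov chain on $E\times E'$ over which $A(t)-S(t)$ is Markov additive, with moment generating matrix the Kronecker product $\widehat{\mathbf F}^A[\theta]\otimes\widehat{\mathbf F}^{-S}[\theta]$; its Perron eigenvalue is $e^{\kappa^A(\theta)+\kappa^{-S}(\theta)}$ with right eigenvector $h^A_i(\theta)h^{-S}_j(\theta)$. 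Choosing $\theta$ to be the stability root $\kappa^A(\theta)+\kappa^{-S}(\theta)=0$ makes this eigenvalue $1$, so the likelihood ratio of the exponential change of measure is the exponent‑free, mean‑one martingale $L(t)=\frac{h^A_{J^A_t}(\theta)\,h^{-S}_{J^{-S}_t}(\theta)}{h^A_{J^A_0}(\theta)\,h^{-S}_{J^{-S}_0}(\theta)}\,e^{\theta(A(t)-S(t))}$, consistent with $\kappa(\theta)=\kappa^A(\theta)=-\kappa^{-S}(\theta)$.

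For the backlog I would use $\mathbb{P}_{\bm i}(B>b)=\mathbb{P}_{\bm i}(\tau<\infty)$ with $\tau=\inf\{t\ge0:A(t)-S(t)>b\}$, apply optional stopping to the bounded martingales $L(t\wedge n)$, and let $n\to\infty$. Fatou gives $\mathbb{E}_{\bm i}[L(\tau);\tau<\infty]\le1$; on $\{\tau<\infty\}$ the exponent is $\ge\theta b$ and the eigenvector ratio is $\ge\min_j h^A_j(\theta)\min_j h^{-S}_j(\theta)/(h^A_{i^A}(\theta)h^{-S}_{i^{-S}}(\theta))$, which yields the upper bound with $H^B_+$. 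For the lower bound, stability ($\kappa^A+\kappa^{-S}$ has a positive root, hence drift $\dot\kappa^A(0)+\dot\kappa^{-S}(0)<0$) forces $A(t)-S(t)\to-\infty$, so $L(t)\to0$ on $\{\tau=\infty\}$ and in fact $\mathbb{E}_{\bm i}[L(\tau);\tau<\infty]=1$; then I would bound the overshoot $A(\tau)-S(\tau)-b$ by the single arrival increment $a(\tau)$ (the cumulative capacity is non‑decreasing, so $A(\tau)-S(\tau)\le[A(\tau-1)-S(\tau-1)]+a(\tau)\le b+a(\tau)$), bound the eigenvector ratio above by $\max_j h^A_j(\theta)\max_j h^{-S}_j(\theta)$ over the normalisation, and control $\mathbb{E}_{\bm i}[e^{\theta a(\tau)};\tau<\infty]$ by the maximal row sum of $\widehat{\mathbf F}^A[\theta]$, which by $\widehat{\mathbf F}^A[\theta]h^A(\theta)=e^{\kappa^A(\theta)}h^A(\theta)$ is $\le e^{\kappa^A(\theta)}\max_j h^A_j(\theta)/\min_j h^A_j(\theta)$; collecting constants produces the $e^{-\kappa^A(\theta)}$ prefactor of $H^B_-$.

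For the delay I would invoke the time‑reversal identity from the preliminaries, $\mathbb{P}(D>d)=\mathbb{P}(\sup_{t\ge d}(A(d,t)-S(0,t))>0)$, and observe that for $t>d$ this is a first passage of $A(d,t)-S(d,t)$ over the random level $S(0,d)$. Conditioning on $J^A_d$ and the service history up to time $d$, so that by the Markov property the arrival chain restarts at $J^A_d$ and remains independent of the service, the same optional‑stopping estimate applied to the shifted martingale gives two‑sided bounds on $\mathbb{P}(\sigma<\infty\mid\mathscr{F}_d)$ for $\sigma=\inf\{t\ge d:A(d,t)-S(0,t)>0\}$, each carrying a factor $e^{-\theta S(0,d)}$, with the overshoot again controlled by $a(\sigma)$ (and forcing $\sigma>d$). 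I would then remove the conditioning by averaging the $[0,d]$ service: given $J^{-S}_0=i^{-S}$, $\mathbb{E}[h^{-S}_{J^{-S}_d}(\theta)e^{-\theta S(0,d)}\mid J^{-S}_0=i^{-S}]=(\widehat{\mathbf F}^{-S}[\theta]^d h^{-S}(\theta))_{i^{-S}}=e^{d\kappa^{-S}(\theta)}h^{-S}_{i^{-S}}(\theta)$, and the stability identity $\kappa^{-S}(\theta)=-\kappa^A(\theta)=-\kappa(\theta)$ turns this into the $e^{-d\kappa(\theta)}$ factor; bounding $h^A_{i^A}(\theta)$ between $\min_j h^A_j(\theta)$ and $\max_j h^A_j(\theta)$ gives $H^D_+$ and $H^D_-$.

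The hard part will be making the overshoot step in the lower bounds rigorous: $\tau$ (resp.\ $\sigma$) depends on the very increment that overshoots the level, so $\mathbb{E}[e^{\theta a(\tau)};\tau<\infty]$ cannot be factored by conditioning on $\mathscr{F}_{\tau-1}$ directly. I would handle this by the layer decomposition $\mathbb{E}_{\bm i}[e^{\theta a(\tau)};\tau<\infty]=\sum_n\mathbb{E}_{\bm i}[e^{\theta a(n)};\tau=n]$ and, on $\{\tau=n\}$, conditioning on $\mathscr{F}_{n-1}$ together with the service increment $C(n)$ (independent of $a(n)$) to expose $\mathbb{E}[e^{\theta a(n)}\mid J^A_{n-1}]$, which is the row sum of $\widehat{\mathbf F}^A[\theta]$ bounded as above. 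The auxiliary claim that $L(t)\to0$ on $\{\tau=\infty\}$, needed so the lower bound accounts for the full mass $1$, likewise rests on the strict negative drift guaranteed by the existence of a positive root $\theta$.
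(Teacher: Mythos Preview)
Your upper-bound arguments and your overall martingale framework match the paper's approach; the delay treatment via conditioning on the service up to time $d$ and averaging through the eigenvalue relation is a legitimate variant of the paper's direct use of the shifted product martingale $(L^A_{t-d}\circ\theta_d)\cdot L^{-S}_t$. The genuine gap is in the lower bounds, in your handling of the overshoot.

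You correctly flag the difficulty: on $\{\tau=n\}$ the increment $a(n)$ is entangled with the event. But conditioning on $\mathscr{F}_{n-1}$ together with $C(n)$ does \emph{not} disentangle it, because
\[
\{\tau=n\}=\{\tau>n-1\}\cap\bigl\{a(n)>b-(A(n-1)-S(n-1))+C(n)\bigr\}
\]
still depends on $a(n)$ after that conditioning; you therefore cannot extract $\mathbb{E}[e^{\theta a(n)}\mid J^A_{n-1}]$ on $\{\tau=n\}$. In fact the inequality you need, $\mathbb{E}_{\bm i}[e^{\theta a(\tau)};\tau<\infty]\le c\cdot\mathbb{P}_{\bm i}(\tau<\infty)$ with $c$ the maximal row sum of $\widehat{\mathbf F}^A[\theta]$, is false in general: conditioning on $\{\tau=n\}$ biases $a(n)$ upward, and the conditional moment $\mathbb{E}[e^{\theta a(\tau)}\mid\tau<\infty]$ can exceed the unconditional $\mathbb{E}[e^{\theta a(1)}]$ by an arbitrary factor (take $a$ two-valued with a rare large atom). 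Replacing $\{\tau=n\}$ by $\{\tau>n-1\}\in\mathscr{F}_{n-1}$ does allow the conditioning, but summing yields $\mathbb{E}_{\bm i}[\tau]$ rather than $\mathbb{P}_{\bm i}(\tau<\infty)$.

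The paper sidesteps this by writing the probability via the likelihood-ratio identity under the tilted measure $\widetilde{\mathbb P}$, so that the overshoot enters with a \emph{negative} exponent: $\mathbb{P}_{\bm i}(\tau<\infty)=\widetilde{\mathbb E}_{\bm i}\bigl[H(\theta)\,e^{-\theta\xi_\tau}\bigr]$, and $e^{-\theta\xi_\tau}\ge e^{-\theta b}e^{-\theta a(\tau)}$. The key step is then purely algebraic: absorb $e^{-\theta a(\tau)}$ into the one-step recursion of the arrival martingale,
\[
e^{-\theta a(\tau)}\,L^A_\tau \;=\; \frac{h^A_{J^A_\tau}}{h^A_{J^A_{\tau-1}}}\,e^{-\kappa^A(\theta)}\,L^A_{\tau-1},
\]
which removes the dependence on $a(\tau)$ entirely and produces the factor $e^{-\kappa^A(\theta)}$ in $H^D_-$ and $H^B_-$ together with a bounded eigenvector ratio. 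The remaining expectation involves $L^A_{\tau-1}$, a mean-one martingale, rather than an exponential of an unbounded overshoot. Your row-sum estimate is not the mechanism here; the cancellation inside the martingale is.
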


Note the delay and backlog results are non-asymptotic, and these non-asymptotic results have an identical decay rate as the asymptotic decay rates, because they use a stability condition in the asymptotic regime.

The following theorem presents the time-dependent delay and backlog tail probability on finite time horizon, which is a function of the violated delay and backlog, and the decay rates are time-variant. We prove the theorem in Appendix \ref{proof-of-theorem-timely-performance}.

\begin{theorem}\label{theorem-timely-performance}
Consider the same specification as in Theorem \ref{theorem_performance_bound}.
For delay, let $\gamma$ be the root to $\kappa^{A}(\theta) + \kappa^{-S}(\theta)=0$,
$y_\gamma = \frac{{\dot{\kappa}}^{A}(\gamma)}{{\dot\kappa}^{A}(\gamma) + {\dot\kappa}^{-S}(\gamma)}$; given any fixed $y>1$, $\theta$ is the root to $y{\dot\kappa}^{-S}(\theta)=-(y-1){\dot\kappa}^{A}(\theta)$, and $\theta_y = -y\kappa^{-S}(\theta) -(y-1)\kappa^{A}(\theta)$,
then
\begin{IEEEeqnarray}{rCl}
\mathbb{P}_{\bm i}(D(t) >d; t \le yd) &\le& H_{+}^{D} h_{J_0}^{-S}(\theta) e^{-d \theta_y},\ y< y_{\gamma}, \\
\mathbb{P}_{\bm i}(D >d) - P_{\bm i}(D(t) >d; t\le yd) &\le& H_{+}^{D} h_{J_0}^{-S}(\theta) e^{-d \theta_y},\ y> y_\gamma, \IEEEeqnarraynumspace
\end{IEEEeqnarray}
where $H_{+}^{D} = \frac{\max_{j\in{E}}h_{j}^{A}{(\theta)}}{\min_{j\in{E}}h_{j}^{A}{(\theta)}} \cdot \frac{1}{\min_{j\in{E'}}h_{j}^{-S}(\theta)}$.

% Backlog results.
For backlog, let $\gamma$ be the root to $\kappa^{A}(\theta) + \kappa^{-S}(\theta)=0$, $y_\gamma = \frac{1}{{\dot\kappa}^{A}(\gamma) + {\dot\kappa}^{-S}(\gamma)}$;
given any fixed $y>0$, $\theta$ is the root to $y\left( \dot\kappa^{A}(\theta) + \dot\kappa^{-S}(\theta) \right) =1$, and $\theta_y = \theta - y\left( \kappa^{A}(\theta) + \kappa^{-S}(\theta) \right)$, then
\begin{IEEEeqnarray}{rCl}
\mathbb{P}_{\bm i}(B(t) >b; t \le yb) &\le& H_{+}^{B} h_{J_0}^{A}(\theta) h_{J_0}^{-S}(\theta) e^{-b \theta_y},\ y< y_{\gamma}, \\
\mathbb{P}_{\bm i}(B >b) - \mathbb{P}_{\bm i}(B(t) >b; t\le yb) &\le& H_{+}^{B} h_{J_0}^{A}(\theta) h_{J_0}^{-S}(\theta) e^{-b \theta_y},\ y> y_\gamma, \IEEEeqnarraynumspace
\end{IEEEeqnarray}
where $H_{+}^{B} = \frac{1}{\min_{j\in{E}}h_{j}^{A}{(\theta)}} \cdot \frac{1}{\min_{j\in{E'}}h_{j}^{-S}(\theta)}$.
\end{theorem}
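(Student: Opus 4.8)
The plan is to derive the time-dependent (finite-horizon) tail bounds by the standard martingale-tilting argument already used for Theorem \ref{theorem_performance_bound}, but now choosing the tilting parameter $\theta$ as a function of the horizon-to-violation ratio $y$ rather than as the root of the stability equation. Recall from the structure specification that $L(t) = \frac{h_{J_t}(\theta)}{h_{J_0}(\theta)} e^{\theta S(t) - t\kappa(\theta)}$ is a mean-one martingale, and that under independence of arrival and service, the combined tilted process has cumulant $\kappa^A(\theta) + \kappa^{-S}(\theta)$. First I would write $\mathbb{P}(D>d) = \mathbb{P}\{\sup_{t\ge d}(A(d,t) - S(0,t)) > 0\}$, restrict the supremum to the event $\{t \le yd\}$, and introduce a stopping time $\tau = \inf\{t\ge d : A(d,t) - S(0,t) > 0\}$. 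Applying the optional stopping theorem to the product martingale $L^A(t)L^{-S}(t)$ at $\tau \wedge (yd)$ and discarding the nonnegative residual term gives an upper bound of the form $H_+^D \cdot h_{J_0}^{-S}(\theta) \cdot \mathbb{E}[e^{-\theta(A(d,\tau)-S(0,\tau)) + (\tau-d)\kappa^A(\theta) + \tau\kappa^{-S}(\theta)}; \tau \le yd]$, where the $h$-ratio prefactors collecting the eigenvector bounds are exactly as in Theorem \ref{theorem_performance_bound}.

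The key new step is the optimization over $\theta$ for fixed $y$. On the event $\{d \le \tau \le yd\}$ we bound the exponent: the overshoot term contributes nonpositively, and $(\tau-d)\kappa^A(\theta) + \tau\kappa^{-S}(\theta) \le d\big[(y-1)\kappa^A(\theta) + y\kappa^{-S}(\theta)\big]$ when this bracket is maximized over the admissible range of $\tau/d \in [1,y]$ — which is where the sign of $\kappa^A(\theta)$ and $\kappa^{-S}(\theta)$ matters and forces the case split $y \lessgtr y_\gamma$. Setting $\theta_y = -(y-1)\kappa^A(\theta) - y\kappa^{-S}(\theta)$ and choosing $\theta$ to make the exponent rate as negative as possible leads to the stationarity condition $\frac{d}{d\theta}\big[(y-1)\kappa^A(\theta) + y\kappa^{-S}(\theta)\big] = 0$, i.e. $(y-1)\dot\kappa^A(\theta) + y\dot\kappa^{-S}(\theta) = 0$, which is exactly $y\dot\kappa^{-S}(\theta) = -(y-1)\dot\kappa^A(\theta)$ as stated. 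The threshold $y_\gamma = \frac{\dot\kappa^A(\gamma)}{\dot\kappa^A(\gamma)+\dot\kappa^{-S}(\gamma)}$ is precisely the value of $y$ at which this optimal $\theta$ coincides with the stability root $\gamma$; for $y < y_\gamma$ the optimizer sits on the side where the horizon-truncated event dominates (giving the bound on $\mathbb{P}_{\bm i}(D(t)>d; t\le yd)$), and for $y > y_\gamma$ it sits on the complementary side, so one instead bounds the tail of $\{\tau > yd\}$, i.e. $\mathbb{P}_{\bm i}(D>d) - \mathbb{P}_{\bm i}(D(t)>d; t\le yd)$. The backlog case is entirely parallel with $A(t)-S(t)$ in place of $A(d,t)-S(0,t)$: the stopping time is $\tau = \inf\{t : A(t)-S(t)>b\}$, the exponent on $\{\tau \le yb\}$ is bounded by $\theta b - \tau(\kappa^A(\theta)+\kappa^{-S}(\theta))$ which on $\tau \le yb$ is at most $\theta b - yb(\kappa^A(\theta)+\kappa^{-S}(\theta))$ (again with a sign-dependent case split), yielding $\theta_y = \theta - y(\kappa^A(\theta)+\kappa^{-S}(\theta))$ with stationarity $y(\dot\kappa^A(\theta)+\dot\kappa^{-S}(\theta)) = 1$ and threshold $y_\gamma = \frac{1}{\dot\kappa^A(\gamma)+\dot\kappa^{-S}(\gamma)}$.

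The main obstacle I anticipate is handling the overshoot and the boundary terms cleanly in the non-stationary Markov-additive setting: because $\{J_t\}$ need not be stationary, one cannot simply invoke a renewal/stationary-overshoot estimate, so the eigenvector-ratio bounds $\max_j h_j / \min_j h_j$ must be carried through the optional-stopping step uniformly in the (random) stopping state — this is exactly the role of the $H_+^D, H_+^B$ constants and mirrors the treatment in Theorem \ref{theorem_performance_bound}, so I would reuse those lemmas verbatim. The second delicate point is justifying the direction of the inequality $(\tau-d)\kappa^A(\theta)+\tau\kappa^{-S}(\theta) \le d[(y-1)\kappa^A(\theta)+y\kappa^{-S}(\theta)]$ over $\tau/d\in[1,y]$: this is a statement that an affine function of $\tau$ attains its max at an endpoint, and which endpoint depends on whether the slope $\dot{}$-free coefficient $\kappa^{-S}(\theta)$ (resp. $\kappa^A(\theta)+\kappa^{-S}(\theta)$) is positive or negative at the optimal $\theta$ — and that sign is controlled precisely by the comparison of $y$ with $y_\gamma$, which is why the theorem is stated with the two regimes. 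Everything else is routine differentiation and the convexity of $\kappa^A, \kappa^{-S}$ guaranteeing the optimizing $\theta$ exists and is unique in the admissible interval.
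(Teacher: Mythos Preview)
Your proposal is correct and follows essentially the same two-phase argument as the paper: first the likelihood-ratio (change-of-measure) representation with the stopping time restricted to $\{\tau(d)\le yd\}$ or $\{\tau(d)>yd\}$ and the eigenvector-ratio bounds producing $H_+^D$, $H_+^B$; then the concave optimization of $\theta_y$ in $\theta$ together with a monotonicity argument that ties $y\lessgtr y_\gamma$ to the sign of $\kappa^A(\theta)+\kappa^{-S}(\theta)$. One small bookkeeping correction: in the delay case the exponent $(\tau-d)\kappa^A(\theta)+\tau\kappa^{-S}(\theta)$ is affine in $\tau$ with slope $\kappa^A(\theta)+\kappa^{-S}(\theta)$ (not $\kappa^{-S}(\theta)$ alone), so the endpoint selection in both delay and backlog is governed by the same sign condition, exactly as the paper's proof shows.
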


Note for the above infinite time and finite time results, we only give results of the conditional tail probability, which is sufficient to calculate the tail probability by averaging over the initial state.

The following result shows an upper bound of the time average of the Markov additive process.

\begin{lemma}\label{lemma_arrival_mean}
Consider a Markov additive process, $M(t)$, the mean of  time average of the process is smaller and equal than $\frac{\kappa(\theta)}{\theta}$, no matter the initial state, i.e.,
\begin{equation}
\lim_{t\rightarrow{\infty}} \mathbb{E}\left[ \frac{M(t)}{t} \right] \le \frac{\kappa(\theta)}{\theta},\ \forall \theta>0.
\end{equation}
\end{lemma}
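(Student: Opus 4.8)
The plan is to reduce the lemma to the asymptotic identity $\lim_{t\to\infty}\frac{1}{t}\log\mathbb{E}_i[e^{\theta M(t)}]=\kappa(\theta)$ and then apply Jensen's inequality exactly as in the proof of Theorem~\ref{theorem-delay-constrained-capacity}. Here $M(t)=S(t)$ is the additive component of the Markov additive process and $\kappa(\theta)$ is the logarithm of the Perron--Frobenius eigenvalue of $\widehat{\textbf{F}}[\theta]$, which is finite for $\theta$ in the interior of $\Theta$.

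First I would exploit the mean-one martingale $L(t)=\frac{h_{J_t}(\theta)}{h_{J_0}(\theta)}e^{\theta S(t)-t\kappa(\theta)}$ introduced in the structure specification. From $\mathbb{E}_i[L(t)]=1$ one gets $\mathbb{E}_i[h_{J_t}(\theta)e^{\theta S(t)}]=h_i(\theta)e^{t\kappa(\theta)}$. Since the state space is finite and the underlying chain is irreducible (so that Perron--Frobenius applies), the right eigenvector $\textbf{h}(\theta)$ has strictly positive entries, hence $0<\min_j h_j(\theta)\le h_{J_t}(\theta)\le\max_j h_j(\theta)<\infty$. Substituting these bounds into the martingale identity yields
\[
\frac{h_i(\theta)}{\max_j h_j(\theta)}\,e^{t\kappa(\theta)}\;\le\;\mathbb{E}_i[e^{\theta M(t)}]\;\le\;\frac{h_i(\theta)}{\min_j h_j(\theta)}\,e^{t\kappa(\theta)}.
\]
Taking logarithms, dividing by $t$ and letting $t\to\infty$ annihilates the bounded prefactors and gives $\lim_{t\to\infty}\frac{1}{t}\log\mathbb{E}_i[e^{\theta M(t)}]=\kappa(\theta)$, uniformly over the initial state $i$; this is precisely the ``no matter the initial state'' part of the claim.

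Next, by Jensen's inequality applied to the convex function $x\mapsto e^{\theta x}$ for fixed $\theta>0$, we have $e^{\theta\,\mathbb{E}_i[M(t)]}\le\mathbb{E}_i[e^{\theta M(t)}]$, hence $\theta\,\mathbb{E}_i[M(t)]\le\log\mathbb{E}_i[e^{\theta M(t)}]$. Dividing by $\theta t$ and combining with the previous step gives $\lim_{t\to\infty}\mathbb{E}_i[M(t)/t]\le\kappa(\theta)/\theta$ for every $\theta>0$, as desired. I expect the only delicate point to be the second step: showing that $\mathbb{E}_i[e^{\theta M(t)}]$ grows exactly like $e^{t\kappa(\theta)}$ up to a bounded, initial-state-independent factor. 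This rests on strict positivity of the Perron--Frobenius eigenvector on a finite irreducible state space and on $\theta$ being in the interior of $\Theta$, so that $\widehat{\textbf{F}}[\theta]$, $\kappa(\theta)$, and $\textbf{h}(\theta)$ are all well defined; once that is secured, the remainder is the same Jensen argument used for Theorem~\ref{theorem-delay-constrained-capacity}.
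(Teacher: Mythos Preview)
Your proposal is correct and uses essentially the same ingredients as the paper: the martingale identity $\mathbb{E}_i[e^{\theta M(t)}h_{J_t}(\theta)]=h_i(\theta)e^{t\kappa(\theta)}$, Jensen's inequality, and boundedness of the Perron--Frobenius eigenvector on the finite state space. The only organizational difference is that the paper applies Jensen (for the concave $\log$) directly to $\mathbb{E}_i[e^{\theta M(t)}h_{J_t}(\theta)]$ to obtain $\mathbb{E}_i[\theta M(t)]\le t\kappa(\theta)+\log h_i(\theta)-\mathbb{E}_i[\log h_{J_t}(\theta)]$ in one stroke, whereas you first squeeze out the $h_{J_t}$ factor to isolate $\mathbb{E}_i[e^{\theta M(t)}]$ and then apply Jensen; both routes kill the $O(1)$ eigenvector terms after dividing by $t$ and lead to the same conclusion.
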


\begin{proof}
For the Markov additive process $M(t)$, it's shown \cite{asmussen2010ruin}
\begin{equation}
\mathbb{E}_i\left[ e^{\theta M(t)}h_{J_t}^{(\theta)} \right] = h_i^{(\theta)}e^{t\kappa(\theta)}.
\end{equation}
With a logarithm operation, it follows Jensen's inequality that
\begin{equation}
\mathbb{E}_i\left[ \theta M(t) \right] \le t\kappa(\theta) + \log{h_i^{(\theta)}} - \mathbb{E}_i\left[ \log h_{J_t}^{(\theta)} \right],
\end{equation}
calculate the time average and let time go to infinity, $\forall \theta>0$,
\begin{equation}
\lim_{t\rightarrow{\infty}} \mathbb{E}_i\left[ \frac{M(t)}{t} \right] \le \frac{\kappa(\theta)}{\theta},
\end{equation}
where the right hand side is independent of the initial state.
\end{proof}

An upper bound of the delay-constrained capacity is as follows.

\begin{corollary}
Consider a Markov additive arrival $A(t)$ with state space $E$ and initial state distribution ${\bm\varpi}^A_0$, and a Markov additive capacity $S(t)$ with state space $E'$ and initial distribution ${\bm\varpi}^S_0$. Specifically, given the initial state distribution, the state distribution at time $t$ is ${\bm\varpi}_t={\bm\varpi}_0{\bm P}^t$. The delay-constrained capacity, conditional on the initial state ${\bm{J}}_{d,0}=\bm{i}$, i.e., $\left\{ J^{A}_d, J^{-S}_0 \right\}=\left\{ i^A, i^{-S} \right\}$, is bounded by
\begin{eqnarray}
\overline{C}_{\bm{i}}(d,\epsilon) \le \frac{-1}{\theta d}\log\frac{\mathbb{P}_{\bm i}(D > d)}{H^{D}_{+}\cdot h^{-S}_{J_0}(\theta)},
\end{eqnarray}
where
$H_{+}^D = \frac{\max_{j\in{E}}h_{j}^{A}{(\theta)}}{\min_{j\in{E}}h_{j}^{A}{(\theta)}} \cdot \frac{1}{\min_{j\in{E'}}h_{j}^{-S}(\theta)}$, 
%$\theta = \{ \theta>0: \kappa^{A}(\theta) + \kappa^{-S}(\theta) = 0 \}$,
and the delay-constrained capacity is bounded by
\begin{eqnarray}
\overline{C}(d,\epsilon) \le \sum {\bm\varpi}_{\bm i}  \frac{-1}{\theta d}\log\frac{\mathbb{P}_{\bm i}(D > d)}{H^{D}_{+}\cdot h^{-S}_{i}(\theta)},
\end{eqnarray}
where ${\bm\varpi}_{\bm i} = {\bm\varpi}^A_d \times {\bm\varpi}^{-S}_0$. The parameter $\theta$ is free for optimization.
\end{corollary}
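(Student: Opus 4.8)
The plan is to read the delay--constrained capacity through its variational definition and to bound the two ingredients separately: the long--run arrival rate via Lemma~\ref{lemma_arrival_mean}, and the delay violation probability via Theorem~\ref{theorem_performance_bound}.

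First I would fix an admissible arrival process, i.e.\ one satisfying $\mathbb{P}_{\bm{i}}(D(t)>d)\le\epsilon$ for all $t$, and condition on the composite initial state $\bm{J}_{d,0}=\bm{i}$. Applying Lemma~\ref{lemma_arrival_mean} to the Markov additive arrival process $A(t)$ gives
\[
\lim_{t\rightarrow\infty}\mathbb{E}_{\bm{i}}\!\left[\frac{A(t)}{t}\right]\le\frac{\kappa^{A}(\theta)}{\theta},\qquad\forall\,\theta>0,
\]
with a right-hand side that does not depend on $\bm{i}$. Then I would take the delay bound of Theorem~\ref{theorem_performance_bound}, $\mathbb{P}_{\bm{i}}(D>d)\le H^{D}_{+}\cdot h^{-S}_{J_0}(\theta)\cdot e^{-d\kappa(\theta)}$ with $\kappa(\theta)=\kappa^{A}(\theta)$ (used here in the form valid for $\theta$ in the admissible range, not only at the stability root), take logarithms and divide by $-\theta d$ to obtain
\[
\frac{\kappa^{A}(\theta)}{\theta}\le\frac{-1}{\theta d}\log\frac{\mathbb{P}_{\bm{i}}(D>d)}{H^{D}_{+}\cdot h^{-S}_{J_0}(\theta)}.
\]
Chaining the two displays bounds $\lim_{t\rightarrow\infty}\mathbb{E}_{\bm{i}}[A(t)/t]$ by the quantity on the right for every admissible arrival process and every $\theta$, and inserting this into $\overline{C}_{\bm{i}}(d,\epsilon)=\sup_{\mathbb{P}_{\bm{i}}(D(t)>d)\le\epsilon,\ \forall t}\mathbb{E}_{\bm{i}}[A(t)/t]$ yields the conditional inequality; $\theta$ survives as a free optimization parameter because both Lemma~\ref{lemma_arrival_mean} and the delay bound hold on a whole interval of $\theta$.

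For the unconditional bound I would decompose $\mathbb{E}[A(t)/t]=\sum_{\bm{i}}{\bm\varpi}_{\bm{i}}\,\mathbb{E}_{\bm{i}}[A(t)/t]$, where the independence of the arrival and service chains makes the composite initial-state law a product, ${\bm\varpi}_{\bm{i}}={\bm\varpi}^{A}_{d}\times{\bm\varpi}^{-S}_{0}$ (the arrival component being evaluated at time $d$, consistent with the time reversal used in the delay analysis). Passing to the limit $t\rightarrow\infty$ term by term, substituting the conditional estimate into each summand, and taking the supremum over admissible processes gives the stated sum.

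The step I expect to be the main obstacle is the self-referential character of the bound: its right-hand side carries the realized violation probability $\mathbb{P}_{\bm{i}}(D>d)$ of the particular arrival process under consideration rather than the design target $\epsilon$, so one must phrase it as an \emph{a posteriori} estimate that holds for every admissible process and therefore persists under the supremum defining $\overline{C}_{\bm{i}}(d,\epsilon)$, rather than as a closed form in $\epsilon$ alone. A secondary point requiring care is keeping the $\theta$ of Lemma~\ref{lemma_arrival_mean} and of Theorem~\ref{theorem_performance_bound} the same free variable, which is why the delay bound must be invoked in its general-$\theta$ (Chernoff/martingale) form rather than only at the stability root.
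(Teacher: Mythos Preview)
Your proposal is correct and follows essentially the same route as the paper: the paper's proof simply invokes Lemma~\ref{lemma_arrival_mean} to obtain $\overline{C}_{\bm{i}}(d,\epsilon)\le\sup_{\mathbb{P}_{\bm{i}}(D>d)\le\epsilon}\kappa^{A}(\theta)/\theta$ and then says the result follows directly from Theorems~\ref{theorem-delay-constrained-capacity} and~\ref{theorem_performance_bound}, which is exactly the chain you wrote out explicitly. Your remarks about the self-referential right-hand side and the need to use the delay upper bound at a general $\theta$ (not only the stability root) are genuine subtleties that the paper's two-line proof leaves implicit.
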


\begin{proof}
According to Lemma \ref{lemma_arrival_mean},
the delay-constrained capacity is bounded by
\begin{equation}
\overline{C}_{\bm{i}}(d,\epsilon) \le \sup_{ \mathbb{P}_{\bm{i}}(D>d)\le\epsilon } \frac{\kappa^{A}(\theta)}{\theta},
\end{equation}
then the result directly follows Theorem \ref{theorem-delay-constrained-capacity} and \ref{theorem_performance_bound}.
\end{proof}

\subsubsection{Peak Performance}

According to Corollary \ref{corollary-constant-optimal}, the wireless channel attains the best performance for constant arrival process in terms of the asymptotic decay identities, thus the constant arrival is fit for investigating the ultimate quality of service that the wireless channel can provide. On the other hand, it indicates that the ultimate wireless channel performance is solely determined by the statistical properties of the wireless channel regardless of the arrivals, in terms of some measure identities.

Assume the input is a constant process, 
\begin{equation}
A(t) = \lambda{t},
\end{equation}
Theorem \ref{theorem_performance_bound} reduces to a special case, of which the results are available in \cite{sun2017statistical} and restated here.

\begin{corollary}\label{corollary_performance_bound}
Consider a constant arrival process $A(t)=\lambda{t}$, the delay conditional on the initial state $J_0=i$ is bounded by
\begin{equation}
e^{-\theta A(1)} \cdot \frac{h_{J_0}{(-\theta)} e^{-\theta{\lambda{d}}} }{\max\limits_{j\in{E'}}h_{j}{(-\theta)}} \le \mathbb{P}_i(D > {d}) \le \frac{h_{J_0}{(-\theta)} e^{-\theta{\lambda{d}}} }{\min\limits_{j\in{E'}}h_{j}{(-\theta)}},
\end{equation}
where $-\theta$ is the negative root of $\kappa(\theta)=0$ of $S(t)-\lambda{t}$ and $\bm{h}{(-\theta)}$ is the corresponding right eigenvector,
given the initial state distribution $\bm{\varpi}$, the delay and backlog are bounded by 
\begin{eqnarray}
\mathbb{P}(D > d) &=& \sum_{i\in{E'}}\varpi_{i} \mathbb{P}_i(D > d), \\
\mathbb{P}(B > b) &=& \mathbb{P}(D > b/\lambda). 
\end{eqnarray}
\end{corollary}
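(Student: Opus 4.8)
The plan is to obtain this corollary as a direct specialization of Theorem \ref{theorem_performance_bound}, recovering the statement of \cite{sun2017statistical}; the only genuine work is tracking how a deterministic arrival curve collapses the arrival-side quantities. First I would model $A(t)=\lambda t$ as a degenerate Markov additive process on a one-point state space $E=\{\ast\}$ with transition matrix $[1]$ and increment law the point mass at $\lambda$. Then $\widehat{\mathbf{F}}^{A}[\theta]$ is the scalar $e^{\theta\lambda}$, so $\kappa^{A}(\theta)=\theta\lambda$ and the associated right eigenvector is $h^{A}_j(\theta)\equiv 1$, whence $\max_{j\in E}h^{A}_j(\theta)=\min_{j\in E}h^{A}_j(\theta)=1$. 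Since a deterministic process is independent of the service process, the independence hypothesis of Theorem \ref{theorem_performance_bound} is satisfied for free, and the conditioning state $\bm i=(i^A,i^{-S})$ reduces to a single capacity-side index $i\in E'$.

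Second, I would reconcile the stability equation with the statement's phrasing ``$-\theta$ is the negative root of $\kappa(\theta)=0$ of $S(t)-\lambda t$.'' The stability equation $\kappa^{A}(\theta)+\kappa^{-S}(\theta)=0$ becomes $\theta\lambda+\kappa^{-S}(\theta)=0$. Using $\widehat{\mathbf{F}}^{-S}[\theta]=\widehat{\mathbf{F}}^{S}[-\theta]$ one gets $\kappa^{-S}(\theta)=\kappa^{S}(-\theta)$, and from $\widehat{\mathbf{F}}^{\,S-\lambda t}[\vartheta]=e^{-\vartheta\lambda}\widehat{\mathbf{F}}^{S}[\vartheta]$ one gets $\kappa^{S-\lambda t}(\vartheta)=\kappa^{S}(\vartheta)-\vartheta\lambda$. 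Combining these, a positive $\theta$ solves the stability equation exactly when $\vartheta=-\theta$ is a negative root of $\kappa^{S-\lambda t}(\vartheta)=0$, and the right eigenvector $h^{-S}_j(\theta)$ of $\widehat{\mathbf{F}}^{-S}[\theta]$ is the same as the right eigenvector $h_j(-\theta)$ of $\widehat{\mathbf{F}}^{S}[-\theta]$ named in the statement.

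Third, I would substitute these identifications into the constants of Theorem \ref{theorem_performance_bound}. With $h^{A}\equiv 1$ the ratios drop out: $H^{D}_{+}=1/\min_{j\in E'}h_j(-\theta)$ and $H^{D}_{-}=e^{-\kappa^{A}(\theta)}/\max_{j\in E'}h_j(-\theta)=e^{-\theta A(1)}/\max_{j\in E'}h_j(-\theta)$ since $A(1)=\lambda$, while $e^{-d\kappa(\theta)}=e^{-\theta\lambda d}$ because $\kappa(\theta)=\kappa^{A}(\theta)=\theta\lambda$. Collecting factors in $H^{D}_{\mp}\cdot h^{-S}_{J_0}(\theta)\cdot e^{-d\kappa(\theta)}$ reproduces precisely the claimed two-sided bound on $\mathbb{P}_i(D>d)$, and the unconditional bound $\mathbb{P}(D>d)=\sum_{i\in E'}\varpi_i\,\mathbb{P}_i(D>d)$ follows from the law of total probability over the initial capacity state (there being no arrival-side state).

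Finally, for the backlog identity I would exploit that a linear arrival curve makes virtual delay and backlog proportional: from $D(t)=\inf\{d\ge 0:A(t-d)\le A^{\ast}(t)\}$ with $A(t)=\lambda t$ and $A^{\ast}(t)=A(t)-B(t)$, the defining inequality $\lambda(t-d)\le\lambda t-B(t)$ is equivalent to $d\ge B(t)/\lambda$, so $D(t)=B(t)/\lambda$ pathwise; taking the stationary (equivalently $t\to\infty$) limit gives $\mathbb{P}(B>b)=\mathbb{P}(D>b/\lambda)$. The main obstacle I expect is not any estimate but the sign bookkeeping of Step~2 — confirming that the positive stability root of the general theorem is exactly the negative root of $\kappa(\cdot)=0$ for $S(t)-\lambda t$ and that the eigenvector $\bm{h}(-\theta)$ of the statement is the one surviving the substitution — together with verifying that the degenerate single-state arrival process legitimately meets every clause invoked from Theorem \ref{theorem_performance_bound}.
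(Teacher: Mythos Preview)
Your proposal is correct and follows essentially the same approach as the paper: the corollary is presented there simply as the specialization of Theorem~\ref{theorem_performance_bound} to $A(t)=\lambda t$ (with a reference to \cite{sun2017statistical}), and your degenerate one-state Markov additive model, the identifications $\kappa^{A}(\theta)=\theta\lambda$, $h^{A}\equiv 1$, $h^{-S}_{j}(\theta)=h_{j}(-\theta)$, and the pathwise relation $D(t)=B(t)/\lambda$ are exactly what is needed to collapse $H^{D}_{\pm}$ and $e^{-d\kappa(\theta)}$ to the stated bounds. The sign bookkeeping you flagged in Step~2 is the right thing to verify and your argument for it is sound.
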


The delay-constrained capacity for the constant arrival is shown as follows.

\begin{corollary}
For constant fluid traffic $A(t)=\lambda{t}$, the delay-constrained capacity, letting $\mathbb{P}(D > d)=\epsilon$, is expressed as
\begin{equation}
\frac{-1}{\theta{d}}{\log\frac{ e^{\theta A(1)} \cdot \epsilon\cdot{\max\limits_{j\in{E'}}h_{j}{(-\theta)}}}{\sum_{i\in{E'}}{\varpi_i}h_{i}{(-\theta)}}} \le \lambda \le \frac{-1}{\theta{d}}{\log\frac{\epsilon\cdot{\min\limits_{j\in{E'}}h_{j}{(-\theta)}}}{\sum_{i\in{E'}}{\varpi_i}h_{i}{(-\theta)}}}.
\end{equation}
\end{corollary}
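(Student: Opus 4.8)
The plan is to derive the two-sided bound on $\lambda$ directly from the conditional delay bounds already established in Corollary~\ref{corollary_performance_bound}, by averaging over the initial state distribution and then inverting the resulting inequalities under the constraint $\mathbb{P}(D>d)=\epsilon$. No new estimates are needed beyond that corollary.

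First I would fix notation: for the constant arrival $A(t)=\lambda t$, let $-\theta$ be the negative root of $\kappa(\theta)=0$ of $S(t)-\lambda t$ and let $\bm h(-\theta)$ be the corresponding right eigenvector, exactly as in Corollary~\ref{corollary_performance_bound}; these depend only on the channel statistics and on $\lambda$, and by Perron--Frobenius the entries $h_j(-\theta)$ and the probability $\epsilon$ are positive, so all logarithms below are of positive arguments. Multiplying the per-state bounds of Corollary~\ref{corollary_performance_bound} by $\varpi_i$ and summing over $i\in E'$ (the $\max$ and $\min$ are taken over $E'$ and do not depend on $i$, so the sum passes through), together with $\mathbb{P}(D>d)=\sum_{i\in E'}\varpi_i\,\mathbb{P}_i(D>d)$, yields
\begin{equation}
e^{-\theta A(1)}\cdot\frac{e^{-\theta\lambda d}\sum_{i\in E'}\varpi_i h_i(-\theta)}{\max_{j\in E'}h_j(-\theta)} \;\le\; \mathbb{P}(D>d) \;\le\; \frac{e^{-\theta\lambda d}\sum_{i\in E'}\varpi_i h_i(-\theta)}{\min_{j\in E'}h_j(-\theta)}.
\end{equation}

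Next I substitute $\mathbb{P}(D>d)=\epsilon$ and solve each side for $e^{-\theta\lambda d}$. The upper bound gives $e^{-\theta\lambda d}\ge \epsilon\min_{j\in E'}h_j(-\theta)\big/\sum_{i}\varpi_i h_i(-\theta)$, and the lower bound gives $e^{-\theta\lambda d}\le \epsilon e^{\theta A(1)}\max_{j\in E'}h_j(-\theta)\big/\sum_{i}\varpi_i h_i(-\theta)$. Taking logarithms and then dividing by $-\theta d$, which is negative and therefore reverses each inequality, converts the lower bound on $e^{-\theta\lambda d}$ into the claimed upper bound on $\lambda$ and the upper bound on $e^{-\theta\lambda d}$ into the claimed lower bound on $\lambda$, producing exactly the displayed two-sided estimate.

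The only step requiring care is this sign bookkeeping at the end: $\theta d>0$, so both logarithmic inequalities flip on division by $-\theta d$, and one must track that the ``$\min$ over $E'$'' factor ends up in the upper bound for $\lambda$ while the ``$e^{\theta A(1)}\cdot\max$ over $E'$'' factor ends up in the lower bound. Apart from that, the argument is entirely routine algebra built on Corollary~\ref{corollary_performance_bound} and the averaging identity $\mathbb{P}(D>d)=\sum_{i\in E'}\varpi_i\mathbb{P}_i(D>d)$.
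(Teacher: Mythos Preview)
Your proposal is correct and follows essentially the same route as the paper: the paper's proof simply invokes the definition $\overline{C}(d,\epsilon)=\sup_{\mathbb{P}(D>d)\le\epsilon}\lambda$ and says the result follows directly from Corollary~\ref{corollary_performance_bound}, and you have spelled out precisely the averaging and inversion steps that this entails. The sign bookkeeping and the placement of the $\min$/$\max$ factors are handled correctly.
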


\begin{proof}
Consider the delay-constrained capacity for the constant fluid process $A(t) =\lambda{t}$,
\begin{equation}
\overline{C}{(d,\epsilon)} = \sup_{\mathbb{P}(D(t) > d) \le \epsilon, \forall t} \lambda,
\end{equation}
the result follows directly from Corollary \ref{corollary_performance_bound}.
\end{proof}

The ordering of the asymptotic decay rate reflects a rough ordering of the tail distribution, specifically, if the asymptotic stability condition is used, the tail distribution bounds have an identical decay rate in finite time regime and infinite time regime.
The impact of negative dependence and positive dependence in capacity on delay and comparison with independence in capacity are illustrated in Fig. \ref{delay_markov_negative_positive_dependence}.
We fix the noise power density and change the transmission power in SNR. 
By introducing negative dependence through Fr\'{e}chet copula, the wireless channel attains a better performance with less transmission power or smaller capacity mean in contrast to introducing positive dependence.

\begin{figure}[!t]
\centering
\includegraphics[width=3.5in]{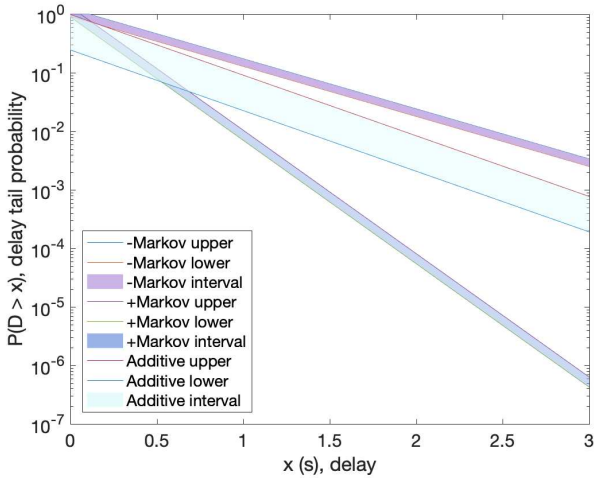}
\caption{Delay tail distribution of Rayleigh channel. ``-'' and ``+'' depict respectively negative and positive dependence in capacity, the lines depict the double-sided bounds with the intervals depicted as the shaded areas. $\lambda=10$kbits, $W=20$kHz, $SNR=e^{0.5}$ for the independence case of additive capacity process, $\textbf{SNR}=[e^{0.5}~e^{0.5}; 0.7e^{0.5}~0.7e^{0.5}]$, 
$\textbf{P}=[0.4125~0.5875; 0.2518~0.7482]$ 
calculated from Fr\'{e}chet copula with $\alpha=0.5$ for $\lambda-c(t)$ indicating positive dependence in capacity and 
$\textbf{P}=[0.2875~0.7125; 0.3054~0.6946]$ 
calculated from Fr\'{e}chet copula with $\alpha=-0.5$ for $\lambda-c(t)$ indicating negative dependence in capacity, 
for the dependence case of Markov additive capacity process with initial distribution $\bm{\varpi}=[0.3~0.7]$ and stationary distribution $\bm{\pi}=[0.3\ 0.7]$.
}
\label{delay_markov_negative_positive_dependence}
\end{figure}

\subsection{Dependence Manipulation}

We distinguish the random parameters in the wireless system, which cause the dependence in the wireless channel capacity, into two categories, i.e., uncontrollable parameters and controllable parameters.
Uncontrollable parameters represent the property of the environment that can not be interfered, e.g., fading, 
while controllable parameters represent the configurable property of the wireless system, e.g., power. 
We use the controllable parameters to induce negative dependence into the wireless channel capacity to achieve dependence control.

\subsubsection{Copula Representation}

Consider a joint distribution $F(X_1,$ $\ldots,X_n)$ with marginal distribution $F_i(x)$, $i=1,\ldots,n$. Denote $u_i = F_i\left(X_i\right)$, which is uniformly distributed in the unit interval, then \cite{embrechts2009copulas}
\begin{IEEEeqnarray}{rCl}
F\left( X_1,\ldots,X_n \right) &=& F\left( F_1^{-1}\left(u_1\right),\ldots, F_n^{-1}\left(u_n\right) \right) \\
&\equiv& C\left(u_1,\ldots,u_n\right),
\end{IEEEeqnarray}
where $C$ is a copula\footnote{In the wireless communication literature, $C$ usually represents capacity, while in the dependence modeling literature, $C$ usually represents copula, in this subsection, we abuse the notation to be consistent with the literature.} 
with standard uniform marginals, specifically, if the marginals are continuous, the copula is unique.

Let $(\Omega,\bm{\mathscr{F}},(\bm{\mathscr{F}}_t)_{t\in{N}}, \mathbb{P})$ be a filtered probability space and $(\bm{X}_t)_{t\in{N}}$ be an adapted stochastic process. 
$\bm{X}$ is a Markov process if and only if
\begin{IEEEeqnarray}{rCl}
\mathbb{P}\left( \bm{X}_t\le{x}|\bm{X}_{t-1}, \bm{X}_{t-2}, \ldots, \bm{X}_0 \right) = \mathbb{P}\left( \bm{X}_t\le{x}|\bm{X}_{t-1} \right). \IEEEeqnarraynumspace
\end{IEEEeqnarray}
The Markov property is solely a dependence property that can be modeled exclusively in terms of copulas \cite{darsow1992copulas,overbeck2015multivariate}.
Since there is no requirement on the $1$-dimensional marginal distribution $X_t^i$ for $\bm{X}$ to be Markov, starting with a Markov process, a multitude of other Markov processes can be constructed by just modifying the marginal distributions \cite{darsow1992copulas,overbeck2015multivariate}. 
%(Copula is introduced in Appendix \ref{copula_introduction}.)

We use the copula representation of the Markov property in  \cite{darsow1992copulas,overbeck2015multivariate}. 
The $n$-dimensional process $\mathbf{X}$ is a Markov process, if and only if, for all $t_1< t_2<\ldots < t_p$, the copula $C_{t_1,\ldots,t_p}$ of $(\mathbf{X}_{t_1},\ldots,\mathbf{X}_{t_p})$ satisfies \cite{overbeck2015multivariate}
\begin{IEEEeqnarray}{rCl}
C_{t_1,\ldots,t_p} = C_{t_1,t_2}\stackrel{C_{t_2}(.)}{\star}C_{t_2,t_3}\stackrel{C_{t_3}(.)}{\star}\ldots\stackrel{C_{t_{p-1}}(.)}{\star}C_{t_{p-1},t_p}. \IEEEeqnarraynumspace
\end{IEEEeqnarray}
Provided that the integral exists for all $\mathbf{x}$, $\mathbf{y}$, $\mathbf{z}$, 
the operator $\stackrel{C(.)}{\star}$ is defined by
\begin{equation}
(A\stackrel{C(\mathbf{z})}{\star}B)(\mathbf{x},\mathbf{y}) = \int_{0}^{\mathbf{z}} A_{,C}(\mathbf{x},\mathbf{r})\cdot B_{C,}(\mathbf{r},\mathbf{y})C(d\mathbf{r}),
\end{equation}
where $A$ is a $(k + n)$-dimensional copula, $B$ is a $(n + l)$-dimensional
copula, $C$ is a $n$-dimensional copula, and $A(\bm{x},d\bm{y}) =A_{,C}(\mathbf{x}, \mathbf{y})C(d\bm{y})$ and $B(d\bm{x},\bm{y})=B_{C,}(\mathbf{x}, \mathbf{y})C(d\bm{x})$ are respectively the derivative of the copula $A(\mathbf{x}, .)$ and $B(., \mathbf{y})$ with respect to the copula $C$.
$A_{,C}$ and $B_{C,}$ are well-defined.
Specifically, for $1$-dimensional Markov process, the copula is expressed by \cite{darsow1992copulas}
\begin{equation}
C_{t_1\ldots{t_n}} = C_{t_1 t_2}\star C_{t_2 t_3}\star\ldots\star C_{t_{n-1}t_n},
\end{equation}
where $C_{t_1\ldots{t_n}}$ is the copula of $\left(X_{t_1},\ldots,X_{t_n}\right)$, $C_{t_{k-1}t_k}$ is the copula of $\left( X_{t_{k-1}}, X_{t_k} \right)$, and $A\star{B}$ is defined as
\begin{multline}
A\star B \left( x_1,\ldots,x_{m+n-1} \right) = \\
\int\limits_{0}^{x_m} 
\frac{\partial A_{,m}(x_1,\ldots,x_{m-1},\xi)}{\partial{\xi}} \frac{\partial B_{1,}(\xi,x_{m+1},\ldots,x_{m+n-1})}{\partial\xi} d\xi,
\end{multline}
for $m$-dimensional copula $A$ and $n$-dimensional copula $B$.

Examples of Markov family copula are Gaussian copula and Fr\'{e}chet copula \cite{overbeck2015multivariate}.
The $2$-dimensional Fr\'{e}chet copula is available in \cite{darsow1992copulas} and the $n$-dimensional extension is available in \cite{overbeck2015multivariate}.

\begin{example}
The $n$-dimensional Gausssian copula is expressed as 
\begin{eqnarray}
C_{\bm{\Sigma}}(\bm{u}) = \Phi_{\bm{\Sigma}}\left( \Phi^{-1}(u_1),\ldots,\Phi^{-1}(u_n) \right),
\end{eqnarray}
where $\Phi_{\bm{\Sigma}}$ denotes the joint distribution of the $n$-dimensional standard normal distribution with linear correlation matrix $\bm{\Sigma}$, and $\Phi^{-1}$ denotes the inverse of the distribution function of the $1$-dimensional standard normal distribution.
\end{example}

The extremely positive dependence, independence, and extremely negative dependence are expressed by copulas.
For $2$-dimensional copula, the extremely positive copula, product copula (independence), and extremely negative copula are defined as $M(x,y) = \min(x,y)$, $P(x,y) = xy$, and $W(x,y) = \max(x+y-1,0)$.

\begin{example}
A convex combination of $M$, $P$, and $W$ is a Markov family copula, i.e.,
\begin{equation}
C_{st} = \alpha(s,t)W + (1-\alpha(s,t)-\beta(s,t))P + \beta(s,t)M,
\end{equation}
if and only if \cite{darsow1992copulas,overbeck2015multivariate}, for $s<u<t$,
\begin{eqnarray}
\alpha(s,t) &=& \beta(s,u)\alpha(u,t) + \alpha(s,u)\beta(u,t), \\
\beta(s,t) &=& \alpha(s,u)\alpha(u,t) + \beta(s,u)\beta(u,t),
\end{eqnarray}
where $\alpha(s,t)\ge{0}$, $\beta(s,t)\ge{0}$, and $\alpha(s,t)+\beta(s,t)\le{1}$. 
For homogeneous case, $\alpha(s,t)=\alpha(t-s)$ and $\beta(s,t)=\beta(t-s)$, a solution is as follows
\begin{eqnarray}
\alpha(h) &=& {e^{-2h}(1 - e^{-h})}/{2}, \\
\beta(h) &=& {e^{-2h}(1 + e^{-h})}/{2}.
\end{eqnarray}
Let $\alpha = e^{-h}$, it's a one-parameter copula \cite{darsow1992copulas}
\begin{equation}
C_\alpha = \frac{\alpha^2(1-\alpha)}{2}W + (1-\alpha^2)P + \frac{\alpha^2(1+\alpha)}{2}M,
\end{equation}
where $-1\le\alpha\le{1}$, if $|\alpha|$ is small, independence is indicated, if $\alpha$ is near $1$, strongly positive dependence is indicated, and if $\alpha$ is near $-1$, strongly negative dependence is indicated.
\end{example}

\subsubsection{Copula Manipulation}

We assume no Granger causality among random parameter processes.
No-Granger causality is a concept initially introduced in econometrics and refers to a multivariate dynamic system in which each variable is determined by its own lagged values and no further information is provided by the lagged values of other variables \cite{cherubini2011copula}. 

\begin{proposition}
For a $n$-dimensional process $\bm{X}$, $\bm{X}^1,\ldots,\bm{X}^{i-1}$, $\bm{X}^{i+1},\ldots,\bm{X}^n$ do not Granger cause $\bm{X}^i$, if \cite{cherubini2010copula,cherubini2011copula}
\begin{equation}
\mathbb{P}\left( X^{i}_{t_{k+1}} \le x | \mathscr{F}_{t_k}^{\bm{X}^1,\ldots,\bm{X}^n} \right) = \mathbb{P}\left( X^{i}_{t_{k+1}} \le x | \mathscr{F}_{t_k}^{\bm{X}^i} \right).
\end{equation}
\end{proposition}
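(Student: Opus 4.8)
The plan is to read this proposition as the measure-theoretic form of the classical econometric notion of Granger non-causality, so that very little is genuinely \emph{proved}: if one adopts --- as is standard in the copula--Granger literature \cite{cherubini2011copula,cherubini2010copula} --- the one-step conditional-law condition in the display as the \emph{definition} of ``$\bm{X}^1,\ldots,\bm{X}^{i-1},\bm{X}^{i+1},\ldots,\bm{X}^n$ do not Granger cause $\bm{X}^i$'', the statement is immediate. I would therefore present the content that does require an argument: (i) that the equality of the one-step-ahead conditional distribution functions is equivalent to the equality of $\mathbb{E}\left[\phi\left(X^i_{t_{k+1}}\right)\mid\cdot\right]$ for every bounded Borel $\phi$ --- i.e.\ to conditional independence of the next value of $\bm{X}^i$ from the lagged values of the other coordinates given its own past --- and (ii), should the ambient definition be posed at all prediction horizons rather than one step, that the one-step identity propagates forward.

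For (i) I would pass to regular conditional distributions $\mu_1(\omega,\cdot)$ and $\mu_2(\omega,\cdot)$ of $X^i_{t_{k+1}}$ given $\mathscr{F}_{t_k}^{\bm{X}^1,\ldots,\bm{X}^n}$ and given $\mathscr{F}_{t_k}^{\bm{X}^i}$, respectively; these exist since the state space is Polish. The hypothesis asserts $\mu_1(\omega,(-\infty,x]) = \mu_2(\omega,(-\infty,x])$ for almost every $\omega$ and each fixed $x$; intersecting the countably many exceptional null sets indexed by rational $x$ and using right-continuity of distribution functions gives $\mu_1(\omega,\cdot) = \mu_2(\omega,\cdot)$ as Borel measures for almost every $\omega$. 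Integrating an arbitrary bounded Borel $\phi$ against the two now-equal kernels yields
\begin{equation}
\mathbb{E}\left[ \phi\left( X^i_{t_{k+1}} \right) \mid \mathscr{F}_{t_k}^{\bm{X}^1,\ldots,\bm{X}^n} \right] = \mathbb{E}\left[ \phi\left( X^i_{t_{k+1}} \right) \mid \mathscr{F}_{t_k}^{\bm{X}^i} \right] \quad \text{a.s.},
\end{equation}
which is Granger non-causality in its prediction form.

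For (ii) I would extend to an arbitrary finite horizon $p$ by induction on $p$, writing the $p$-step conditional expectation relative to the coarser $\sigma$-algebra $\mathscr{F}_{t_k}^{\bm{X}^1,\ldots,\bm{X}^n}$ through the tower property with the intermediate $\sigma$-algebra $\mathscr{F}_{t_{k+1}}^{\bm{X}^1,\ldots,\bm{X}^n}$ inserted, and peeling off the innermost factor using the base case. I expect the collapsing of that intermediate $\sigma$-algebra to be the main obstacle: $\mathscr{F}_{t_{k+1}}^{\bm{X}^1,\ldots,\bm{X}^n}$ still records the \emph{other} coordinates at time $t_{k+1}$, so a bare tower-property induction does not close; one has to invoke the Markov structure of $\bm{X}$ in the surrounding model --- precisely the structure characterised in the excerpt through the $\star$-product of copulas --- so that the relevant one-step kernel factors through the $\bm{X}^{i}$-coordinate and the surplus information drops out, replacing $\mathscr{F}_{t_{k+1}}^{\bm{X}^1,\ldots,\bm{X}^n}$ by $\mathscr{F}_{t_{k+1}}^{\bm{X}^i}$ inside the surviving conditional expectation. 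The remaining bookkeeping --- that $\mathscr{F}_{t_k}^{\bm{X}^i}\subseteq\mathscr{F}_{t_k}^{\bm{X}^1,\ldots,\bm{X}^n}$ so the tower property applies, and that the conditional distribution functions above are taken in jointly measurable versions --- is routine and I would confine it to a remark.
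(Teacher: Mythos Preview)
Your instinct is correct: in the paper this Proposition is not proved at all. It is simply the definition of no-Granger causality lifted from the cited econometrics/copula literature, stated as a proposition and attributed to \cite{cherubini2010copula,cherubini2011copula}. The paper offers no argument, no equivalence with a prediction form, and no multi-horizon extension; it immediately uses this one-step conditional-distribution identity as the working definition in the subsequent theorems (Theorems \ref{theorem_no_granger_causality_biset} and \ref{theorem_no_granger_causality}).

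Your parts (i) and (ii) therefore go well beyond the paper. Part (i) is a sound and standard passage from equality of conditional CDFs to equality of regular conditional laws, and is fine as supplementary justification. Part (ii), however, is not needed here and your own caveat is accurate: the bare tower-property induction does not close without the ambient Markov assumption, and the paper does \emph{not} claim a multi-horizon version of this Proposition --- the Markov structure enters only later, in the copula $\star$-product characterisation. So if you keep (ii), frame it explicitly as requiring the surrounding Markov hypothesis rather than as part of the Proposition itself; otherwise simply drop it and note, as the paper effectively does, that the displayed identity \emph{is} the definition.
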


No-Granger causality and Markov property of each process with respect to its natural filtration together imply the Markov structure of the system as a whole \cite{cherubini2010copula,cherubini2011copula}.
However, additional restriction is required for the converse to hold, the $2$-dimensional result is available in \cite{cherubini2011copula}, and the following theorem is an extension to $n$-dimensional case.
We prove the theorem in Appendix \ref{proof-of-theorem-no-granger-causality-biset}.

\begin{theorem}\label{theorem_no_granger_causality_biset}
For a $n$-dimensional Markov process $\bm{X}$ consisting two dimension sets $\overline{\bm{X}}$ and $\underline{\bm{X}}$, $\bm{X}= \overline{\bm{X}} \cup \underline{\bm{X}}$, $\overline{\bm{X}}$ does not Granger cause $\underline{\bm{X}}$, if and only if
\begin{multline}
C_{j,j+1} \left(\bm{u}_{\underline{\bm{X}}_j}, \bm{u}_{\overline{\bm{X}}_j}, \bm{u}_{\underline{\bm{X}}_{j+1}}, \bm{1}_{\bm{u}_{\overline{\bm{X}}_{j+1}}} \right)  \\
= C_{\overline{\bm{X}}_j\underline{\bm{X}}_j} \stackrel{C_{\underline{\bm{X}}_j} \qty(\bm{u}_{\underline{\bm{X}}_j})}{\star} C_{\underline{\bm{X}}_j\underline{\bm{X}}_{j+1}} \left(\bm{u}_{\overline{\bm{X}}_j}, \bm{u}_{\underline{\bm{X}}_{j+1}} \right),
\end{multline}
$\underline{\bm{X}}$ does not Granger cause $\overline{\bm{X}}$, if and only if
\begin{multline}
C_{j,j+1} \left(\bm{u}_{\underline{\bm{X}}_j}, \bm{u}_{\overline{\bm{X}}_j}, \bm{1}_{\bm{u}_{\underline{\bm{X}}_{j+1}}}, {\bm{u}_{\overline{\bm{X}}_{j+1}}} \right)  \\
= C_{\underline{\bm{X}}_j\overline{\bm{X}}_j} \stackrel{C_{\overline{\bm{X}}_j} \qty(\bm{u}_{\overline{\bm{X}}_j})}{\star} C_{\overline{\bm{X}}_j\overline{\bm{X}}_{j+1}} \left(\bm{u}_{\underline{\bm{X}}_j}, \bm{u}_{\overline{\bm{X}}_{j+1}} \right).
\end{multline}
\end{theorem}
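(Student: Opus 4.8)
The plan is to adapt the bivariate characterization of \cite{cherubini2011copula}, lifting it to the multivariate splitting $\bm{X}=\overline{\bm{X}}\cup\underline{\bm{X}}$ by means of the $\star$-product calculus for Markov copulas of \cite{darsow1992copulas,overbeck2015multivariate}. I will treat the first equivalence in detail; the second is obtained verbatim after interchanging the roles of $\overline{\bm{X}}$ and $\underline{\bm{X}}$. The starting observation is that, by definition of no-Granger causality, ``$\overline{\bm{X}}$ does not Granger cause $\underline{\bm{X}}$'' reads $\mathbb{P}(\underline{\bm{X}}_{t_{k+1}}\le\cdot\mid\mathscr{F}^{\bm{X}}_{t_k})=\mathbb{P}(\underline{\bm{X}}_{t_{k+1}}\le\cdot\mid\mathscr{F}^{\underline{\bm{X}}}_{t_k})$. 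Since $\bm{X}$ is Markov, so are the coordinate blocks, and both conditional laws collapse onto a single time slice: the left-hand side onto $\bm{X}_{t_k}=(\underline{\bm{X}}_{t_k},\overline{\bm{X}}_{t_k})$ and the right-hand side onto $\underline{\bm{X}}_{t_k}$. Hence the condition to be encoded is the a.s.\ identity of transition laws $\mathbb{P}(\underline{\bm{X}}_{j+1}\le\cdot\mid\bm{X}_j)=\mathbb{P}(\underline{\bm{X}}_{j+1}\le\cdot\mid\underline{\bm{X}}_j)$, and the rest of the proof is its translation into copula language.

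For the ``only if'' direction I would first identify the left-hand side of the claimed identity. Setting the copula arguments of $\overline{\bm{X}}_{j+1}$ to $\bm{1}$ in $C_{j,j+1}$ marginalizes out that block, so $C_{j,j+1}(\bm{u}_{\underline{\bm{X}}_j},\bm{u}_{\overline{\bm{X}}_j},\bm{u}_{\underline{\bm{X}}_{j+1}},\bm{1})$ is the copula of $(\underline{\bm{X}}_j,\overline{\bm{X}}_j,\underline{\bm{X}}_{j+1})$. Disintegrating its law over $\bm{X}_j$ and invoking no-Granger causality to replace the transition law of $\underline{\bm{X}}_{j+1}$ given $\bm{X}_j$ by the transition law given only $\underline{\bm{X}}_j$, the joint law factors as the integral over $\underline{\bm{X}}_j$ of the conditional law of $\overline{\bm{X}}_j$ given $\underline{\bm{X}}_j$ times the transition law of $\underline{\bm{X}}_{j+1}$ given $\underline{\bm{X}}_j$, integrated against the copula $C_{\underline{\bm{X}}_j}$. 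Writing these conditionals as the Radon--Nikodym derivatives $A_{,C}$ and $B_{C,}$ of $C_{\overline{\bm{X}}_j\underline{\bm{X}}_j}$ and $C_{\underline{\bm{X}}_j\underline{\bm{X}}_{j+1}}$ with respect to $C_{\underline{\bm{X}}_j}$, this integral is by definition $C_{\overline{\bm{X}}_j\underline{\bm{X}}_j}\stackrel{C_{\underline{\bm{X}}_j}(\bm{u}_{\underline{\bm{X}}_j})}{\star}C_{\underline{\bm{X}}_j\underline{\bm{X}}_{j+1}}(\bm{u}_{\overline{\bm{X}}_j},\bm{u}_{\underline{\bm{X}}_{j+1}})$, which is the right-hand side.

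For the ``if'' direction I would reverse the computation: assuming the copula identity, differentiate both sides with respect to the copula arguments of $(\underline{\bm{X}}_j,\overline{\bm{X}}_j)$ — i.e.\ take the derivative with respect to the marginal copula $C_{\bm{X}_j}$ — to extract the transition kernel of $\underline{\bm{X}}_{j+1}$ given $\bm{X}_j$. Because the right-hand side is glued along $C_{\underline{\bm{X}}_j}$ and the dependence on $\bm{u}_{\overline{\bm{X}}_j}$ enters only through the $C_{\overline{\bm{X}}_j\underline{\bm{X}}_j}$ factor, the resulting kernel for $\underline{\bm{X}}_{j+1}$ turns out to depend on $\underline{\bm{X}}_j$ alone; this is exactly $\mathbb{P}(\underline{\bm{X}}_{j+1}\le\cdot\mid\bm{X}_j)=\mathbb{P}(\underline{\bm{X}}_{j+1}\le\cdot\mid\underline{\bm{X}}_j)$, and the Markov property of $\bm{X}$ then promotes the conditioning from $\bm{X}_{t_j}$ back to $\mathscr{F}^{\bm{X}}_{t_j}$, which is no-Granger causality.

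The step I expect to be the main obstacle is the measure-theoretic bookkeeping around the multivariate $\star$-product: one must verify that replacing a block of arguments of $C_{j,j+1}$ by $\bm{1}$ really does correspond to integrating out those coordinates, that this marginalization commutes with $\stackrel{C(.)}{\star}$ along the correct lower-dimensional copula $C_{\underline{\bm{X}}_j}$ rather than $C_{\bm{X}_j}$, and that the mixed derivatives $A_{,C}$ and $B_{C,}$ entering the definition are the appropriate regular versions of the conditional distributions (their existence is already granted in the excerpt). Once the three conditional laws — of $\overline{\bm{X}}_j$ given $\underline{\bm{X}}_j$, of $\underline{\bm{X}}_{j+1}$ given $\underline{\bm{X}}_j$, and of $\underline{\bm{X}}_{j+1}$ given $\bm{X}_j$ — are correctly matched with the corresponding copula derivatives, both implications reduce to a direct transcription of the definition of no-Granger causality, as in the $2$-dimensional result of \cite{cherubini2011copula}.
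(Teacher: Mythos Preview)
Your proposal is correct and follows essentially the same route as the paper's proof: express the two conditional laws $\mathbb{P}(\underline{\bm{X}}_{j+1}\le\cdot\mid\bm{X}_j)$ and $\mathbb{P}(\underline{\bm{X}}_{j+1}\le\cdot\mid\underline{\bm{X}}_j)$ as the copula derivatives $C_{j,j+1\,C_{\underline{\bm{X}}_j\overline{\bm{X}}_j},}$ and $C_{j,j+1\,C_{\underline{\bm{X}}_j},}$, equate them as the differential form of no-Granger causality, and then integrate against $C_{\underline{\bm{X}}_j}$ to obtain the $\star$-product identity (with the converse obtained by differentiating). One small caution: the line ``since $\bm{X}$ is Markov, so are the coordinate blocks'' is not true in general and is not needed; the paper simply works directly with the one-step conditionals, and you can do the same.
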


\begin{remark}
Specifically, for the wireless channel capacity that is modeled by a multivariate Markov process, let $\overline{\bm{X}}$ and $\underline{\bm{X}}$ represent respectively the uncontrollable and controllable parameters.
The no-Granger causality guarantees that if the uncontrollable and controllable parameters form a multivariate Markov process, the processes of the uncontrollable and controllable parameters are also Markov processes, which is necessary in dependence control because we need to model the uncontrollable parameters with a certain process and to configure the controllable parameters in a certain way based on a certain process.
\end{remark}

A stronger restriction is that all the $1$-dimensional Markov processes do not Granger cause each other, and the results are as follows.
We present the proof in Appendix \ref{proof-of-theorem-no-granger-causality-stronger}.

\begin{theorem}\label{theorem_no_granger_causality}
For a $n$-dimensional Markov process $\bm{X}$ with temporal copula $C_{j,j+1}$ and spatial copula $C_j$,
\begin{IEEEeqnarray}{rCl}
\mathbb{P}\left( X^{i}_{t_{k+1}} \le x | {\bm{X}^1_{t_k},\ldots,\bm{X}^n_{t_k}} \right) = \mathbb{P}\left( X^{i}_{t_{k+1}} \le x | {\bm{X}^i_{t_k}} \right), \IEEEeqnarraynumspace
\end{IEEEeqnarray}
if and only if
\begin{multline}
C_{j,j+1} \left( x_j^{1},\ldots,x_j^n,1,\ldots,x_{j+1}^i,\ldots,1 \right)  \\
= C_j^{,i} \star C_{j,j+1}^i \left( x_j^1,\ldots,x_j^{i-1},x_j^{i+1},\ldots,x_j^n,x_j^i,x_{j+1}^i \right),
\end{multline}
where $C_j^{,i}$ is the reordered spatial copula, and $C_{j,j+1}^i$ is the temporal copula of the $1$-dimensional Markov process $\bm{X}^i$.
\end{theorem}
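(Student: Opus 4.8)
The plan is to deduce Theorem~\ref{theorem_no_granger_causality} from Theorem~\ref{theorem_no_granger_causality_biset} by taking the ``cause'' set to be a single coordinate. Fix $i$ and partition the coordinates of the $n$-dimensional Markov process as $\underline{\bm X}:=\bm X^i$ and $\overline{\bm X}:=(\bm X^1,\ldots,\bm X^{i-1},\bm X^{i+1},\ldots,\bm X^n)$, so that $\bm X=\overline{\bm X}\cup\underline{\bm X}$. With this partition, ``$\overline{\bm X}$ does not Granger cause $\underline{\bm X}$'' is, by definition, exactly the identity $\mathbb P(X^i_{t_{k+1}}\le x\mid \bm X^1_{t_k},\ldots,\bm X^n_{t_k})=\mathbb P(X^i_{t_{k+1}}\le x\mid \bm X^i_{t_k})$ appearing in Theorem~\ref{theorem_no_granger_causality}. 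Moreover, combining the Markov property of $\bm X$ with this no-Granger-causality condition and the tower rule shows that $\bm X^i$ is itself a Markov process, so that the temporal copula $C_{j,j+1}^i$ of $\bm X^i$ is meaningful. It therefore only remains to check that the copula characterization in Theorem~\ref{theorem_no_granger_causality_biset} specializes, for this choice of partition, to the displayed $\star$-factorization.

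For that I would specialize, term by term, the first equivalence of Theorem~\ref{theorem_no_granger_causality_biset}. When $\underline{\bm X}=\bm X^i$: the left-hand side $C_{j,j+1}(\bm u_{\underline{\bm X}_j},\bm u_{\overline{\bm X}_j},\bm u_{\underline{\bm X}_{j+1}},\bm 1_{\bm u_{\overline{\bm X}_{j+1}}})$ becomes $C_{j,j+1}$ with every time-$(j{+}1)$ argument set to $1$ except the $i$-th, i.e.\ $C_{j,j+1}(x_j^1,\ldots,x_j^n,1,\ldots,x_{j+1}^i,\ldots,1)$; the spatial factor $C_{\overline{\bm X}_j\underline{\bm X}_j}$ is the spatial copula $C_j$ with coordinates permuted so that the $i$-th sits last, i.e.\ the reordered copula $C_j^{,i}$; the temporal factor $C_{\underline{\bm X}_j\underline{\bm X}_{j+1}}$ is the copula of $(X^i_{t_j},X^i_{t_{j+1}})$, i.e.\ $C_{j,j+1}^i$; and the linking copula $C_{\underline{\bm X}_j}$ is one-dimensional, hence the identity on $[0,1]$, so that $C_{\underline{\bm X}_j}(dr)=dr$ and the conditional densities $A_{,C}$, $B_{C,}$ in the definition of $\stackrel{C(\cdot)}{\star}$ reduce to ordinary partial derivatives in the bridging variable; consequently the weighted product $\stackrel{C_{\underline{\bm X}_j}(\cdot)}{\star}$ degenerates to the scalar operator $\star$ recalled in Section~\ref{application}, its integral running up to the shared argument $x_j^i$. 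Assembling these substitutions turns the equivalence of Theorem~\ref{theorem_no_granger_causality_biset} into
\begin{multline*}
C_{j,j+1}(x_j^1,\ldots,x_j^n,1,\ldots,x_{j+1}^i,\ldots,1)\\
=C_j^{,i}\star C_{j,j+1}^i\left(x_j^1,\ldots,x_j^{i-1},x_j^{i+1},\ldots,x_j^n,x_j^i,x_{j+1}^i\right),
\end{multline*}
which is the claim; letting $i$ run over $1,\ldots,n$ then gives the stronger restriction in which no one-dimensional component Granger-causes another.

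I expect the main obstacle to be the last substitution: one has to verify rigorously that the general weighted $\star$-product of Section~\ref{application}, evaluated with a one-dimensional linking copula, really does coincide with the scalar $\star$ --- in particular that the vector upper limit $\bm z$ in $A\stackrel{C(\bm z)}{\star}B$ collapses to the single bridging coordinate $x_j^i$ and that $A_{,C},B_{C,}$ reduce to $\partial A/\partial\xi,\ \partial B/\partial\xi$ --- plus the routine but error-prone bookkeeping of the permutation defining $C_j^{,i}$ and the matching of argument orders on both sides. A self-contained alternative, bypassing Theorem~\ref{theorem_no_granger_causality_biset} altogether, is to express $\mathbb P(X^i_{t_{k+1}}\le x\mid \bm X_{t_k}=\bm x)$ as the ratio of the mixed partial derivatives in all time-$t_k$ arguments of $C_{j,j+1}$ (with its time-$(j{+}1)$ arguments frozen as above) and of $C_j$, and then to show --- using the Markov property of $\bm X$ and of $\bm X^i$ --- that this ratio depends on $\bm x$ only through $x_i$ precisely when the $\star$-factorization holds; this reproduces, in the $n$-dimensional setting, the bivariate computation of Cherubini et al.\ \cite{cherubini2011copula}.
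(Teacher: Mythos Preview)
Your proposal is correct. It differs from the paper's own argument: whereas you deduce Theorem~\ref{theorem_no_granger_causality} as a genuine corollary of Theorem~\ref{theorem_no_granger_causality_biset} by specializing the partition to $\underline{\bm X}=\bm X^i$ and then checking that the weighted operator $\stackrel{C(\cdot)}{\star}$ collapses to the scalar $\star$ when the linking copula is one-dimensional, the paper instead redoes the computation from scratch ``analogically''---writing $\mathbb P(X^i_{t_{j+1}}\le x\mid \bm X_{t_j})$ and $\mathbb P(X^i_{t_{j+1}}\le x\mid X^i_{t_j})$ explicitly as ratios of mixed partial derivatives of $C_{j,j+1}$ and $C_{j,j+1}^i$, equating them, and integrating in $x_j^i$. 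Your route is more economical since Theorem~\ref{theorem_no_granger_causality_biset} is already in hand; the paper's direct calculation has the virtue of being self-contained and of displaying the role of the spatial copula $C_j$ without detouring through the general $\stackrel{C(\cdot)}{\star}$ machinery. In fact, the ``self-contained alternative'' you sketch in your final paragraph \emph{is} the paper's proof.
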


\begin{example}
For a $2$-dimensional Markov process $\bm{X}$, $\bm{X}^2$ does not Granger cause $\bm{X}^1$, if and only if \cite{cherubini2011copula}
\begin{equation}
C_{j,j+1}(u_1,v_1,u_2,1) = C_{X^2_j,X^1_j} \star C_{X^1_j,X^1_{j+1}}(v_1,u_1,u_2),
\end{equation}
and $\bm{X}^1$ does not Granger cause $\bm{X}^2$, if and only if \cite{cherubini2011copula}
\begin{equation}
C_{j,j+1}(u_1,v_1,1,v_2) = C_{X^1_j,X^2_j} \star C_{X^2_j,X^2_{j+1}}(u_1,v_1,u_2).
\end{equation}
In the special case, if the spatial dependence is expressed by the product copula, then 
\begin{eqnarray}
C_{j,j+1}(u_1,v_1,u_2,1) &=& v_1 C_{X^1_{j}X^1_{j+1}} \left( u_1,u_2 \right),\\
C_{j,j+1}(u_1,v_1,1,v_2) &=& u_1 C_{X^2_{j}X^2_{j+1}} \left( v_1,v_2 \right).
\end{eqnarray}
\end{example}

Since the copula requires continuity by definition, interpolation is needed to construct a copula from the transition matrix of a Markov process \cite{darsow1992copulas}, while it's not needed to calculate the transition matrix from a copula.
The approach to calculate the transition probability of a Markov chain given the copula of the two consecutive levels is summarized in the following theorem.

\begin{theorem}
For a $1$-dimensional Markov process with finite state space $E$ and initial distribution $\bm\varpi$, given the copula between successive levels $C_{j,j+1}$, 
%the transition matrix $\bm{P}_j$ is the root of the linear equations
\begin{equation}
\sum_{s_j\le{\bm x}}\bm{\varpi}_{j}({s_j})\bm{P}_j(s_j,s_{j+1}\le\bm{y}) = C_{j,j+1}\left( F_j(\bm{x}), F_{j+1}(\bm{y}) \right),
\end{equation}
where $\bm{x}$ and $\bm{y}$ are the ordered state space vector, the state distribution at $j$ is $\bm{\varpi}_j=\bm{\varpi}\prod_{0\le{k}\le{j}}\bm{P}_k$, and $F_j(s_j) = \sum\bm{\varpi}_j(s_k\le{s_j})$ and $F_{j+1}=\bm{\varpi}_j{\bm{P}_j}$.
Together with the unity property of transition matrix $\sum_{j\in{E}}p_{ij}=1$, $\forall{i}\in{E}$, the transition probabilities $\bm{P}_j$ are obtained.
\end{theorem}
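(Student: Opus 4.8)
The plan is to recognise the displayed identity as Sklar's theorem for the joint law of two consecutive states, and then to \emph{invert} it over the finite, ordered state space by taking rectangular (second‑order) increments; the row‑sum constraint is used only to complete the rows that the copula cannot resolve. I would organise it as one structural observation followed by an induction on $j$.

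\emph{Step 1 (the identity).} Write $\{J_t\}$ for the underlying chain. The left‑hand side, $\sum_{s_j\le\bm{x}}\bm{\varpi}_j(s_j)\,\bm{P}_j(s_j,s_{j+1}\le\bm{y})$, is by definition $\sum_{s_j\le\bm{x},\,s_{j+1}\le\bm{y}}\mathbb{P}(J_j=s_j,J_{j+1}=s_{j+1})=\mathbb{P}(J_j\le\bm{x},\,J_{j+1}\le\bm{y})$, the joint distribution function of $(J_j,J_{j+1})$. Its first margin is $F_j$, the distribution function of $\bm{\varpi}_j=\bm{\varpi}\prod_{0\le k\le j}\bm{P}_k$; marginalising over $s_j$ gives the second margin $F_{j+1}$, the distribution function of $\bm{\varpi}_{j+1}=\bm{\varpi}_j\bm{P}_j$. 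By Sklar's theorem \cite{embrechts2009copulas} the joint distribution function factors through its copula and margins, and the copula of $(J_j,J_{j+1})$ is precisely the successive‑level copula $C_{j,j+1}$ that is assumed given; this is the displayed equation. On a discrete support the copula is not unique, but every admissible version agrees on the grid $\{F_j(\bm{x})\}\times\{F_{j+1}(\bm{y})\}$, which is all we use, so no interpolation is needed in this direction (interpolation is needed only for the converse passage from a transition matrix to a copula \cite{darsow1992copulas}).

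\emph{Step 2 (inversion).} Induct on $j$, the base case $\bm{\varpi}_0=\bm{\varpi}$ being given. At stage $j$ the vector $\bm{\varpi}_j$, hence $F_j$, is already known, and the one‑step marginal $F_{j+1}$ is prescribed --- exactly the freedom, noted earlier, that a copula‑specified Markov chain may have its marginals chosen independently of its dependence. Order the states $e_1<\dots<e_N$. Taking the rectangular increment of both sides of the identity over the cell $(e_{k-1},e_k]\times(e_{l-1},e_l]$ annihilates every atom but one, giving
\begin{multline*}
\bm{\varpi}_j(e_k)\,p^{(j)}_{e_k e_l}
= C_{j,j+1}\bigl(F_j(e_k),F_{j+1}(e_l)\bigr)-C_{j,j+1}\bigl(F_j(e_{k-1}),F_{j+1}(e_l)\bigr)\\
{}-C_{j,j+1}\bigl(F_j(e_k),F_{j+1}(e_{l-1})\bigr)+C_{j,j+1}\bigl(F_j(e_{k-1}),F_{j+1}(e_{l-1})\bigr),
\end{multline*}
whose right‑hand side is computable from the given copula and the known margins. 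For every reachable state ($\bm{\varpi}_j(e_k)>0$) divide by $\bm{\varpi}_j(e_k)$ to obtain $p^{(j)}_{e_k e_l}$; these are nonnegative because a copula is $2$‑increasing, and summing over $l$ the cell increments telescope --- using $C(u,1)=u$ and $C(u,0)=0$ --- to $F_j(e_k)-F_j(e_{k-1})=\bm{\varpi}_j(e_k)$, so $\sum_l p^{(j)}_{e_k e_l}=1$ holds automatically. For an unreachable state ($\bm{\varpi}_j(e_k)=0$) the copula carries no information about that row, which is then completed by any stochastic vector, i.e.\ pinned down by $\sum_l p^{(j)}_{e_k e_l}=1$; this is the one point where the ``unity property'' is genuinely needed. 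This yields $\bm{P}_j$, then $\bm{\varpi}_{j+1}=\bm{\varpi}_j\bm{P}_j$, and the induction advances.

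\emph{Main obstacle.} I expect no conceptual depth, only careful bookkeeping: handling the ordered‑state‑space increments cleanly, verifying that the recovered $\bm{P}_j$ is genuinely row‑stochastic (nonnegativity from $2$‑increasingness, unit row sums from the telescoping together with the copula boundary values), and stating clearly that $F_{j+1}$ is an \emph{input} rather than something circularly determined by $\bm{P}_j$ --- the same telescoping computation shows that any admissible $F_{j+1}$ is automatically consistent with $\bm{\varpi}_{j+1}=\bm{\varpi}_j\bm{P}_j$, so there is no fixed point to solve. If the marginals are themselves constrained (the homogeneous case $F_{j+1}=F_j$, or margins dictated by the incremental‑capacity distributions of the Markov additive model) this is simply an additional constraint layered on top of the construction above.
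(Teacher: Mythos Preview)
Your proposal is correct. The displayed identity is indeed nothing more than Sklar's theorem applied to the joint law of $(J_j,J_{j+1})$, and your rectangular-increment inversion is the right way to extract $\bm{P}_j$ on a finite ordered state space.

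The paper's own proof reaches the same identity by a slightly different route: it quotes the Darsow--Nguyen--Olsen formula $\mathbb{E}\bigl[I_{\{Y\le y\}}\,\big|\,X\bigr]=C_{1,}\bigl(F_X(X),F_Y(y)\bigr)$ for the conditional distribution in terms of the partial derivative of the copula, and then integrates in $X$ to recover the joint distribution function, i.e.\ the displayed equation. After that the paper simply says ``the result directly follows'' and does not spell out the inversion. Your argument is therefore more elementary (you invoke Sklar directly rather than passing through the copula-derivative representation of the conditional law) and also more complete, since you actually carry out the second-order differencing, verify row-stochasticity from $2$-increasingness and the boundary conditions of a copula, and handle the unreachable-state rows explicitly via the unity constraint. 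The paper's route, on the other hand, connects more transparently to the Darsow et al.\ machinery used elsewhere in the section for the $\star$-product representation of Markov copulas.
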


\begin{proof}
For random variables $X$ and $Y$ with the copula $C$ \cite{darsow1992copulas}
\begin{IEEEeqnarray}{rCl}
E\left( I_{Y<y}|X \right)(\omega) &=& C_{1,}\left( F_X(X(\omega)), F_Y(y) \right)\ a.s.,
\end{IEEEeqnarray}
by integrating,
\begin{equation}
\int_{-\infty}^{\bm{x}} \mathbb{P}\left( X_t\le \bm{y} | X_s = \xi \right) d\xi = C\left( F_s(\bm{x}), F_t(\bm{y}) \right).
\end{equation}
The result directly follows.
\end{proof}

\begin{example}
For a $2$-state homogeneous Markov process, the equations are expressed as
\begin{IEEEeqnarray}{rrCl}
  \IEEEyesnumber\IEEEyessubnumber*
  & C\left(F(0),F(0)\right) & = & \pi_0 p_{00},
  \\*
  & C\left(F(1),F(0)\right) & = & \pi_0 p_{00} + \pi_1 p_{10},
  \\*[-0.625\normalbaselineskip]
  \smash{\left\{
      \IEEEstrut[11\jot]
    \right. } \nonumber
\\*[-0.625\normalbaselineskip]
  & C\left(F(1),F(1)\right) & = & \pi_0 \left(p_{00} + p_{01} \right) + \pi_1 \left(p_{10} + p_{11} \right), \IEEEeqnarraynumspace
  \\*
  & C\left(F(0),F(1)\right) & = & \pi_0 \left(p_{00} + p_{01} \right).
\end{IEEEeqnarray}
Given a stationary distribution $[\pi_{0}\ \pi_{1}]$, $F(0)=\pi_0$ and $F(1)=\pi_0+\pi_1$, we obtain the values of $p_{00}$ and $p_{10}$ from the equations, and we further obtain $p_{01}=1-p_{00}$ and $p_{11}=1-p_{10}$ from the unity property.
\end{example}

\begin{figure}%[!t]
\centering
\includegraphics[width=3.3in]{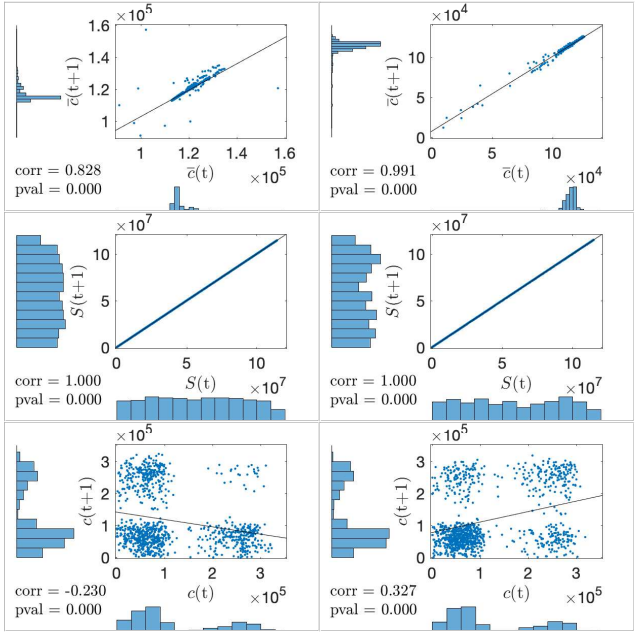}
\caption{Wireless channel capacity of Markov additive Rayleigh channel. 
The uncontrollable parameter is fading with one state and the controllable parameter is power with two states. 
The Markov process is time homogeneous without Granger causality.
The dependence structure is given by Gaussian copula with correlation matrix 
$\bm{\Sigma}=[1\ 0\ -0.5\ 0; 0\ 1\ 0\ 0; -0.5\ 0\ 1\ 0; 0\ 0\ 0\ 1]$ 
as negative dependence (left column), 
$\bm{\Sigma}=[1\ 0\ 0.5\ 0; 0\ 1\ 0\ 0; 0.5\ 0\ 1\ 0; 0\ 0\ 0\ 1]$ 
as positive dependence (right column), initial distribution $\bm{\varpi}= [0.3\ 0.7]$, stationary distribution $\bm{\pi}= [0.3\ 0.7]$. $W=20$kHz and $P/N=[10^{4}~10^{4}; 10~10]$.
$1000$ time slots.
The correlation coefficient and probability value between the time series and lag-1 series are provided.
%The left column depicts negative dependence case and the right column depicts positive case.
}
\label{scatter_hist_capacity}
\end{figure}

\begin{algorithm}[!t]
 \caption{Algorithm for Dependence Control}
 \begin{algorithmic}[1]
 \label{dependence_control_algorithm}
 \renewcommand{\algorithmicrequire}{\textbf{Model:}}
 \renewcommand{\algorithmicensure}{\textbf{Result:}}
 \REQUIRE A $n$-dimensional Markov process consisting of $n$ $1$-dimensional Markov processes without Granger causality
 \ENSURE  Transition matrix of the controllable parameter
 \\ 
  \STATE Initialisation: $C_{j,j+1}$, $C_j$, and $\bm{\varpi}$
  \FOR {$j = 0$ to $t-1$}
  \FOR {$1\le i\le{n}$ of interest} 
  \STATE Calculate $\bm{P}_j^i$ and $\bm{\varpi}_{j+1}^i=\bm{\varpi}^i_j{\bm{P}^i_j}$, with\\ $\sum \bm{\varpi}_j^i{\bm{P}_j^i} = C_{j,j+1}^i$
  \ENDFOR
  \ENDFOR
 \RETURN $\bm{P}$ 
 \end{algorithmic} 
\end{algorithm}

The algorithm of dependence control is shown in Algorithm \ref{dependence_control_algorithm}.
It is worth noting that the Markov property is a pure property of copula,
different copula functions provide a way to character the negative or positive dependence, based on which we can calculate the transition matrix of the controllable parameters in the wireless system, e.g., power, and bring their impacts into capacity.
%The Markov family copula is elaborated in Appendix \ref{markov_family_copula}. 

A simulation example is illustrated in Fig. \ref{scatter_hist_capacity}. 
The fading process is independent and the power changes with negative or positive dependence, the result shows that the times series of the instantaneous capacity $C(t)$ exhibits weakly negative dependence or weakly positive dependence, and the impact is manifested in the transient capacity $\overline{C}(t)$.
Since the instantaneous capacity is non-negative, the cumulative capacity $S(t)$ exhibits extremely positive dependence, no matter the negative or positive dependence in the instantaneous capacity.

\section{Concluding Remarks}
\label{conclusion}

This paper initiates the research on dependence control, which transforms the dependence structure of a stochastic process in the system through dependence manipulation to improve the system performance. Specifically, we develop a dependence control theory for wireless channels and define three principles in dependence control, namely the asymptotic measure, the dual potency, and the dependence transformation. To this end, a set of results making use of various mathematical techniques like change of measure, stochastic order, martingale, and copula, are provided. 

While the focus of this paper is on dependence control in wireless channel capacity, many of the obtained results hold for general queueing systems. In the development of the theory, several assumptions are made, which  allow to characterize weak forms of dependence and light-tailed process, and an example is the Markov additive process. We remark that, among the three principles, these assumptions are necessary only for the first. For the second principle, it relies on the assumptions when the dual potency implies the ordering of the asymptotic measure, and for the third principle, the results do not require these assumptions. 
The investigation and extension of the three principles to other forms of dependence and heavy-tailed processes are our future work.

We highlight that the goal of this paper is to pave the way for the development of dependence control particularly as an approach to utilize the hidden resource in wireless channel. We believe the dependence control is a new direction for research, 
and the potential for further development includes additional perspectives on the measure identity, diverse manipulation techniques to transform the dependence structure, and more application scenarios.

\bibliographystyle{ACM-Reference-Format}
\bibliography{main}

%%% -*-BibTeX-*-
%%% Do NOT edit. File created by BibTeX with style
%%% ACM-Reference-Format-Journals [18-Jan-2012].

\begin{thebibliography}{27}

%%% ====================================================================
%%% NOTE TO THE USER: you can override these defaults by providing
%%% customized versions of any of these macros before the \bibliography
%%% command.  Each of them MUST provide its own final punctuation,
%%% except for \shownote{}, \showDOI{}, and \showURL{}.  The latter two
%%% do not use final punctuation, in order to avoid confusing it with
%%% the Web address.
%%%
%%% To suppress output of a particular field, define its macro to expand
%%% to an empty string, or better, \unskip, like this:
%%%
%%% \newcommand{\showDOI}[1]{\unskip}   % LaTeX syntax
%%%
%%% \def \showDOI #1{\unskip}           % plain TeX syntax
%%%
%%% ====================================================================

\ifx \showCODEN    \undefined \def \showCODEN     #1{\unskip}     \fi
\ifx \showDOI      \undefined \def \showDOI       #1{#1}\fi
\ifx \showISBNx    \undefined \def \showISBNx     #1{\unskip}     \fi
\ifx \showISBNxiii \undefined \def \showISBNxiii  #1{\unskip}     \fi
\ifx \showISSN     \undefined \def \showISSN      #1{\unskip}     \fi
\ifx \showLCCN     \undefined \def \showLCCN      #1{\unskip}     \fi
\ifx \shownote     \undefined \def \shownote      #1{#1}          \fi
\ifx \showarticletitle \undefined \def \showarticletitle #1{#1}   \fi
\ifx \showURL      \undefined \def \showURL       {\relax}        \fi
% The following commands are used for tagged output and should be
% invisible to TeX
\providecommand\bibfield[2]{#2}
\providecommand\bibinfo[2]{#2}
\providecommand\natexlab[1]{#1}
\providecommand\showeprint[2][]{arXiv:#2}

\bibitem[\protect\citeauthoryear{Andrews, Buzzi, Choi, Hanly, Lozano, Soong,
  and Zhang}{Andrews et~al\mbox{.}}{2014}]%
        {andrews2014will}
\bibfield{author}{\bibinfo{person}{Jeffrey~G Andrews}, \bibinfo{person}{Stefano
  Buzzi}, \bibinfo{person}{Wan Choi}, \bibinfo{person}{Stephen~V Hanly},
  \bibinfo{person}{Angel Lozano}, \bibinfo{person}{Anthony~CK Soong}, {and}
  \bibinfo{person}{Jianzhong~Charlie Zhang}.} \bibinfo{year}{2014}\natexlab{}.
\newblock \showarticletitle{What will 5G be?}
\newblock \bibinfo{journal}{{\em IEEE Journal on selected areas in
  communications\/}} \bibinfo{volume}{32}, \bibinfo{number}{6}
  (\bibinfo{year}{2014}), \bibinfo{pages}{1065--1082}.
\newblock


\bibitem[\protect\citeauthoryear{Asmussen}{Asmussen}{2003}]%
        {asmussen2003applied}
\bibfield{author}{\bibinfo{person}{S{\o}ren Asmussen}.}
  \bibinfo{year}{2003}\natexlab{}.
\newblock \bibinfo{booktitle}{{\em Applied Probability and Queues}}.
  Vol.~\bibinfo{volume}{51}.
\newblock \bibinfo{publisher}{Springer Science \& Business Media}.
\newblock


\bibitem[\protect\citeauthoryear{Asmussen and Albrecher}{Asmussen and
  Albrecher}{2010}]%
        {asmussen2010ruin}
\bibfield{author}{\bibinfo{person}{S{\o}ren Asmussen} {and}
  \bibinfo{person}{Hansj{\"o}rg Albrecher}.} \bibinfo{year}{2010}\natexlab{}.
\newblock \bibinfo{booktitle}{{\em Ruin probabilities}}.
\newblock \bibinfo{publisher}{World Scientific Publishing Co Pte Ltd}.
\newblock


\bibitem[\protect\citeauthoryear{Boyd and Vandenberghe}{Boyd and
  Vandenberghe}{2004}]%
        {boyd2004convex}
\bibfield{author}{\bibinfo{person}{Stephen Boyd} {and} \bibinfo{person}{Lieven
  Vandenberghe}.} \bibinfo{year}{2004}\natexlab{}.
\newblock \bibinfo{booktitle}{{\em Convex optimization}}.
\newblock \bibinfo{publisher}{Cambridge university press}.
\newblock


\bibitem[\protect\citeauthoryear{Cherny}{Cherny}{2006}]%
        {cherny2006some}
\bibfield{author}{\bibinfo{person}{Alexander Cherny}.}
  \bibinfo{year}{2006}\natexlab{}.
\newblock \showarticletitle{Some particular problems of martingale theory}.
\newblock \bibinfo{journal}{{\em From Stochastic Calculus to Mathematical
  Finance\/}} (\bibinfo{year}{2006}), \bibinfo{pages}{109--124}.
\newblock


\bibitem[\protect\citeauthoryear{Cherubini, Gobbi, Mulinacci, and
  Romagnoli}{Cherubini et~al\mbox{.}}{2010}]%
        {cherubini2010copula}
\bibfield{author}{\bibinfo{person}{Umberto Cherubini}, \bibinfo{person}{Fabio
  Gobbi}, \bibinfo{person}{Sabrina Mulinacci}, {and} \bibinfo{person}{Silvia
  Romagnoli}.} \bibinfo{year}{2010}\natexlab{}.
\newblock \showarticletitle{A copula-based model for spatial and temporal
  dependence of equity markets}.
\newblock \bibinfo{journal}{{\em Copula Theory and Its Applications\/}}
  (\bibinfo{year}{2010}), \bibinfo{pages}{257--265}.
\newblock


\bibitem[\protect\citeauthoryear{Cherubini, Mulinacci, and Romagnoli}{Cherubini
  et~al\mbox{.}}{2011}]%
        {cherubini2011copula}
\bibfield{author}{\bibinfo{person}{Umberto Cherubini}, \bibinfo{person}{Sabrina
  Mulinacci}, {and} \bibinfo{person}{Silvia Romagnoli}.}
  \bibinfo{year}{2011}\natexlab{}.
\newblock \showarticletitle{A copula-based model of speculative price dynamics
  in discrete time}.
\newblock \bibinfo{journal}{{\em Journal of Multivariate Analysis\/}}
  \bibinfo{volume}{102}, \bibinfo{number}{6} (\bibinfo{year}{2011}),
  \bibinfo{pages}{1047--1063}.
\newblock


\bibitem[\protect\citeauthoryear{Costa and Haykin}{Costa and Haykin}{2010}]%
        {costa2010multiple}
\bibfield{author}{\bibinfo{person}{Nelson Costa} {and} \bibinfo{person}{Simon
  Haykin}.} \bibinfo{year}{2010}\natexlab{}.
\newblock \bibinfo{booktitle}{{\em Multiple-input multiple-output channel
  models: theory and practice}}. Vol.~\bibinfo{volume}{65}.
\newblock \bibinfo{publisher}{John Wiley \& Sons}.
\newblock


\bibitem[\protect\citeauthoryear{Darsow, Nguyen, Olsen, et~al\mbox{.}}{Darsow
  et~al\mbox{.}}{1992}]%
        {darsow1992copulas}
\bibfield{author}{\bibinfo{person}{William~F Darsow}, \bibinfo{person}{Bao
  Nguyen}, \bibinfo{person}{Elwood~T Olsen}, {et~al\mbox{.}}}
  \bibinfo{year}{1992}\natexlab{}.
\newblock \showarticletitle{Copulas and Markov processes}.
\newblock \bibinfo{journal}{{\em Illinois Journal of Mathematics\/}}
  \bibinfo{volume}{36}, \bibinfo{number}{4} (\bibinfo{year}{1992}),
  \bibinfo{pages}{600--642}.
\newblock


\bibitem[\protect\citeauthoryear{Denuit, Dhaene, Goovaerts, and Kaas}{Denuit
  et~al\mbox{.}}{2006}]%
        {denuit2006actuarial}
\bibfield{author}{\bibinfo{person}{Michel Denuit}, \bibinfo{person}{Jan
  Dhaene}, \bibinfo{person}{Marc Goovaerts}, {and} \bibinfo{person}{Rob Kaas}.}
  \bibinfo{year}{2006}\natexlab{}.
\newblock \bibinfo{booktitle}{{\em Actuarial theory for dependent risks:
  measures, orders and models}}.
\newblock \bibinfo{publisher}{John Wiley \& Sons}.
\newblock


\bibitem[\protect\citeauthoryear{Dhaene, Denuit, Goovaerts, Kaas, and
  Vyncke}{Dhaene et~al\mbox{.}}{2002}]%
        {dhaene2002concept}
\bibfield{author}{\bibinfo{person}{Jan Dhaene}, \bibinfo{person}{Michel
  Denuit}, \bibinfo{person}{Marc~J Goovaerts}, \bibinfo{person}{Rob Kaas},
  {and} \bibinfo{person}{David Vyncke}.} \bibinfo{year}{2002}\natexlab{}.
\newblock \showarticletitle{The concept of comonotonicity in actuarial science
  and finance: theory}.
\newblock \bibinfo{journal}{{\em Insurance: Mathematics and Economics\/}}
  \bibinfo{volume}{31}, \bibinfo{number}{1} (\bibinfo{year}{2002}),
  \bibinfo{pages}{3--33}.
\newblock


\bibitem[\protect\citeauthoryear{Embrechts}{Embrechts}{2009}]%
        {embrechts2009copulas}
\bibfield{author}{\bibinfo{person}{Paul Embrechts}.}
  \bibinfo{year}{2009}\natexlab{}.
\newblock \showarticletitle{Copulas: A personal view}.
\newblock \bibinfo{journal}{{\em Journal of Risk and Insurance\/}}
  \bibinfo{volume}{76}, \bibinfo{number}{3} (\bibinfo{year}{2009}),
  \bibinfo{pages}{639--650}.
\newblock


\bibitem[\protect\citeauthoryear{Foschini and Gans}{Foschini and Gans}{1998}]%
        {foschini1998limits}
\bibfield{author}{\bibinfo{person}{Gerard~J Foschini} {and}
  \bibinfo{person}{Michael~J Gans}.} \bibinfo{year}{1998}\natexlab{}.
\newblock \showarticletitle{On limits of wireless communications in a fading
  environment when using multiple antennas}.
\newblock \bibinfo{journal}{{\em Wireless personal communications\/}}
  \bibinfo{volume}{6}, \bibinfo{number}{3} (\bibinfo{year}{1998}),
  \bibinfo{pages}{311--335}.
\newblock


\bibitem[\protect\citeauthoryear{Glynn and Whitt}{Glynn and Whitt}{1994}]%
        {glynn1994logarithmic}
\bibfield{author}{\bibinfo{person}{Peter~W Glynn} {and} \bibinfo{person}{Ward
  Whitt}.} \bibinfo{year}{1994}\natexlab{}.
\newblock \showarticletitle{Logarithmic asymptotics for steady-state tail
  probabilities in a single-server queue}.
\newblock \bibinfo{journal}{{\em Journal of Applied Probability\/}}
  \bibinfo{volume}{31}, \bibinfo{number}{A} (\bibinfo{year}{1994}),
  \bibinfo{pages}{131--156}.
\newblock


\bibitem[\protect\citeauthoryear{Goldsmith}{Goldsmith}{2005}]%
        {goldsmith2005wireless}
\bibfield{author}{\bibinfo{person}{Andrea Goldsmith}.}
  \bibinfo{year}{2005}\natexlab{}.
\newblock \bibinfo{booktitle}{{\em Wireless communications}}.
\newblock \bibinfo{publisher}{Cambridge university press}.
\newblock


\bibitem[\protect\citeauthoryear{Hecht et~al\mbox{.}}{Hecht
  et~al\mbox{.}}{2016}]%
        {hecht2016bandwidth}
\bibfield{author}{\bibinfo{person}{Jeff Hecht} {et~al\mbox{.}}}
  \bibinfo{year}{2016}\natexlab{}.
\newblock \showarticletitle{The bandwidth bottleneck}.
\newblock \bibinfo{journal}{{\em Nature\/}} \bibinfo{volume}{536},
  \bibinfo{number}{7615} (\bibinfo{year}{2016}), \bibinfo{pages}{139--142}.
\newblock


\bibitem[\protect\citeauthoryear{Ibragimov}{Ibragimov}{2009}]%
        {ibragimov2009copula}
\bibfield{author}{\bibinfo{person}{Rustam Ibragimov}.}
  \bibinfo{year}{2009}\natexlab{}.
\newblock \showarticletitle{Copula-based characterizations for higher order
  Markov processes}.
\newblock \bibinfo{journal}{{\em Econometric Theory\/}} \bibinfo{volume}{25},
  \bibinfo{number}{3} (\bibinfo{year}{2009}), \bibinfo{pages}{819--846}.
\newblock


\bibitem[\protect\citeauthoryear{McNeil, Frey, and Embrechts}{McNeil
  et~al\mbox{.}}{2015}]%
        {mcneil2015quantitative}
\bibfield{author}{\bibinfo{person}{Alexander~J McNeil},
  \bibinfo{person}{R{\"u}diger Frey}, {and} \bibinfo{person}{Paul Embrechts}.}
  \bibinfo{year}{2015}\natexlab{}.
\newblock \bibinfo{booktitle}{{\em Quantitative risk management: Concepts,
  techniques and tools}}.
\newblock \bibinfo{publisher}{Princeton university press}.
\newblock


\bibitem[\protect\citeauthoryear{M{\"u}ller and Stoyan}{M{\"u}ller and
  Stoyan}{2002}]%
        {muller2002comparison}
\bibfield{author}{\bibinfo{person}{A. M{\"u}ller} {and} \bibinfo{person}{D.
  Stoyan}.} \bibinfo{year}{2002}\natexlab{}.
\newblock \bibinfo{booktitle}{{\em Comparison Methods for Stochastic Models and
  Risks}}.
\newblock \bibinfo{publisher}{Wiley}.
\newblock
\showISBNx{9780471494461}
\showLCCN{2002284930}


\bibitem[\protect\citeauthoryear{Niehenke}{Niehenke}{2014}]%
        {niehenke2014wireless}
\bibfield{author}{\bibinfo{person}{Edward~C Niehenke}.}
  \bibinfo{year}{2014}\natexlab{}.
\newblock \showarticletitle{Wireless communications: Present and future}.
\newblock \bibinfo{journal}{{\em IEEE Microwave Magazine\/}}
  \bibinfo{volume}{15}, \bibinfo{number}{2} (\bibinfo{year}{2014}),
  \bibinfo{pages}{26--35}.
\newblock


\bibitem[\protect\citeauthoryear{Overbeck, Schmidt, et~al\mbox{.}}{Overbeck
  et~al\mbox{.}}{2015}]%
        {overbeck2015multivariate}
\bibfield{author}{\bibinfo{person}{Ludger Overbeck},
  \bibinfo{person}{Wolfgang~M Schmidt}, {et~al\mbox{.}}}
  \bibinfo{year}{2015}\natexlab{}.
\newblock \showarticletitle{Multivariate markov families of copulas}.
\newblock \bibinfo{journal}{{\em Dependence Modeling\/}} \bibinfo{volume}{3},
  \bibinfo{number}{1} (\bibinfo{year}{2015}), \bibinfo{pages}{159--171}.
\newblock


\bibitem[\protect\citeauthoryear{Poloczek and Ciucu}{Poloczek and
  Ciucu}{2015}]%
        {poloczek2015service}
\bibfield{author}{\bibinfo{person}{Felix Poloczek} {and}
  \bibinfo{person}{Florin Ciucu}.} \bibinfo{year}{2015}\natexlab{}.
\newblock \showarticletitle{Service-martingales: Theory and applications to the
  delay analysis of random access protocols}. In \bibinfo{booktitle}{{\em
  Computer Communications (INFOCOM), 2015 IEEE Conference on}}. IEEE,
  \bibinfo{pages}{945--953}.
\newblock


\bibitem[\protect\citeauthoryear{R{\"u}schendorf}{R{\"u}schendorf}{2013}]%
        {ruschendorf2013mathematical}
\bibfield{author}{\bibinfo{person}{Ludger R{\"u}schendorf}.}
  \bibinfo{year}{2013}\natexlab{}.
\newblock \bibinfo{booktitle}{{\em Mathematical risk analysis}}.
\newblock \bibinfo{publisher}{Springer}.
\newblock


\bibitem[\protect\citeauthoryear{Shaked and Shanthikumar}{Shaked and
  Shanthikumar}{2007}]%
        {shaked2007stochastic}
\bibfield{author}{\bibinfo{person}{Moshe Shaked} {and} \bibinfo{person}{George
  Shanthikumar}.} \bibinfo{year}{2007}\natexlab{}.
\newblock \bibinfo{booktitle}{{\em Stochastic orders}}.
\newblock \bibinfo{publisher}{Springer Science \& Business Media}.
\newblock


\bibitem[\protect\citeauthoryear{Sun and Jiang}{Sun and Jiang}{2017}]%
        {sun2017statistical}
\bibfield{author}{\bibinfo{person}{Fengyou Sun} {and} \bibinfo{person}{Yuming
  Jiang}.} \bibinfo{year}{2017}\natexlab{}.
\newblock \showarticletitle{A statistical property of wireless channel
  capacity: theory and application}. In \bibinfo{booktitle}{{\em Proceedings of
  the 2017 IFIP WG 7.3 Performance International Symposium on Computer
  Performance, Modeling, Measurements and Evaluation}}. IFIP,
  \bibinfo{pages}{97--108}.
\newblock


\bibitem[\protect\citeauthoryear{Telatar}{Telatar}{1999}]%
        {telatar1999capacity}
\bibfield{author}{\bibinfo{person}{Emre Telatar}.}
  \bibinfo{year}{1999}\natexlab{}.
\newblock \showarticletitle{Capacity of Multi-antenna Gaussian Channels}.
\newblock \bibinfo{journal}{{\em Transactions on Emerging Telecommunications
  Technologies\/}} \bibinfo{volume}{10}, \bibinfo{number}{6}
  (\bibinfo{year}{1999}), \bibinfo{pages}{585--595}.
\newblock


\bibitem[\protect\citeauthoryear{Tse and Viswanath}{Tse and Viswanath}{2005}]%
        {tse2005fundamentals}
\bibfield{author}{\bibinfo{person}{David Tse} {and} \bibinfo{person}{Pramod
  Viswanath}.} \bibinfo{year}{2005}\natexlab{}.
\newblock \bibinfo{booktitle}{{\em Fundamentals of wireless communication}}.
\newblock \bibinfo{publisher}{Cambridge university press}.
\newblock


\end{thebibliography}

%\newpage

\appendix

\section{Proof of Theorem \ref{theorem-decay-rate-delay-backlog}}
\label{proof-of-theorem-decay-rate-delay-backlog}

%\begin{proof}

We only provide proof for the delay result, since the backlog result is a trivial reduction of the delay proof.
The proof is inspired by \cite{glynn1994logarithmic,asmussen2010ruin}, by defining a new change of measure, and by noting the following result, for large enough $n$, 
\begin{equation}
\widetilde{\mathbb{P}}_n \left( \left| \frac{ \mathfrak{S}(n-k)-A(d) + d }{n} -\widetilde{\mu} \right| > \eta \right) \le z^n.
\end{equation}

We first show that $\liminf_{d\rightarrow\infty}\frac{1}{d}\log \mathbb{P}(D>d) \ge -\kappa^{A}(\gamma)$.
Given $\eta>0$ and let $m\equiv m(\eta) = \lfloor d(1+\eta)/\widetilde{\mu} \rfloor +1$.
Then
\begin{IEEEeqnarray}{rCl}
\IEEEeqnarraymulticol{3}{l}{
\mathbb{P}(D>d) \ge \mathbb{P}(\mathfrak{S}(m)>A(d)) 
}\\
&=& \widetilde{\mathbb{E}}_{m} \left[ e^{-\gamma\mathfrak{S}(m) + \kappa_m(\gamma)}; \mathfrak{S}(m) - A(d) + d > d \right] \\
&\ge&  \widetilde{\mathbb{E}}_{m} \left[ e^{-\gamma\mathfrak{S}(m) + \kappa_m(\gamma)}; \frac{\mathfrak{S}(m) - A(d) + d}{m} - \widetilde{\mu} > -\frac{\widetilde{\mu}\eta}{1+\eta} \right] \IEEEeqnarraynumspace\\
&\ge&  \widetilde{\mathbb{E}}_{m} \left[ e^{-\gamma\mathfrak{S}(m) + \kappa_m(\gamma)}; \left| \frac{\mathfrak{S}(m) - A(d) + d}{m} - \widetilde{\mu} \right| < \frac{\widetilde{\mu}\eta}{1+\eta} \right] \\
&\ge&  \widetilde{\mathbb{E}}_{m} \left[ e^{-\gamma \left( \widetilde{\mu}\frac{1+2\eta}{1+\eta}m + A(d) -d \right) + \kappa_m(\gamma)} \right] \nonumber\\
&& \cdot \widetilde{\mathbb{P}}_{m} \left( \left| \frac{\mathfrak{S}(m) - A(d) + d}{m} - \widetilde{\mu} \right| < \frac{\widetilde{\mu}\eta}{1+\eta}  \right) \\
&=& e^{ -\kappa_d^A{\gamma} -\gamma\widetilde{\mu}\frac{1+2\eta}{1+\eta}m + \gamma{d} + \kappa_m(\gamma) } \nonumber\\
&& \cdot \widetilde{\mathbb{P}}_{m} \left( \left| \frac{\mathfrak{S}(m) - A(d) + d}{m} - \widetilde{\mu} \right| < \frac{\widetilde{\mu}\eta}{1+\eta}  \right),
\end{IEEEeqnarray}
where $\widetilde{\mathbb{P}}_m(\cdot)$ goes to $1$ according to Corollary \ref{corollary-for-decay}. Since $\kappa_m(\gamma)/d \rightarrow 0$ and $m/d \rightarrow (1+\eta)/\widetilde{\mu}$, we get
\begin{equation}
\liminf_{d\rightarrow\infty}\frac{1}{d}\log \mathbb{P}(D>d) \ge -\kappa^A(\gamma) -2\eta.
\end{equation}
Letting $\eta\downarrow 0$ yields $\liminf_{d\rightarrow\infty}\frac{1}{d}\log \mathbb{P}(D>d) \ge -\kappa^{A}(\gamma)$.

We then show that $\limsup_{d\rightarrow\infty}\frac{1}{d}\log \mathbb{P}(D>d) \le -\kappa^{A}(\gamma)$.
Let $\tau(d) = \inf\{ n: \mathfrak{S}(n) > A(d) \}$ and $\mathbb{P}(D>d) = \mathbb{P}(\tau(d)<\infty)$, then
\begin{eqnarray}
\mathbb{P}(D>d) = \sum_{n=d}^{\infty} \mathbb{P}(\tau(d) = n) 
= I_1 + I_2 + I_3 + I_4,
\end{eqnarray}
where 
\begin{eqnarray}
I_1 &=& \sum_{n=d}^{n(\delta)} \mathbb{P} (\tau(d)=n), \\
I_2 &=& \sum_{n=n(\delta)+1}^{\lfloor d(1 - \delta)/\widetilde{\mu} \rfloor} \mathbb{P} (\tau(d)=n), \\
I_3 &=& \sum_{n = \lfloor d(1 - \delta)/\widetilde{\mu} \rfloor +1 }^{\lfloor d(1+\delta)/\widetilde{\mu} \rfloor} \mathbb{P} (\tau(d)=n), \\
I_4 &=& \sum_{n = \lfloor d(1+\delta)/\widetilde{\mu} \rfloor +1 }^{\infty} \mathbb{P} (\tau(d)=n), 
\end{eqnarray}
and $n(\delta)$ is chosen such that $\kappa_n(\gamma)/n< \min\{ \delta, (-\log{z})/2 \}$ and 
\begin{equation}
\widetilde{\mathbb{P}}_n \left( \left| \frac{ \mathfrak{S}(n-k)-A(d) + d }{n} -\widetilde{\mu} \right| > \frac{\delta\widetilde{\mu}}{1+\delta} \right) \le z^n,\ \text{for } k \le 1,
\end{equation}
for some $z<1$ and all $n>n(\delta)$. This is possible by Assumption (\ref{aspt-3}) and (\ref{aspt-4}) and Corollary \ref{corollary-for-decay}. 

Note
\begin{eqnarray}
\mathbb{P}(\tau(d)=n) &\le& \mathbb{P}(\mathfrak{S}(n)>A(d)) \\
&=& \widetilde{\mathbb{E}}_{n}\left[ e^{-\gamma \mathfrak{S}(n) + \kappa_n(\gamma) }; \mathfrak{S}(n)>A(d) \right] \\
&\le& e^{-\kappa_d^A(\gamma)} \cdot e^{\kappa_n(\gamma)} \cdot \widetilde{\mathbb{P}}_{n}(\mathfrak{S}(n)>A(d)),
\end{eqnarray}
so that 
\begin{IEEEeqnarray}{rCl}
I_1 &\le& e^{-\kappa_d^A(\gamma)} \sum_{n=d}^{n(\delta)} e^{\kappa_n(\gamma)}, \\
%\end{IEEEeqnarray}
%\begin{IEEEeqnarray}{rCl}
I_2 &\le& e^{-\kappa_d^A(\gamma)} \sum_{n=n(\delta)+1}^{\lfloor d(1 - \delta)/\widetilde{\mu} \rfloor} e^{\kappa_n(\gamma)} \widetilde{\mathbb{P}}_{n}(\mathfrak{S}(n)>A(d)) \\
&\le& e^{-\kappa_d^A(\gamma)} \sum_{n=n(\delta)+1}^{\lfloor d(1 - \delta)/\widetilde{\mu} \rfloor} e^{-n\log{z}/2}  \nonumber\\
&& \cdot \widetilde{\mathbb{P}}_n \left( \left| \frac{ \mathfrak{S}(n)-A(d) + d }{n} -\widetilde{\mu} \right| > \frac{\delta\widetilde{\mu}}{1+\delta} \right) \\
&\le& e^{-\kappa_d^A(\gamma)} \sum_{n=n(\delta)+1}^{\lfloor d(1 - \delta)/\widetilde{\mu} \rfloor} \frac{1}{z^{n/2}}z^n \\
&\le&  e^{-\kappa_d^A(\gamma)} \sum_{n=0}^{\infty} z^{n/2} \\
&=& e^{-\kappa_d^A(\gamma)} \frac{1}{1-z^{1/2}}, \\
%\end{IEEEeqnarray}
%\begin{IEEEeqnarray}{rCl}
I_3 &\le& e^{-\kappa_d^A(\gamma)} \sum_{n = \lfloor d(1 - \delta)/\widetilde{\mu} \rfloor +1 }^{\lfloor d(1+\delta)/\widetilde{\mu} \rfloor} e^{\kappa_n(\gamma)} \\
&\le& e^{-\kappa_d^A(\gamma)} \sum_{n = \lfloor d(1 - \delta)/\widetilde{\mu} \rfloor +1 }^{\lfloor d(1+\delta)/\widetilde{\mu} \rfloor} e^{n\delta} \\
&\le& e^{-\kappa_d^A(\gamma)} \left( \frac{2\delta d}{\widetilde{\mu}} +1 \right) e^{\delta d (1+\delta)/\widetilde{\mu}}.
\end{IEEEeqnarray}
Finally, let $\mathfrak{S}_{n-1}^{n}(d) \equiv \{ \mathfrak{S}(n-1)\le A(d), \mathfrak{S}(n)> A(d) \}$,
\begin{IEEEeqnarray}{rCl}
I_4 &\le&  \sum_{n = \lfloor d(1+\delta)/\widetilde{\mu} \rfloor +1 }^{\infty} \mathbb{P} \left( \mathfrak{S}_{n-1}^{n}(d)  \right) \\
&=& \sum_{n = \lfloor d(1+\delta)/\widetilde{\mu} \rfloor +1 }^{\infty} \widetilde{\mathbb{E}}_{n}\left[ e^{-\gamma \mathfrak{S}(n) + \kappa_n(\gamma) }; \mathfrak{S}_{n-1}^{n}(d) \right] \\
&\le& e^{-\kappa_d^A(\gamma)} \sum_{n = \lfloor d(1+\delta)/\widetilde{\mu} \rfloor +1 }^{\infty}  e^{\kappa_n(\gamma)} \nonumber\\
&& \cdot \widetilde{\mathbb{P}}_n \left( \left| \frac{ \mathfrak{S}(n-1)-A(d) + d }{n} -\widetilde{\mu} \right| > \frac{\delta\widetilde{\mu}}{1+\delta} \right) \\
&\le& e^{-\kappa_d^A(\gamma)}  \sum_{n = \lfloor d(1+\delta)/\widetilde{\mu} \rfloor +1 }^{\infty}  \frac{1}{z^{n/2}}z^n  \\
&\le& e^{-\kappa_d^A(\gamma)} \frac{1}{1-z^{1/2}}.
\end{IEEEeqnarray}
By Assumption (\ref{aspt-1}) and Proposition (\ref{proposition-1}), we get
\begin{equation}
\limsup_{d\rightarrow\infty} \frac{1}{d}{\log \mathbb{P}(D>d)} \le -\kappa^A(\gamma) + \frac{\delta(1+\delta)}{\widetilde{\mu}}.
\end{equation}
Letting $\delta\downarrow 0$ yields $\limsup_{d\rightarrow\infty}\frac{1}{d}\log \mathbb{P}(D>d) \le -\kappa^{A}(\gamma)$.

%\end{proof}

\section{Proof of Theorem \ref{theorem-equivalent-converge-in-probability}}
\label{proof-of-theorem-equivalent-converge-in-probability}

%\begin{proof}

Let $0<\theta<\epsilon$, where $\epsilon$ is as in Assumption.
Note $\mathbb{E}\left[ e^{\theta(\mathfrak{S}(n) - \mathfrak{S}(n-k))} \right] $ $< \infty$ for all $|\theta|<\delta$ for some $\delta>0$ by Assumption (\ref{aspt-2}).

According to Chernoff bound,
\begin{IEEEeqnarray}{rCl}
\IEEEeqnarraymulticol{3}{l}{
\widetilde{\mathbb{P}}_n \left(  \frac{ \mathfrak{S}(n-k)- \mathfrak{S}(d) }{n}  - \widetilde{\mu} > \eta \right) 
}\\
&\le& e^{-\theta n\left( \widetilde{\mu} + \eta \right) } \widetilde{\mathbb{E}}_{n} \left[ e^{\theta ( \mathfrak{S}(n-k) - \mathfrak{S}(d) )} \right] \\
&=& e^{-\theta n\left( \widetilde{\mu} + \eta \right) } {\mathbb{E}}_{n} \left[ e^{\theta ( \mathfrak{S}(n-k) - \mathfrak{S}(d) )} \cdot e^{\gamma\mathfrak{S}(n)-\kappa_n(\gamma)} \right] \\
&=& e^{-\theta n\left( \widetilde{\mu} + \eta \right)  - \kappa_n(\gamma) }  {\mathbb{E}}_{n} \left[ e^{ (\theta+\gamma)\mathfrak{S}(n) -\theta \mathfrak{S}(d) -\theta \mathfrak{S}(n-k,n) } \right] \IEEEeqnarraynumspace\\
&\le& e^{-\theta n\left( \widetilde{\mu} + \eta \right)  - \kappa_n(\gamma)  }   \left[ \left[ {\mathbb{E}}_{n} e^{ \hat{p}p(\theta+\gamma)\mathfrak{S}(n) } \right]^{1/\hat{p}} \left[ {\mathbb{E}}_{n} e^{ -\hat{q}p\theta\mathfrak{S}(d) } \right]^{1/\hat{q}} \right]^{1/p}  \nonumber\\
&& \cdot \left[ \mathbb{E}_{n} e^{ - q\theta \mathfrak{S}(n-k,n) } \right]^{1/q} \\
&=& e^{ -\theta n\left( \widetilde{\mu} + \eta \right)  - \kappa_n(\gamma) + \kappa_{n}(\hat{p}p(\theta+\gamma))/({\hat{p}p}) } \left[ {\mathbb{E}}_{n} e^{ -\hat{q}p\theta\mathfrak{S}(d) }  \right]^{1/(\hat{q}{p})}  \nonumber\\
&& \cdot  \left[ \mathbb{E}_{n} e^{ - q\theta \mathfrak{S}(n-k,n) } \right]^{1/q},
\end{IEEEeqnarray}
where we used H\"{o}lder's inequality twice, for positive $p$ and $q$ with $p^{-1} + q^{-1}=1$, and $\hat{p}$ and $\hat{q}$ with $\hat{p}^{-1} + \hat{q}^{-1}=1$, and we choose $p$ and $\hat{p}$ close enough to $1$ and $\theta$ close enough to $0$ that $|\hat{p}p(\theta+\gamma)-\gamma|<\epsilon$ and $|-\hat{q}p\theta-\gamma|<\epsilon$. 
Particularly, for $k=0$ or $d=0$, the proof needs to use H\"{o}lder's inequality only once; for $k=0$ and $d=0$, the proof needs no H\"{o}lder's inequality.

By Assumption (\ref{aspt-1}), $ {\mathbb{E}}_{n} e^{ -\hat{q}p\theta\mathfrak{S}(d) } <\infty$, and by Assumption (\ref{aspt-2}), $\mathbb{E}_{n} \left[ e^{ - q\theta\left( \mathfrak{S}(n) - \mathfrak{S}(n-k) \right) } \right]^{1/q} < \infty$ for large $n$, we get
\begin{multline}
\limsup\limits_{n\rightarrow\infty}\frac{1}{n}  \log \widetilde{\mathbb{P}}_{n} \left(  \frac{ \mathfrak{S}(n-k)- \mathfrak{S}(d)  }{n}  - \widetilde{\mu} > \eta \right) \\
\le \kappa(\hat{p}p(\theta+\gamma))/(\hat{p}p) - \kappa(\gamma) - \theta\left( \widetilde{\mu} + \eta \right),
\end{multline}
by Taylor expansion, it is easy to see that the right hand side can be chosen strictly negative by taking $p$ and $\hat{p}$ close enough to $1$ and $\theta$ close enough to $0$.
This establishes $\widetilde{\mathbb{P}}_n (  { \mathfrak{S}(d,n-k) }/{n}  - \widetilde{\mu} > \eta ) \le z^n$, and the corresponding $\widetilde{\mathbb{P}}_n (  { \mathfrak{S}(d,n-k) }/{n}  - \widetilde{\mu} < -\eta ) \le z^n$ follows by symmetry.

%\end{proof}

\section{Proof of Theorem \ref{theorem-random-multiplexing}}
\label{proof-of-theorem-random-multiplexing}

%\begin{proof}

The first result is proved in \cite{shaked2007stochastic}. For the second result,
since
\begin{multline}
\mathbb{E} \left[ \phi\left( \sum_{j=1}^{N_1} X_{j,1}, \ldots, \sum_{j=1}^{N_m} X_{j,m}  \right) \Big| \left( N_1, \ldots, N_m \right) = \left( n_1, \ldots, n_m \right) \right] \\
 \le \mathbb{E} \left[ \phi \left( \sum_{j=1}^{N_1} Y_{j,1}, \ldots, \sum_{j=1}^{N_m} Y_{j,m} \right) \Big| \left( N_1, \ldots, N_m \right) = \left( n_1, \ldots, n_m \right) \right],
\end{multline}
for any supermodular function $\phi$, thus
\begin{multline}
\mathbb{E} \left[ \phi\left( \sum_{j=1}^{N_1} X_{j,1}, \ldots, \sum_{j=1}^{N_m} X_{j,m}  \right)  \right] 
 \le \mathbb{E} \left[ \phi \left( \sum_{j=1}^{N_1} Y_{j,1}, \ldots, \sum_{j=1}^{N_m} Y_{j,m} \right)\right].
\end{multline}
Considering the first and second results, the third result is obvious.

%\end{proof}

\section{Proof of Theorem \ref{theorem_performance_bound}}
\label{proof-of-theorem-performance-bound}

%\begin{proof}

The idea of the proof is to find a likelihood ratio martingale of the process $A(d,t)-S(0,t)$ for delay and $A(t)-S(t)$ for backlog, change the measure, by the likelihood ratio identity, we obtain a likelihood ratio representation of the probability in the new measure.

We provide the full proof of the delay tail probability, 
\begin{equation}
\mathbb{P} ( D > d ) = \mathbb{P}\left \{ \sup_{t\ge{d}}\left( A(d,t) - S(0,t) \right) > 0 \right\}.
\end{equation}

Recall the definition of the Markov additive process
$
\mathbb{E}[ f(S({t+s})-S(t))g(J_{t+s})|\mathscr{F}_t ] = \mathbb{E}_{J_t,0}[ f(S(s))g(J_s) ],
$
which indicates that the time shift of the process is only dependent on the state at the shift epoch, specifically, for $\theta>0$, the likelihood ratio martingale of the arrival process $A(d,t)$ is expressed as
\begin{equation}
L^{A}_{t-d} \circ \theta_{d} = \frac{h_{J_t}^{A}{(\theta)}}{h_{J_d}^{A}{(\theta)}} e^{\theta A(d,t) - (t-d)\kappa^{A}(\theta)},
\end{equation}
where $\theta_d$ is the shift operator; and the likelihood ratio martingale of the service process $-S(t)$ is 
\begin{equation}
L^{-S}_{t} = \frac{h_{J_t}^{-S}{(\theta)}}{h_{J_0}^{-S}{(\theta)}} e^{-\theta S(0,t) - t\kappa^{-S}(\theta)}.
\end{equation}
Assume the arrival process and the service process are independent, then the product of the martingales 
\begin{equation}
L^{A-S}_{d,t} = \left( L^{A}_{t-d} \circ \theta_{d} \right) \cdot L^{-S}_{t}
\end{equation}
is also a martingale \cite{cherny2006some}, and
\begin{equation}
\mathbb{E} \left[ L^{A-S}_{d,t} \right] = \mathbb{E}\left[ L^{A}_{t-d} \circ \theta_{d} \right] \cdot \mathbb{E}\left[ L^{-S}_{t} \right] = 1.
\end{equation}

Define the stopping time $\tau(d) = \inf\{t\ge{d}: A(d,t) - S(0,t) > 0 \}$.
Let $H(\theta) = \frac{h_{J_d}^{A}{(\theta)}}{h_{J_{\tau(d)}}^{A}{(\theta)}} \frac{h_{J_0}^{-S}{(\theta)}}{h_{J_{\tau(d)}}^{-S}{(\theta)}}$.
The delay tail probability, conditional on the initial state ${\bm{J}}_{d,0}=\bm{i}$, i.e., $\left\{ J^{A}_d, J^{-S}_0 \right\}=\left\{ i^A, i^{-S} \right\}$, is expressed as
\begin{IEEEeqnarray}{rCl}
\IEEEeqnarraymulticol{3}{l}{
\mathbb{P}_{\bm i}(D >d)  = \mathbb{P}_{\bm i}(\tau(d)<\infty) }\\
&=& \widetilde{\mathbb{E}}_{\bm i} \left[ H(\theta)  e^{ -\theta\xi_{\tau(d)} + (\tau(d)-d)\kappa^{A}(\theta) + \tau(d)\kappa^{-S}(\theta) };\ \tau(d) < \infty \right], \IEEEeqnarraynumspace
\end{IEEEeqnarray}
where $\theta$ is the root to the stability equation
\begin{equation}
\kappa^{A}(\theta) + \kappa^{-S}(\theta) = 0,
\end{equation}
and $\xi_{\tau(d)}>0$ is the overshoot at the hitting time, which is bounded by
\begin{equation}
0< \xi_{\tau(d)} < A(\tau(d)-1,\tau(d)).
\end{equation}

% The upper bound.
The delay upper bound is expressed as
\begin{IEEEeqnarray}{rCl}
\IEEEeqnarraymulticol{3}{l}{
\mathbb{P}_{\bm i}(D >d)  = \mathbb{P}_{\bm i}(\tau(d)<\infty) }\\
&\le& \widetilde{\mathbb{E}}_{\bm i} \left[ H(\theta)  e^{ (\tau(d)-d)\kappa^{A}(\theta) + \tau(d)\kappa^{-S}(\theta) };\ \tau(d) < \infty \right] \\
&\le& H_{+} \cdot h_{J_0}^{-S}(\theta) \widetilde{\mathbb{E}}_{\bm i} \left[ e^{ (\tau(d)-d)\kappa^{A}(\theta) + \tau(d)\kappa^{-S}(\theta) };\ \tau(d) < \infty \right] \\
&=& H_{+} \cdot h_{J_0}^{-S}(\theta) \cdot e^{-d  \kappa^{A}(\theta)},
\end{IEEEeqnarray}
where
$H_{+} = \frac{\max_{j\in{E}}h_{j}^{A}{(\theta)}}{\min_{j\in{E}}h_{j}^{A}{(\theta)}} \cdot \frac{1}{\min_{j\in{E'}}h_{j}^{-S}(\theta)}$.

% The lower bound.
The delay lower bound is expressed as
\begin{IEEEeqnarray}{rCl}
\IEEEeqnarraymulticol{3}{l}{
\mathbb{P}_{\bm i}(D >d)  = \mathbb{P}_{\bm i}(\tau(d)<\infty) }\\
&\ge& \widetilde{\mathbb{E}}_{\bm i} \left[ H(\theta)  e^{ -\theta A(\tau(d)-1,\tau(d)) -d\kappa^{A}(\theta) };\ \tau(d) < \infty \right] \\
&\ge& \widetilde{\mathbb{E}}_{\bm i} \left[ e^{ -\theta A(\tau(d)-1,\tau(d)) };\ \tau(d) < \infty \right] \cdot \hat{H}_{-} \cdot  h_{J_0}^{-S}(\theta) \cdot e^{-d  \kappa^{A}(\theta)}, \IEEEeqnarraynumspace
\end{IEEEeqnarray}
where
$\hat{H}_{-} =  \frac{\min_{j\in{E}}h_{j}^{A}{(\theta)}}{\max_{j\in{E}}h_{j}^{A}{(\theta)}} \cdot \frac{1}{\max_{j\in{E'}}h_{j}^{-S}(\theta)}$ 
and
\begin{IEEEeqnarray}{rCl}
\IEEEeqnarraymulticol{3}{l}{
\widetilde{\mathbb{E}}_{\bm i} \left[ e^{ -\theta A(\tau(d)-1,\tau(d)) };\ \tau(d) < \infty \right] }\\
&=& {\mathbb{E}}_{i^A} \left[ e^{ -\theta A(\tau(d)-1,\tau(d)) } \cdot \left( L^{A}_{\tau(d)-d}\circ \theta_{d}  \right);\ \tau(d) < \infty \right] \\
&=& {\mathbb{E}}_{i^A} \left[ \frac{h_{J_{\tau(d)}}^A}{h_{J_{\tau(d)-1}}^A} \cdot \left( L^{A}_{(\tau(d)-1)-d}\circ \theta_{d}  \right) \cdot  e^{-\kappa^{A}(\theta)};\ \tau(d) < \infty \right] \IEEEeqnarraynumspace \\
&\ge& \frac{\min_{j\in{E}} h_j^{A}(\theta) }{\max_{j\in{E}}  h_j^{A}(\theta)} \cdot e^{-\kappa^{A}(\theta)},
\end{IEEEeqnarray}
where the first equality is due to the assumption of independence between the arrival process and service process, and the last inequality follows that $\left( L^{A}_{(\tau(d)-1)-d}\circ \theta_{d}  \right)$ is a mean-one martingale.

% The proof of backlog begins here.

The backlog tail probability, conditional on the initial state ${\bm{J}}_{0,0}=\bm{i}$, i.e., $\left\{ J^{A}_0, J^{-S}_0 \right\}=\left\{ i^A, i^{-S} \right\}$, is expressed as
$\mathbb{P}_{\bm i} ( B > b ) = \mathbb{P}_{\bm i} \{ \sup_{t\ge{0}} $ $(A(t)-S(t)) \ge b \}$.
Consider the likelihood ratio process
$
L_{t}^{A-S} = L^{A}_{t} \cdot L_{t}^{-S},
$
which is a mean-one martingale.
Define the stopping time $\tau(b) = \inf\{t\ge{d}: A(t) - S(t) > b \}$ and note the overshoot at the hitting time
$
b < \xi_{\tau{(b)}} < b + A(\tau(b)-1,\tau(b)).
$
Following the same process of change of measure in the proof of delay, 
the proof of the backlog results follows analogically.

%\end{proof}

\section{Proof of Theorem \ref{theorem-timely-performance}}
\label{proof-of-theorem-timely-performance}

%\begin{proof}

The proof follows two phases, in the first phase, we provide the condition that the inequalities hold, in the second phase, we provide the setting of $y$ that satisfies the condition.

First, we prove that the inequalities hold under a condition on $\kappa^{A}(\theta) + \kappa^{-S}(\theta)$.
Let $H(\theta) = \frac{h_{J_d}^{A}{(\theta)}}{h_{J_{\tau(d)}}^{A}{(\theta)}} \frac{h_{J_0}^{-S}{(\theta)}}{h_{J_{\tau(d)}}^{-S}{(\theta)}}$.
%Consider first the case $y< y_\gamma$. Then, $\kappa^{A}(\theta) + \kappa^{-S}(\theta)>0$.
For any $\theta>0$, $\kappa^{A}(\theta) + \kappa^{-S}(\theta)>0$.
\begin{IEEEeqnarray}{rCl}
\IEEEeqnarraymulticol{3}{l}{
P_{\bm i}(D(t) >d; t\le yd) }\\
&=& \widetilde{\mathbb{E}}_{\bm i} \left[ H(\theta)  e^{ -\theta\xi_{\tau(d)} + (\tau(d)-d)\kappa^{A}(\theta) + \tau(d)\kappa^{-S}(\theta) };\ \tau(d)\le yd \right] \IEEEeqnarraynumspace\\
&\le& \widetilde{\mathbb{E}}_{\bm i} \left[ H(\theta)  e^{ (\tau(d)-d)\kappa^{A}(\theta) + \tau(d)\kappa^{-S}(\theta) };\ \tau(d)\le yd \right] \\
&\le& H_{+} h_{J_0}^{-S}(\theta) \widetilde{\mathbb{E}}_{\bm i} \left[ e^{ (\tau(d)-d)\kappa^{A}(\theta) + \tau(d)\kappa^{-S}(\theta) };\ \tau(d)\le yd \right] \\
&\le& H_{+} h_{J_0}^{-S}(\theta) e^{-d (-y\kappa^{-S}(\theta) -(y-1)\kappa^{A}(\theta))}.
\end{IEEEeqnarray}
%Similarly, if $y > y_\gamma$, then $\kappa^{A}(\theta) + \kappa^{-S}(\theta) < 0$.
For any $\theta>0$, $\kappa^{A}(\theta) + \kappa^{-S}(\theta)<0$.
\begin{IEEEeqnarray}{rCl}
\IEEEeqnarraymulticol{3}{l}{
P_{\bm i}(D>d) - P_{\bm i}(D(t) >d; t\le yd) }\\
&=& \widetilde{\mathbb{E}}_{\bm i} \left[ H(\theta)  e^{ -\theta\xi_{\tau(d)} + (\tau(d)-d)\kappa^{A}(\theta) + \tau(d)\kappa^{-S}(\theta) };\ yd < \tau(d) <\infty \right] \IEEEeqnarraynumspace\\
%&\le& \widetilde{\mathbb{E}}_i \left[ H(\theta)  e^{ (\tau(d)-d)\kappa^{A}(\theta) + \tau(d)\kappa^{-S}(\theta) };\ yd < \tau(d) <\infty \right] \\
&\le& H_{+} h_{J_0}^{-S}(\theta) \widetilde{\mathbb{E}}_{\bm i} \left[ e^{ (\tau(d)-d)\kappa^{A}(\theta) + \tau(d)\kappa^{-S}(\theta) };\ yd < \tau(d) <\infty \right] \\
&\le& H_{+} h_{J_0}^{-S}(\theta) e^{-d (-y\kappa^{-S}(\theta) -(y-1)\kappa^{A}(\theta))}.
\end{IEEEeqnarray}

Second, we link $y$ to the $\kappa^{A}(\theta) + \kappa^{-S}(\theta)$ condition. Denote
\begin{equation}
\theta_y = -y\kappa^{-S}(\theta) -(y-1)\kappa^{A}(\theta),
\end{equation}
which is a concave function of $\theta$. Thus, for any fixed $y>1$, the optimal $\theta^\ast$ to maximize $\theta_y$ is the root to the derivative equation $\dot{\theta}_y=0$, i.e.,
\begin{equation}
\theta^\ast = \left\{ \theta: y{\dot\kappa}^{-S}(\theta)=-(y-1){\dot\kappa}^{A}(\theta) \right\}.
\end{equation}
Consider the equation
\begin{equation}
\frac{1}{y} = 1 + \frac{{\dot\kappa}^{-S}(\theta)}{{\dot\kappa}^{A}(\theta)},
\end{equation}
since
\begin{equation}
%\left[ \frac{{\dot\kappa}^{-S}(\theta)}{{\dot\kappa}^{A}(\theta)} \right]^{\prime} 
\frac{\partial}{\partial\theta} \left( \frac{{\dot\kappa}^{-S}(\theta)}{{\dot\kappa}^{A}(\theta)} \right)
%\left. \frac{{\dot\kappa}^{-S}(\theta)}{{\dot\kappa}^{A}(\theta)} \right|^{\prime}_{\theta} 
= \frac{ {\ddot\kappa}^{-S}(\theta) \cdot {\dot\kappa}^{A}(\theta) - {\ddot\kappa}^{A}(\theta) \cdot {\dot\kappa}^{-S}(\theta) }{ \left[ {\dot\kappa}^{A}(\theta) \right]^{2} } \ge 0,
\end{equation}
which indicates that the decrease of $y$ maps to the increase of $\theta$, it follows, if $y<\frac{{\dot{\kappa}}^{A}(\gamma)}{{\dot\kappa}^{A}(\gamma) + {\dot\kappa}^{-S}(\gamma)}$, then $\theta>\gamma$ and $\kappa^{A}(\theta) + \kappa^{-S}(\theta)>0$, vice versa.

The proof of backlog follows analogically. Specifically, for the second phase,
denote
\begin{equation}
\theta_y = \theta - y\left( \kappa^{A}(\theta) + \kappa^{-S}(\theta) \right),
\end{equation}
which is a concave function of $\theta$. Thus, for any fixed $y>0$, the optimal $\theta^\ast$ to maximize $\theta_y$ is the root to the derivative equation $\dot{\theta}_y=0$, i.e.,
\begin{equation}
\theta^\ast = \left\{ \theta: y\left( \dot\kappa^{A}(\theta) + \dot\kappa^{-S}(\theta) \right) =1 \right\}.
\end{equation}
Consider the equation
\begin{equation}
\frac{1}{y} = {{\dot\kappa}^{A}(\theta) + {\dot\kappa}^{-S}(\theta)},
\end{equation}
since
\begin{equation}
\frac{\partial}{\partial\theta} \left({\dot\kappa}^{A}(\theta) + {\dot\kappa}^{-S}(\theta) \right) =  {\ddot\kappa}^{A}(\theta) + {\ddot\kappa}^{-S}(\theta) \ge 0,
\end{equation}
which indicates that the decrease of $y$ maps to the increase of $\theta$, it follows, if $y<\frac{1}{{\dot\kappa}^{A}(\gamma) + {\dot\kappa}^{-S}(\gamma)}$, then $\theta>\gamma$ and $\kappa^{A}(\theta) + \kappa^{-S}(\theta)>0$, vice versa.

%\end{proof}

\section{Proof of Theorem \ref{theorem_no_granger_causality_biset}}
\label{proof-of-theorem-no-granger-causality-biset}

%\begin{proof}

Since
\begin{IEEEeqnarray}{rCl}
\IEEEeqnarraymulticol{3}{l}{
\mathbb{P} \left( \underline{\bm{X}}_{j+1} \le \bm{x} | {\bm{X}}_j \right)  
}\\
=  {C_{j,j+1}}_{C_{\underline{\bm{X}}_j\overline{\bm{X}}_j}, } \left( F_{\underline{\bm{X}}_j}(\underline{\bm{X}}_j), F_{\overline{\bm{X}}_j}(\overline{\bm{X}}_j), F_{\underline{\bm{X}}_{j+1}}(\bm{x}), \bm{1}_{F_{\overline{\bm{X}}_{j+1}}} \right), \IEEEeqnarraynumspace%\nonumber
\end{IEEEeqnarray}
\begin{IEEEeqnarray}{rCl}
\IEEEeqnarraymulticol{3}{l}{
\mathbb{P} \left( \underline{\bm{X}}_{j+1} \le \bm{x} | {\underline{\bm{X}}}_j \right) 
} \\
=  {C_{j,j+1}}_{ C_{\underline{\bm{X}}_j}, } \left(F_{\underline{\bm{X}}_j}(\underline{\bm{X}}_j), \bm{1}_{F_{\overline{\bm{X}}_j}}, F_{\underline{\bm{X}}_{j+1}}(\bm{x}), \bm{1}_{F_{\overline{\bm{X}}_{j+1}}} \right), \IEEEeqnarraynumspace
\end{IEEEeqnarray}
the no-Granger causality holds, if and only if
\begin{multline}
{C_{j,j+1}}_{C_{\underline{\bm{X}}_j\overline{\bm{X}}_j}, } \left(\bm{u}_{\underline{\bm{X}}_j}, \bm{u}_{\overline{\bm{X}}_j}, \bm{u}_{\underline{\bm{X}}_{j+1}}, \bm{1}_{\bm{u}_{\overline{\bm{X}}_{j+1}}} \right) \\
=
{C_{j,j+1}}_{ C_{\underline{\bm{X}}_j}, } \left(\bm{u}_{\underline{\bm{X}}_j}, \bm{1}_{\bm{u}_{\overline{\bm{X}}_j}}, \bm{u}_{\underline{\bm{X}}_{j+1}}, \bm{1}_{\bm{u}_{\overline{\bm{X}}_{j+1}}} \right). %\nonumber
\end{multline}
By integrating, we obtain
\begin{IEEEeqnarray}{rCl}
\IEEEeqnarraymulticol{3}{l}{
C_{j,j+1} \left(\bm{u}_{\underline{\bm{X}}_j}, \bm{u}_{\overline{\bm{X}}_j}, \bm{u}_{\underline{\bm{X}}_{j+1}}, \bm{1}_{\bm{u}_{\overline{\bm{X}}_{j+1}}} \right)
}\IEEEeqnarraynumspace\\
&=& \int_{\bm{0}}^{\bm{u}_{\underline{\bm{X}}_j}} {C_{\overline{\bm{X}}_j\underline{\bm{X}}_j}}_{, C_{\underline{\bm{X}}_j}} \left( {\bm{u}_{\overline{\bm{X}}_j}},  {\bm{u}_{\underline{\bm{X}}}} \right) \nonumber\\
&&
\cdot  {C_{\underline{\bm{X}}_j\underline{\bm{X}}_{j+1}}}_{ C_{\underline{\bm{X}}_j}, } \left(  {\bm{u}_{\underline{\bm{X}}}}, {\bm{u}_{\underline{\bm{X}}_{j+1}}} \right) C_{\underline{\bm{X}}_j}\qty(d{{\bm{u}_{\underline{\bm{X}}}}})  \IEEEeqnarraynumspace\\
&=& C_{\overline{\bm{X}}_j\underline{\bm{X}}_j} \stackrel{C_{\underline{\bm{X}}_j} (\bm{u}_{\underline{\bm{X}}_j})}{\star} C_{\underline{\bm{X}}_j\underline{\bm{X}}_{j+1}} \left(\bm{u}_{\overline{\bm{X}}_j}, \bm{u}_{\underline{\bm{X}}_{j+1}} \right). %\nonumber
\end{IEEEeqnarray}
The other result follows analogically.

%\end{proof}

\section{Proof of Theorem \ref{theorem_no_granger_causality}}
\label{proof-of-theorem-no-granger-causality-stronger}

%\begin{proof}

The proof follows analogically from Theorem \ref{theorem_no_granger_causality_biset}.
Since
\begin{IEEEeqnarray}{rCl}
\IEEEeqnarraymulticol{3}{l}{
\mathbb{P} \left( X^{i}_{t_{j+1}} \le x_{j+1}^i | {\bm{X}^1,\ldots,\bm{X}^n} \right)  
}\\ 
= \frac{\frac{\partial^n}{\partial{\bm{x}_j}}C_{j,j+1}(x_j^{1},\ldots,x_j^n,1,\ldots,x_{j+1}^i,\ldots,1) } {\frac{\partial^n}{\partial{\bm{x}_j}}C_{j,j+1}(x_j^1,\ldots,x_j^n,1,\ldots,1) }, \IEEEeqnarraynumspace
\end{IEEEeqnarray}
and 
\begin{IEEEeqnarray}{rCl}
\IEEEeqnarraymulticol{3}{l}{
\mathbb{P} \left( X^{i}_{t_{j+1}} \le x_{j+1}^i | {\bm{X}_j^i} \right) 
} \\ 
= \frac{\partial}{\partial{{x}_j^i}}C_{j,j+1}^{i}(1,\ldots,x_j^i,\ldots,1,1,\ldots,x_{j+1}^i,\ldots,1), \IEEEeqnarraynumspace
\end{IEEEeqnarray}
the no-Granger causality holds if and only if
\begin{IEEEeqnarray}{rCl}
\IEEEeqnarraymulticol{3}{l}{
{\frac{\partial^n}{\partial{\bm{x}_j}}C_{j,j+1}(x_j^{1},\ldots,x_j^n,1,\ldots,x_{j+1}^i,\ldots,1) }
} \\
&=& {\frac{\partial^n}{\partial{\bm{x}_j}}C_{j,j+1}(x_j^1,\ldots,x_j^n,1,\ldots,1) } \nonumber\\ 
&& \times { \frac{\partial}{\partial{{x}_j^i}}C_{j,j+1}^{i}(1,\ldots,x_j^i,\ldots,1,1,\ldots,x_{j+1}^i,\ldots,1) }. \IEEEeqnarraynumspace
\end{IEEEeqnarray}
By integrating, we obtain
\begin{IEEEeqnarray}{rCl}
\IEEEeqnarraymulticol{3}{l}{
C_{j,j+1}(x_j^{1},\ldots,x_j^n,1,\ldots,x_{j+1}^i,\ldots,1)
} \\
&=& \int_{0}^{x_j^i}\frac{\partial}{\partial{x^i}}C_j(x^1_j,\ldots,x^i,\ldots,x_j^n) \frac{\partial}{\partial{x^i}}C_{j,j+1}^{i}(x^i,x_{j+1}^i) d x^i \IEEEeqnarraynumspace
\\
&=& \int_{0}^{x_j^i}\frac{\partial}{\partial{x^i}}C_j^{{,i}}(x_j^1,\ldots,x_j^{i-1},x_j^{i+1},\ldots,x_j^n,x_j^i) \nonumber\\ 
&& \times \frac{\partial}{\partial{x^i}}C_{j,j+1}^{i}(x^i,x_{j+1}^i) d x^i \\
&=& C_j^{,i} \star C_{j,j+1}^i(x_j^1,\ldots,x_j^{i-1},x_j^{i+1},\ldots,x_j^n,x_j^i,x_{j+1}^i).
\end{IEEEeqnarray}
This completes the proof.

%\end{proof}

\end{document}